\documentclass[10pt,a4paper]{article}

\usepackage{setspace}
\setlength{\parskip}{8pt}

\newif\ifsiamart
\makeatletter%
\@ifclassloaded{siamart171218}%
  {\siamarttrue}%
  {\siamartfalse}%
\makeatother%

\usepackage[margin=1in]{geometry}
\usepackage{titlesec}
\usepackage{color}
\usepackage{graphicx}
\usepackage{float}
\usepackage{amsmath}
\usepackage{amssymb}
\usepackage{amsfonts}
\usepackage{mathtools}
\usepackage{dsfont}
\usepackage{todonotes}
\usepackage{enumitem}
\usepackage[
    style=trad-abbrv,
    maxcitenames=2,
    doi=false,
    url=false,
    isbn=false,
    backend=biber]{biblatex}
\usepackage{csquotes}
\usepackage[english]{babel}
\usepackage{bold-extra}

\ifsiamart\else
\usepackage{amsthm}
\usepackage{hyperref}
\hypersetup{%
    linktocpage=true,
    colorlinks=true,
    citecolor=magenta,
    linkcolor=blue,
}
\usepackage[nameinlink,capitalise]{cleveref}
\newcommand{\email}[1]{\href{mailto:#1}{#1}}
\fi

\usepackage{setspace}
\usepackage{mathrsfs}

\DeclarePairedDelimiter\abs{\lvert}{\rvert}
\DeclarePairedDelimiter\ip{\langle}{\rangle}
\DeclarePairedDelimiter\norm{\lVert}{\rVert}

\DeclareMathOperator{\sgn}{sgn}

\makeatletter
\renewcommand\section{\@startsection {section}{1}{\z@}%
                               {-3.5ex \@plus -1ex \@minus -.2ex}%
                               {2.3ex \@plus.2ex}%
                               {\normalfont\large\bfseries}}
\renewcommand\subsection{\@startsection{subsection}{2}{\z@}%
                                 {-3.25ex\@plus -1ex \@minus -.2ex}%
                                 {1.5ex \@plus .2ex}%
                                 {\normalfont\bfseries}}
\makeatother

\ifsiamart
\newtheorem{remark}[theorem]{Remark}
\newtheorem{assumption}{Assumption}
\newtheorem{example}[theorem]{Example}
\else
\newtheorem{theorem}{Theorem}
\newtheorem{lemma}[theorem]{Lemma}
\newtheorem{corollary}[theorem]{Corollary}
\newtheorem{proposition}[theorem]{Proposition}
\newtheorem{assumption}[theorem]{Assumption}

\theoremstyle{remark}

\theoremstyle{definition}

\newenvironment{remark}
  {\pushQED{\qed}\remarkx}
  {\popQED\endremarkx}

\newenvironment{example}
  {\pushQED{\qed}\examplex}
  {\popQED\endexamplex}
\fi

\ifsiamart\else
\Crefname{figure}{Figure}{Figures}
\Crefname{examplex}{Example}{Examples}
\Crefname{example}{Example}{Examples}
\crefname{lemma}{Lemma}{Lemmas}
\crefname{remark}{Remark}{Remarks}
\crefname{assumption}{Assumption}{Assumptions}
\crefname{proposition}{Proposition}{Propositions}
\crefname{section}{Section}{Sections}
\crefname{subsection}{Subsection}{Subsections}
\crefname{equation}{}{}
\Crefname{equation}{Equation}{Equations}
\fi

\ifsiamart\else

\newcommand{\orcid}[1]{\href{https://orcid.org/#1}{\includegraphics[width=.4cm]{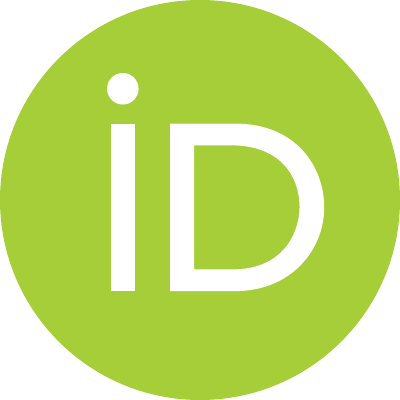}}}
\fi

\DeclareMathOperator{\Span}{Span}
\DeclareMathOperator{\supp}{supp}
\DeclareMathOperator{\closure}{cl}
\DeclareMathOperator{\range}{Ran}
\DeclareMathOperator{\kernel}{Ker}
\DeclareMathOperator{\diag}{diag}
\renewcommand{\d}{\mathrm d}
\newcommand{\e}{\mathrm e}
\newcommand{\dummy}{\,\cdot\,}
\newcommand{\placeholder}{\,\cdot\,}
\newcommand{\normal}{\mathcal N}
\newcommand{\vect}[1]{\boldsymbol{\mathbf #1}}
\newcommand{\mat}{\mathit}
\newcommand{\real}{\mathbb R}
\newcommand{\integer}{\mathbb Z}
\newcommand{\nat}{\mathbb N}
\newcommand{\torus}{\mathbb T}
\newcommand{\domain}{\mathbb D}
\newcommand{\grad}{\nabla}
\newcommand{\laplacian}{\Delta}
\newcommand{\expect}{\mathbb E}

\newcommand{\id}{\mathcal I}
\newcommand{\matid}{\mathrm{id}}
\renewcommand{\t}{\mathsf T}

\newcommand{\Z}[1]{Z[#1]}
\newcommand{\ZN}[1]{Z_N[#1]}
\newcommand{\step}{\delta}

\renewcommand{\leq}{\leqslant}
\renewcommand{\geq}{\geqslant}
\newcommand{\smoothcompact}{C^{\infty}_{\rm c}}

\DeclareFieldFormat{volume}{volume \textbf{#1}}
\DeclareFieldFormat[article]{volume}{\textbf{#1}}

\ifsiamart\else
\onehalfspacing
\fi

\addbibresource{main.bib}

\title{Optimal importance sampling for overdamped Langevin dynamics}
\author{%
    M. Chak\thanks{%
        LJLL, Sorbonne Université (\email{martin.chak@sorbonne-universite.fr})
        \orcid{0000-0001-5012-0172}
    }%
    \and T. Lelièvre\thanks{%
        CERMICS, \'Ecole des Ponts, France \& MATHERIALS project-team, Inria Paris (\email{tony.lelievre@enpc.fr})
        \orcid{0000-0002-3412-113X}
    }%
    \and G. Stoltz\thanks{%
        CERMICS, \'Ecole des Ponts, France \& MATHERIALS project-team, Inria Paris (\email{gabriel.stoltz@enpc.fr})
        \orcid{0000-0002-2797-5938}
    }%
    \and U. Vaes\thanks{%
        MATHERIALS project-team, Inria Paris \& CERMICS, \'Ecole des Ponts (\email{urbain.vaes@inria.fr})
        \orcid{0000-0002-7629-7184}
    }%
}
\date{\today}

\begin{document}
\maketitle

\begin{abstract}
    Calculating averages with respect to multimodal probability distributions is often necessary in applications.
    Markov chain Monte Carlo (MCMC) methods to this end,
    which are based on time averages along a realization of a Markov process ergodic with respect to the target probability distribution,
    are usually plagued by a large variance due to the metastability of the process.
    In this work,
    we mathematically analyze an importance sampling approach for MCMC methods that rely on the overdamped Langevin dynamics.
    Specifically, we study an estimator based on an ergodic average along a realization of an overdamped Langevin process for a modified potential.
    The estimator we consider incorporates a reweighting term in order to rectify the bias that would otherwise be introduced by this modification of the potential.
    We obtain an explicit expression in dimension 1 for the biasing potential that minimizes the asymptotic variance of the estimator for a given observable,
    and propose a general numerical approach for approximating the optimal potential in the multi-dimensional setting.
    We also investigate an alternative approach where,
    instead of the asymptotic variance for a given observable,
    a weighted average of the asymptotic variances corresponding to a class of observables is minimized.
    Finally, we demonstrate the capabilities of the proposed method by means of numerical experiments.
\end{abstract}

\section{Introduction}%
\label{sec:introduction}

\subsection{Context}
In many applications ranging from Bayesian inference to statistical physics and computational biology,
it is often necessary to calculate expectations with respect to high-dimensional probability distributions of the form
\begin{equation}
    \label{eq:target_measure}
    \mu = \frac{\e^{-V}}{Z}, \qquad Z = \int_{\domain^d} \e^{-V},
\end{equation}
where $\domain \in \{\real, \torus\}$,
with $\torus := \real / 2\pi \integer$ the one-dimensional torus,
and $V\colon \domain^d \rightarrow \real$ is a potential energy function (confining if $\domain = \real$ and periodic if $\domain = \torus$)
such that $\e^{-V}$ is Lebesgue integrable.
In the context of Bayesian inference,
the distribution~$\mu$ usually describes likelihoods of the possible values of an unknown parameter given some observed data~\cite{MR3363508,MR2652785},
while in statistical physics,
the distribution $\mu$ assigns probabilities to the possible configurations of a molecular system. In the latter setting,
averages with respect to~$\mu$ give access to macroscopic properties of the system,
such as the heat capacity or equations of state relating pressure, density and temperature~\cite{evans-moriss,Tuckerman,MR2681239,MR3362507,AT17}.

The first systematic approach to sampling probability distributions originates from the~1950s with the seminal work of \citeauthor{metropolis1953equation}~\cite{metropolis1953equation}.
In 1970, Hastings generalized this approach and proposed a sampling method~\cite{MR3363437}
which was later recognized as a particular case of what is now known as a Markov chain Monte Carlo (MCMC) method.
The MCMC approach to sampling is based on the use of a Markov process that admits the target probability distribution as unique invariant measure.
A simple yet widely used Markov process that is ergodic with respect to~$\mu$ under appropriate conditions on the potential~$V$
is the overdamped Langevin dynamics,
\begin{equation}
    \label{eq:overdamped}
    \d Y_t = -\nabla V(Y_t) \, \d t + \sqrt{2} \, \d W_t,
\end{equation}
where $W_t$ denotes a standard $d$-dimensional Wiener process.
Under appropriate assumptions on the potential~$V$,
the average with respect to~$\mu$ of an observable $f \in L^1(\mu)$ can be approximated by a time average along a realization of the solution to this equation:
\begin{equation}\label{eq:basicconv}
    \mu^T(f):=\frac{1}{T} \int_{0}^{T} f(Y_t) \, \d t \xrightarrow[T \to \infty]{\rm a.s.} \int_{\domain^d} f \, \d \mu  =: \mu(f)  =: I,
\end{equation}
see e.g.~\cite[Theorem 5.1]{MR112175},~\cite{MR885138},
 ~\cite{MR2248986} and references therein, ~\cite[Chapter~9 and proof of Theorem~1.6.2]{MR3616034} and~\cite[Proposition~2]{MR3905536}.
In practice, it is necessary to discretize the dynamics~\eqref{eq:overdamped},
and the resulting discrete-time Markov process is generally ergodic with respect to not $\mu$
but a probability measure $\mu_{\Delta t}$ differing from~$\mu$ at order $\Delta t^{\alpha}$,
for some exponent~$\alpha$ larger than or equal to the weak order of convergence of the scheme.
The bias introduced by the discretization can usually be bounded from above as a function of the time step;
such estimates were first obtained by Talay and Tubaro~\cite{TT90} for general SDEs.
They were later made precise for implicit schemes for Langevin and overdamped Langevin dynamics in~\cite{MR1931266},
and then refined in works such as \cite{MR2669996,DF12,ACZ15,LMS16}.
Alternatively, it may be possible to consider the numerical scheme as a proposal to be accepted or rejected in a Metropolis--Hastings scheme,
so that the resulting discrete-time process is also ergodic with respect to~$\mu$.
This is the Metropolis-adjusted Langevin algorithm (MALA)~\cite{rossky1978brownian,MR1440273,MR1625691}.

In this paper,
we focus on the continuous-time dynamics~\eqref{eq:overdamped}
and show that modifying the potential~$V$, in combination with importance sampling,
can be used for variance reduction.
Importance sampling is widely used to make the sampling of high dimensional probability measures easier; see for instance~\cite{APSS17} for a review. The idea of using importance sampling in the context of MCMC methods was already suggested in Hastings' 1970 paper,
see~\cite[Section 5]{MR3363437}.
If $(X_t)_{t \geq 0}$ is a Markov process that is ergodic with respect to a probability measure
\begin{equation}
    \label{eq:mu_u}
    \mu_{U} = \frac{\e^{-V - U}}{\Z{U}},
    \qquad \Z{U} = \int_{\domain^d} \e^{-V-U},
\end{equation}
where $U\colon \domain^d \to \real$ is a smooth function such that $\e^{-V-U}$ is Lebesgue integrable over $\domain^d$,
then~$\mu(f)$ may be approximated by
\begin{equation}
    \label{eq:estimator}
    \mu^T_U(f) = \frac{\displaystyle \int_0^T (f \e^U)(X_t) \, \d t}{\displaystyle \int_0^T(\e^U)(X_t) \, \d t}.
\end{equation}
Like $\mu^T(f)$, this estimator converges to~$I$ almost surely in the limit as $T \to \infty$,
and it does not require the knowledge of the normalization constants $Z$ and $\Z{U}$.
The main objective of this work is to study the properties of~$\mu^T_U(f)$
when~$(X_t)_{t \geq 0}$ is the solution to the overdamped Langevin dynamics with the potential~$V+U$:
\begin{equation}
    \label{eq:mix}
    \d X_t = -\nabla V(X_t) \, \d t -\nabla U(X_t) \d t + \sqrt{2} \, \d W_t.
\end{equation}
In particular, we study whether it is possible to find a biasing potential~$U$ such that the asymptotic variance of $\mu_T^U(f)$,
which we define more precisely in~\cref{lemma:asymptotic_variance},
is minimized for either a single observable~(\cref{sec:minimizing_the_asymptotic_variance_for_one_observable}) or a class of observables~(\cref{sec:minimizing_the_expected_asymptotic_variance}). Let us also mention here the recent work~\cite{CHR22} where optimal importance sampling is performed over a class of distributions.

Before considering the MCMC estimator~\eqref{eq:estimator},
it is instructive to present background material on importance sampling in the~independent and identically distributed (i.i.d.)\ setting.
This is the aim of~\cref{sub:iid}.
In all the theoretical results presented in this work,
we assume that the following assumptions on~$V$ and~$f$ are satisfied,
even when this is not explicitly mentioned.
\begin{assumption}
    \label{assumption:assumptions_V}
    $~$
    \begin{itemize}
        \item
            The potential $V$ is smooth over $\domain^d$ (in particular $\supp(\e^{-V}) = \domain^d$).

        \item
            The function $\e^{-V}$ is Lebesgue integrable over $\domain^d$.

        \item
            Any observable $f$ considered (just one in \cref{sub:iid,sec:minimizing_the_asymptotic_variance_for_one_observable} and set of them in~\cref{sec:minimizing_the_expected_asymptotic_variance}) is smooth,
            integrable with respect to the probability measure~$\mu \propto \e^{-V}$,
            and not~$\mu$-almost everywhere constant.
    \end{itemize}
\end{assumption}


\subsection{Our contributions}
The contributions of this paper are the following:
\begin{itemize}

    \item
        In the one-dimensional setting ($d=1$) either with $\domain = \torus$ or $\domain = \real$,
        and for a given observable~$f$,
        we obtain an explicit expression for the biasing potential~$U$ in~\eqref{eq:estimator}--\eqref{eq:mix} that is optimal in terms of asymptotic variance.
        We also prove that when $\domain = \real$,
        the asymptotic variance~$\sigma^2_f[U]$ of $\mu_U^T(f)$, viewed as a functional of $U$,
        is convex.

    \item
        In the general multi-dimensional setting,
        we obtain an expression for the $L^2(\mu)$ functional derivative of $\sigma^2_f[U]$ with respect to~$U$,
        and we propose a gradient descent approach for finding a minimizer of~$\sigma^2_f[U]$.
        We also prove that any minimizer of $\sigma^2_f[U]$ is necessarily singular when $\domain = \torus$.

    \item
        We propose a method for minimizing the asymptotic variance over a class of observables.
        More precisely, we present an approach for minimizing the average asymptotic variance
        when a simple Gaussian probability distribution is placed on the observable.
        We demonstrate through theoretical results and numerical experiments that this approach usually leads to a smooth optimizer,
        and may thus be more suitable for applications.

    \item
        We present examples and numerical experiments
        illustrating the properties of the optimal biasing potential and the performance of the method,
        both in the one-dimensional case and the multi-dimensional setting.
\end{itemize}

\paragraph{Plan of the paper.}
The remainder of the paper is organized as follows.
We begin in~\cref{sub:iid} by presenting background material on importance sampling in the i.i.d.\ setting.
In~\cref{sec:minimizing_the_asymptotic_variance_for_one_observable},
we investigate the problem of minimizing the asymptotic variance for a given observable,
first in the one-dimensional case and then in the multi-dimensional setting.
In~\cref{sec:minimizing_the_expected_asymptotic_variance},
we generalize the approach to the problem of minimizing the asymptotic variance over a class of observables.
Examples and numerical experiments are presented in \cref{sec:examples_and_numerical_experiments}.
\Cref{sec:conclusion} is reserved for conclusions and perspectives for future works.
This section is followed by four appendices:
\cref{sec:auxiliary_results,remark:constant_A} contain proofs of auxiliary results,
\cref{sec:second_variation_of_the_asymptotic_variance} presents a derivation of the second variation of the asymptotic variance,
and \cref{sec:numerical_discretization_of_the_Poisson_equation} provides a detailed analysis of the numerical scheme employed for approximating the functional derivative of the asymptotic variance.

\section{Background: the i.i.d.\ setting}
\label{sub:iid}

We recall in this section the expression of the optimal importance distribution~$\mu_{U}$ in the setting where~$I$
is estimated from i.i.d.\ samples from~$\mu_U$,
as opposed to the ergodic approach in~\eqref{eq:estimator}.
Specifically,
we consider the estimator
\begin{equation}
    \label{eq:iid_importance_sampling}
    \mu_U^N(f) :=
    \displaystyle \frac
    {\displaystyle \sum_{n=1}^{N} (f \e^U)(X^{n})}
    {\displaystyle \sum_{n=1}^{N} (\e^U)(X^{n})}
    = I + \displaystyle \frac
    {\displaystyle \sum_{n=1}^{N} \left((f-I) \e^U\right)(X^{n})}
    {\displaystyle \sum_{n=1}^{N} (\e^U)(X^{n})},
\end{equation}
where~$(X^n)_{n \in \nat}$ are i.i.d.\ samples from~$\mu_{U}$.
The right-most expression is not useful in practice because the value of~$I$ is unknown,
but this expression is convenient for the theoretical analysis.
We first comment briefly on the connection between this estimator and the estimator~\eqref{eq:estimator} in \cref{sub:connection},
then obtain an expression for the asymptotic variance of the estimator~\eqref{eq:iid_importance_sampling} in~\cref{sub:iid_asymvar}.
and finally prove bounds on the asymptotic variance in~\cref{sub:iid_optimal}.

\subsection{\texorpdfstring{Connection between the estimators~\eqref{eq:estimator} and~\eqref{eq:iid_importance_sampling}}{Connection between the estimators}}
\label{sub:connection}
The estimators~\eqref{eq:estimator} and~\eqref{eq:iid_importance_sampling} can be viewed as two limiting cases,
corresponding to the limits $\tau \to 0$ and~$\tau \to \infty$ respectively,
of the estimator given in the second row and right-most column of~\cref{table:importance_sampling_estimators}.
The latter estimator is based not on the full solution to~\eqref{eq:mix}
but on discrete periodic evaluations of it with a period~$\tau$.

The analysis presented in this paper can be repeated for the estimators in the middle column of~\cref{table:importance_sampling_estimators},
which can be employed when the normalization constants~$Z$ and~$\Z{U}$ are known.
In this case, different optimal potentials are obtained.
However, since the normalization constants are usually unknown in high-dimensional settings,
we focus in most of this paper on the self-normalized estimators in the right-most column of~\cref{table:importance_sampling_estimators}.

\begin{table}[ht]
    \centering
    \begin{tabular}{|c|c|c|}
         \hline
         \phantom{$\Big($}
         & $Z$ and $\Z{U}$ are \textbf{known} & $Z$ and $\Z{U}$ are \textbf{unknown}
         \\ \hline
         Continuous ($\tau = 0$)
         & \phantom{$\frac{\Big(}{\Big(}$}\hspace{-4mm}
            \(
                \displaystyle
                \frac{1}{T} \left( \frac{\Z{U}}{Z} \right)
                \int_0^T \left(f\e^U\right)(X_t) \, \d t
            \)
         &
            \(
                \displaystyle \frac
                {\int_0^T \left(f\e^U\right)(X_t) \, \d t}
                {\int_0^T(\e^U)(X_t) \, \d t}
            \)
         \\ \hline
         $0 < \tau < \infty$
         &  \phantom{$\frac{\Big(}{\Big(}$}\hspace{-4mm}
            \(
                \displaystyle
                \frac{1}{N} \left( \frac{\Z{U}}{Z} \right)
                \sum_{n=0}^{N-1} \left(f\e^U\right)(X_{n \tau})
            \)
         &
            \(
                \displaystyle \frac
                {\sum_{n=0}^{N-1} \left(f\e^U\right)(X_{n \tau})}
                {\sum_{n=0}^{N-1} (\e^U)(X_{n \tau})}
            \)
         \\ \hline
         i.i.d ($\tau = \infty$)
         & \phantom{$\frac{\Big(}{\Big(}$}\hspace{-4mm}
            \(
                \displaystyle
                \frac{1}{N} \left( \frac{\Z{U}}{Z} \right)
                \sum_{n=1}^{N} \left(f\e^U\right)(X^{n})
            \)
         &
            \(
                \displaystyle \frac
                {\sum_{n=1}^{N} \left(f\e^U\right)(X^{n})}
                {\sum_{n=1}^{N} (\e^U)(X^{n})}
            \)
         \\ \hline
    \end{tabular}
    \caption{%
        Classification of importance sampling estimators according to two criteria:
        the correlation between samples,
        and use of the normalization constants~$Z$ and~$\Z{U}$.
        Here~$(X_t)_{t \in \real_{\geq 0}}$ is a solution to~\eqref{eq:mix}
        and~$(X^n)_{n \in \nat}$ are i.i.d.\ samples from~$\mu_{U}$.
    }
    \label{table:importance_sampling_estimators}
\end{table}

\subsection{Asymptotic variance}
\label{sub:iid_asymvar}

In the whole \cref{sub:iid},
we suppose that~\cref{assumption:assumptions_V} holds,
in particular that the potential $V$ is smooth,
but we do not assume that the function~$U$ is smooth;
we assume only
that~$U\colon \domain^d \to \real \cup \{\infty\}$ is measurable and such that
the following minimal requirements are satisfied:
\begin{itemize}
    \item
        the function $\e^{-V-U}$ is Lebesgue integrable with $\Z{U} > 0$;

    \item
        the estimator~\eqref{eq:iid_importance_sampling} converges to~$I$ almost surely, i.e.\ it holds that
        \begin{equation}
            \label{eq:unbiased}
        \int_{\domain^d} (f-I) \e^U \d \mu_U = 0
        \qquad \Leftrightarrow \qquad
            \int_{\supp(\mu_U)} (f-I) \d \mu = 0.
        \end{equation}
        Here
        \(
            \supp(\mu_U) = \closure\{x \in \domain^d : U(x) + V(x) < \infty\}
        \)
        is the support of the measure~$\mu_U$,
        with $\closure$ the closure.
        In this work,
        we adopt the usual convention in measure theory that $0 \cdot + \infty = + \infty \cdot 0 = 0$.
    \item
         the function $(f-I)\e^U$ is square-integrable with respect to~$\mu_U$,
         so that the central limit theorem can be applied.
\end{itemize}
We denote by $\mathcal U$ the set of functions~$U$ that satisfy these conditions.
The set~$\mathcal U$ depends on~$V$ and~$f$,
but since these are considered to be fixed data of the problem,
we do not explicitly indicate this dependence in the notation.
If $U \in \mathcal U$ and $(X^n)_{n\in \nat}$ are i.i.d. samples with law $\mu_U$,
then it holds by~\eqref{eq:unbiased} and the central limit theorem that
\begin{equation}
    \label{eq:iid_convergence_numerator}
    \frac{1}{\sqrt{N}} \sum_{n=1}^{N} \left((f-I) \e^U\right) \left(X^{n}\right)
    \xrightarrow[N\to \infty]{\rm Law} \mathcal N\left(0, \int_{\domain^d} \abs*{(f-I) \e^U}^2 \, \d \mu_{U}\right),
\end{equation}
where $\mathcal N(m, C)$ denotes the Gaussian distribution with mean~$m$ and (co)variance $C$.
On the other hand,
the law of large numbers gives that
\begin{equation}
    \label{eq:iid_convergence_denominator}
    \frac{1}{N} \sum_{n=1}^{N} \left(\e^U\right)\left(X^{n}\right)
    \xrightarrow[N \to \infty]{\rm a.s.}
    \int_{\domain^d} \e^{U} \d \mu_U = \frac{1}{Z[U]}\int_{\supp(\mu_U)} \e^{-V}
    = \frac{\mathscr Z[U]}{\Z{U}},
\end{equation}
where we introduced
\[
    \mathscr Z[U] := \int_{\supp(\mu_U)} \e^{-V}.
\]
Note that $\mathscr Z[U] \leq Z$ in general,
with equality if and only if $\supp(\mu_U) = \supp(\mu)$.
Combining~\eqref{eq:iid_convergence_numerator} and~\eqref{eq:iid_convergence_denominator} and using Slutsky's lemma,
we conclude that
\begin{align}
    \label{eq:asym_var_iid}
    \sqrt{N} \bigl( \mu^N_U(f) - I\bigr)
    \xrightarrow[N \to \infty]{\rm Law} \normal\bigl(0, s^2_f[U]\bigr),
    \qquad
    s^2_f[U] := \frac{\Z{U}^2}{\mathscr Z[U]^2} \displaystyle\int_{\domain^d} \abs*{(f-I) \e^U}^2 \, \d \mu_{U}.
\end{align}
The variance~$s^2_f[U]$ of the asymptotic normal distribution is the quantity we wish to minimize.
In the next section, we obtain sharp bounds from below on the asymptotic variance~$s^2_f[U]$.

\subsection{Explicit optimal potential}
\label{sub:iid_optimal}
In order to prepare the proof of the main result of this section,
\cref{proposition:background_infimum},
we first give a preparatory lemma.
We introduce the functional $\widehat s^2_f\colon \mathcal U \to \real \cup \{\infty\}$ given by
\begin{equation}
    \label{eq:aux_asymvar_iid}
    \widehat s^2_f[U] = \frac{Z[U]}{Z^2} \int_{\domain^d} \lvert f-I \rvert^2 \e^{U-V}.
\end{equation}
Given the convention that $0 \cdot + \infty = + \infty \cdot 0 = 0$,
the quantity $\widehat s^2_f[U]$ is well defined as an element of~$\real \cup \{\infty\}$.
Note that $\widehat s_f^2[U]$ coincides with the asymptotic variance~$s^2_f[U]$ if and only if~$\supp (\mu_U) = \domain^d$.
\begin{lemma}
    \label{lemma:preparatory_asymvar_iid}
    It holds that
    \begin{equation}
        \label{eq:def_infimum_iid}
        \min_{U \in \mathcal U} \widehat s^2_f[U] = \frac{1}{Z^2} \left( \int_{\domain^d} \abs{f-I} \, \d \mu \right)^2 =: s^*_f.
    \end{equation}
    In addition, the minimum is achieved for
    \begin{equation}
        \label{eq:minimizer_lemma_iid}
        U = U_*^{\rm iid} := - \log \abs{f-I} \in \mathcal U,
    \end{equation}
    with the convention that~$\log(0) = - \infty$.
\end{lemma}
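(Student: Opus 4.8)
The plan is to split the proof into two independent parts: first, the lower bound $\widehat s^2_f[U] \geq s^*_f$ for \emph{every} $U \in \mathcal U$, which is essentially a one-line consequence of the Cauchy--Schwarz inequality; and second, the verification that $U_*^{\rm iid} = -\log\abs{f-I}$ is an element of $\mathcal U$ that realizes equality, which upgrades the infimum to a minimum and yields the value in~\eqref{eq:def_infimum_iid}.

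For the lower bound, fix $U \in \mathcal U$. Membership in $\mathcal U$ gives $Z[U] < \infty$, and the square-integrability of $(f-I)\e^U$ against $\mu_U$ reads, after multiplication by $Z[U]$, as $\int_{\domain^d} \abs{f-I}^2 \e^{U-V} < \infty$; since $\e^{U-V} = +\infty$ on $\{U = +\infty\}$, this forces $f = I$ Lebesgue-almost everywhere there, so that $\int_{\domain^d} \abs{f-I}\,\e^{-V}$ is unchanged upon restriction to $\{U < +\infty\}$. On that set one factors $\abs{f-I}\,\e^{-V} = \left(\abs{f-I}\,\e^{(U-V)/2}\right)\left(\e^{-(U+V)/2}\right)$, both factors being in $L^2$, and Cauchy--Schwarz gives
\[
    \left( \int_{\domain^d} \abs{f-I} \, \e^{-V} \right)^2 \leq \left( \int_{\domain^d} \abs{f-I}^2 \, \e^{U-V} \right)\left( \int_{\domain^d} \e^{-V-U} \right) = Z^2 \, \widehat s^2_f[U] .
\]
Dividing by $Z^2$ and recalling $\d\mu = \e^{-V}\,\d x / Z$ gives $\widehat s^2_f[U] \geq s^*_f$; if one of the two integrals on the right is infinite the inequality is trivial.

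For the attainment, set $g := U_*^{\rm iid} = -\log\abs{f-I}$, so that $\e^{-g} = \abs{f-I}$ and $\e^{g} = \abs{f-I}^{-1}$ with the conventions $\log 0 = -\infty$ and $0\cdot(+\infty)=0$. I would verify the three conditions defining $\mathcal U$. (i)~$\e^{-V-g} = \abs{f-I}\,\e^{-V}$ is Lebesgue integrable since $f - I \in L^1(\mu)$, and $Z[g] = \int_{\domain^d} \abs{f-I}\,\e^{-V} > 0$ because $f$, not being $\mu$-almost everywhere constant, differs from $I$ on a set of positive $\mu$-measure. (ii)~For~\eqref{eq:unbiased}: $V$ being finite, $\supp(\mu_g) = \closure\{f \neq I\}$, and $f - I$ vanishes both on the complement of $\{f \neq I\}$ and on the boundary $\closure\{f \neq I\} \setminus \{f \neq I\} \subseteq \{f = I\}$, whence $\int_{\supp(\mu_g)}(f-I)\,\d\mu = \int_{\domain^d}(f-I)\,\d\mu = \mu(f) - I = 0$. (iii)~$(f-I)\e^{g} = \sgn(f-I)$ is bounded, hence in $L^2(\mu_g)$. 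So $g \in \mathcal U$. Since $\abs{f-I}\,\e^{g} \equiv 1$ is constant, Cauchy--Schwarz holds with equality at $U = g$, so $\widehat s^2_f[g] = s^*_f$ and the minimum is attained; equivalently, plugging $\e^{g-V} = \abs{f-I}^{-1}\e^{-V}$ into~\eqref{eq:aux_asymvar_iid} collapses $\widehat s^2_f[g]$ to $Z[g]^2/Z^2$, which is the value in~\eqref{eq:def_infimum_iid}.

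The computations are short, and the only delicate point is the treatment of the region $\{U = +\infty\}$, i.e.\ the case $\supp(\mu_U) \subsetneq \domain^d$, in the lower bound: there the factorization underlying Cauchy--Schwarz is not a priori meaningful, and one must first use the $L^2(\mu_U)$ constraint built into $\mathcal U$ together with the convention $0\cdot(+\infty)=0$ to deduce $f \equiv I$ on that region, thereby reducing $\int_{\domain^d} \abs{f-I}\,\e^{-V}$ to an integral over $\supp(\mu_U)$. This region is also exactly where $\widehat s^2_f$ can strictly exceed the genuine asymptotic variance $s^2_f$, so the bound must be argued at the level of $\widehat s^2_f$.
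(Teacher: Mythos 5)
Your proof is correct and follows essentially the same route as the paper's: the lower bound is the same Cauchy--Schwarz factorization $\abs{f-I}\e^{-V} = \bigl(\abs{f-I}\e^{(U-V)/2}\bigr)\bigl(\e^{-(U+V)/2}\bigr)$, and attainment is the same substitution of $U_*^{\rm iid}$ into~\eqref{eq:aux_asymvar_iid}. You are somewhat more explicit than the paper on two points it leaves implicit --- that finiteness of $\int \abs{f-I}^2 \e^{U-V}$ forces $f = I$ a.e.\ on $\{U = +\infty\}$ so the Cauchy--Schwarz product really recovers the full integral of $\abs{f-I}\e^{-V}$, and the verification that $U_*^{\rm iid}$ satisfies all three defining conditions of $\mathcal U$ --- which is welcome but does not change the argument.
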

\begin{proof}
    We first prove that $\widehat s^2_f[U] \geq s^*_f$ for all $U \in \mathcal U$.
    If $\widehat s^2_f[U] = \infty$,
    then this inequality is clear, so we assume from now on that $\widehat s^2_f[U] < \infty$.
    Using the Cauchy--Schwarz inequality,
    we have
    \[
        \widehat s^2_f[U]
        = \frac{1}{Z^2} \int_{\domain^d} \e^{-U-V} \int_{\domain^d} \lvert f-I \rvert^2 \e^{U-V}
        \geq \frac{1}{Z^2} \left(\int_{\domain^d} \lvert f-I \rvert \e^{-V}\right)^2
        = s^*_f.
    \]
    The statement that the infimum is achieved for~$U_*^{\rm iid}$ in~\eqref{eq:minimizer_lemma_iid} follows from a substitution in~\eqref{eq:aux_asymvar_iid}.
    Note that $U_*^{\rm idd}$ indeed belongs to~$\mathcal U$.
\end{proof}
\begin{remark}
    The function~$U_*^{\rm iid}$ in~\eqref{eq:minimizer_lemma_iid} is not smooth,
    because $f - I$ admits at least one root in $\domain^d$.
\end{remark}

Although~$\widehat s^2_f[U]$ does not coincide with $s^2_f[U]$ in general,
\cref{lemma:preparatory_asymvar_iid} still contains useful information.
Indeed, by regularizing~$U_*^{\rm iid}$ in~\eqref{eq:minimizer_lemma_iid},
we prove in~\cref{proposition:background_infimum} that~$s^*_f$ is a sharp bound from below on the actual asymptotic variance~$s^2_f[U]$
over a class of ``nice'' biasing potentials.
Specifically, let us introduce
\[
    \mathcal U_0 = \Bigl\{ U \in \mathcal U :  \supp(\abs{f-I}\mu) \subset \supp(\mu_U) \Bigr\}.
\]
Conditions of the type $\supp(\abs{f-I}\mu) \subset \supp(\mu_U)$ are used in the importance sampling literature;
see, for example, equation (1.1) in~\cite[Section V.1]{MR2331321}.
Notice that, if this condition is satisfied,
then~\eqref{eq:unbiased} is also satisfied,
but the converse is not true;
in other words, $\mathcal U_0$ is a proper subset of~$\mathcal U$.
We are now ready to state and prove the main result of this section.
We discuss in~\cref{remark:optimal_idd_not_minimizer} after the proof the existence of an optimal potential achieving the infimum over~$\mathcal U_0$.
Here and in the rest of this paper,
the notation~$C^{\infty}_{\rm c}(\domain^d)$ denotes the set of smooth functions with compact support over~$\domain^d$.

\begin{proposition}
    \label{proposition:background_infimum}
    Recall that~\cref{assumption:assumptions_V} is assumed to hold throughout this paper.
    Then,
    \begin{itemize}
        \item
            It holds that
            \(
                \inf_{U \in \mathcal U} s^2_f[U] = 0.
            \)

        \item
            If~$0 \in \mathcal U_0$,
            then~
            \(
                \inf_{U \in \mathcal U_0} s^2_f[U] = s^*_f.
            \)

        \item
            If~$0 \in \mathcal U_0$,
            then
            \(
                \inf_{U \in C^{\infty}_{\rm c}(\domain^d)} s^2_f[U] = s^*_f.
            \)
    \end{itemize}
\end{proposition}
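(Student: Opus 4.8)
The plan is to establish the three assertions in sequence, using \cref{lemma:preparatory_asymvar_iid} together with regularization arguments. For the first assertion, I would exhibit a sequence~$(U_n) \subset \mathcal U$ with $s^2_f[U_n] \to 0$. The key observation is that $s^2_f[U]$ involves the factor $\Z{U}^2 / \mathscr Z[U]^2$, which can be made arbitrarily small by choosing~$U$ so that $\supp(\mu_U)$ is a small neighborhood of the level set $\{f = I\}$ (recall $f$ is not $\mu$-a.e.\ constant, so this set has nonempty interior or can be approximated from within $\supp(\mu)$ in a way keeping~\eqref{eq:unbiased}). More carefully: since $\int_{\domain^d}(f-I)\,\d\mu = 0$ and $f$ is continuous, one can choose a measurable set $A_n$ with $\int_{A_n}(f-I)\,\d\mu = 0$, with $\mu(A_n) \to 0$, and on which $|f-I|$ is bounded; taking $U_n = +\infty$ outside $A_n$ and $U_n = 0$ on $A_n$ gives $U_n \in \mathcal U$ with $\mathscr Z[U_n] = \Z{U_n} = \int_{A_n}\e^{-V} \to 0$ while $\int |(f-I)\e^{U_n}|^2 \,\d\mu_{U_n} = \int_{A_n} |f-I|^2 \,\d\mu / \mu(A_n)$ stays controlled, so $s^2_f[U_n] = (\int_{A_n}\e^{-V})^{-2} \cdot (\text{bounded}) \cdot \Z{U_n}$ — wait, one must recompute; the point is that $s^2_f[U_n] \propto \int_{A_n}|f-I|^2\e^{-V}$, which tends to zero. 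The construction of such sets $A_n$ (shrinking neighborhoods of $\{f=I\}$ with the mean-zero constraint preserved) is the first technical point.

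For the second assertion, the lower bound $s^2_f[U] \geq s^*_f$ for $U \in \mathcal U_0$ follows because, when $\supp(|f-I|\mu) \subset \supp(\mu_U)$, the restriction to $\supp(\mu_U)$ does not lose any mass of $|f-I|\e^{-V}$, so the Cauchy--Schwarz computation in the proof of \cref{lemma:preparatory_asymvar_iid} goes through with $\mathscr Z[U]$ in place of $Z$ and yields $s^2_f[U] = \frac{\Z{U}^2}{\mathscr Z[U]^2}\int_{\supp(\mu_U)}|f-I|^2\e^{U-V}/\Z{U} = \frac{\Z{U}}{\mathscr Z[U]^2}\int|f-I|^2\e^{U-V} \geq \frac{1}{\mathscr Z[U]^2}(\int_{\supp(\mu_U)}|f-I|\e^{-V})^2 = \frac{1}{\mathscr Z[U]^2}(\int_{\domain^d}|f-I|\,\d\mu \cdot Z)^2 \geq s^*_f$, using $\mathscr Z[U] \leq Z$. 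For the matching upper bound I would take $U = 0$ (legitimate since $0 \in \mathcal U_0$ by hypothesis, which in particular forces $\int(f-I)\,\d\mu = 0$) — no, that gives $s^2_f[0] = \frac{1}{Z}\int|f-I|^2\e^{-V}$, generally strictly larger than $s^*_f$. Instead I would regularize $U_*^{\rm iid} = -\log|f-I|$: since this function may be $+\infty$ at roots of $f-I$ and may grow, set $U_\varepsilon = \min(-\log(|f-I| + \varepsilon),\, \varepsilon^{-1})$ or a similar truncation-and-mollification, check $U_\varepsilon \in \mathcal U_0$ (its support is all of $\domain^d$, so the support condition is trivial), and verify by dominated convergence that $s^2_f[U_\varepsilon] = \widehat s^2_f[U_\varepsilon] \to \widehat s^2_f[U_*^{\rm iid}] = s^*_f$ as $\varepsilon \to 0$.

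For the third assertion, note $C^\infty_{\rm c}(\domain^d) \subset \mathcal U_0$ (any smooth compactly supported $U$ has $\supp(\mu_U) = \domain^d$ when $\domain = \real$, or $= \torus^d$ when $\domain = \torus$, since $\e^{-V-U} > 0$ everywhere), so the lower bound $\inf_{C^\infty_{\rm c}} s^2_f[U] \geq s^*_f$ is immediate from the second assertion; the reverse inequality requires further mollifying the $U_\varepsilon$ above to be smooth and compactly supported while controlling $s^2_f$, again via dominated convergence. The main obstacle I anticipate is the first assertion: producing the shrinking sets $A_n$ that simultaneously satisfy the exact unbiasedness constraint $\int_{A_n}(f-I)\,\d\mu = 0$, shrink to measure zero, and keep $\int_{A_n}|f-I|^2\e^{-V} \to 0$ — this needs a careful argument splitting $\{f > I\}$ and $\{f < I\}$ and balancing the contributions, and is the only place where genuine construction (rather than routine regularization) is involved.
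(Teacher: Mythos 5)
Your treatment of the second and third items follows essentially the same route as the paper: the Cauchy--Schwarz lower bound combined with the support condition and $\mathscr Z[U] \leq Z$ gives $s^2_f[U] \geq s^*_f$ on $\mathcal U_0$, and the infimum is attained in the limit by mollifying and (on $\real^d$) cutting off $U_*^{\rm iid}$; the details of the dominated-convergence step that you defer are exactly what the paper supplies via the regularized absolute value $\abs{\cdot}_{\varepsilon}$ and the cutoff $\chi_{\varepsilon}$, so apart from the missing verification that $C^{\infty}_{\rm c}(\domain^d) \subset \mathcal U_0$ (which uses $0 \in \mathcal U_0$, i.e.\ $f \in L^2(\mu)$), those two items are fine.

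The first item, however, contains a genuine gap, and it sits precisely where you locate the difficulty. For the indicator-type potential ($U_n = 0$ on $A_n$, $+\infty$ outside) one has $\Z{U_n} = \mathscr Z[U_n]$, so the prefactor $\Z{U_n}^2/\mathscr Z[U_n]^2$ equals $1$ and cannot be ``made arbitrarily small'' as your opening observation suggests; the asymptotic variance reduces to the conditional average
\[
    s^2_f[U_n] = \frac{\int_{A_n} \abs{f-I}^2 \, \e^{-V}}{\int_{A_n} \e^{-V}},
\]
which is \emph{not} proportional to $\int_{A_n}\abs{f-I}^2\e^{-V}$ alone: both numerator and denominator tend to zero, and the ratio tends to zero only if $A_n$ concentrates around a point where $f = I$. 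Your final claim for this item is therefore incorrect as stated, and the construction you admit to lacking is the whole content of the argument. The paper resolves it by introducing the ball average $g_{\varepsilon}(x) = \abs{B_{\varepsilon}(0)}^{-1}\int_{B_{\varepsilon}(x)}(f-I)\e^{-V}$, using the intermediate value theorem along the segment joining a point where $f>I$ to a point where $f<I$ to find $x_{\varepsilon}$ with $g_{\varepsilon}(x_{\varepsilon}) = 0$ (this enforces the unbiasedness constraint~\eqref{eq:unbiased} for $A_n = B_{\varepsilon}(x_{\varepsilon})$ and simultaneously guarantees that $f-I$ vanishes somewhere in the ball), and then bounding the conditional average by $(2L\varepsilon)^2$ via the mean value theorem and the local Lipschitz constant of $f-I$. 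Without this (or an equivalent) construction, the first item is not proved.
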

\begin{remark}
    In view of~\cref{assumption:assumptions_V}, the condition~$0 \in \mathcal U_0$ is equivalent to the condition $f \in L^2(\mu)$.
    We use the former condition in the statement of~\cref{proposition:background_infimum} in order to underline the similarity with~\cref{proposition:sharpness} in the MCMC setting.
\end{remark}
\begin{proof}
    We divide the proof into three parts,
    corresponding to the three items in the statement.

    \noindent\textbf{First item}.
    The idea here is to construct an importance distribution~$\mu_U$ concentrated in a small ball containing a point where~$f = I$.
    Considering balls centered at such a point is usually not sufficient,
    because the condition~\eqref{eq:unbiased} may not be satisfied.
    Take~$\varepsilon > 0$ and let $g_{\varepsilon}\colon \domain^d \to \real$ denote the function
    \[
        g_{\varepsilon}(x) = \frac{1}{|B_{\varepsilon}(0)|}\int_{B_{\varepsilon}(x)} \bigl(f - I\bigr) \e^{-V},
    \]
    where $B_{\varepsilon}(x) \subset \domain^d$ is the open ball of radius~$\varepsilon$ centered at~$x$
    and $|B_{\varepsilon}(0)|$ is the volume of this ball.
    By~\cref{assumption:assumptions_V},
    there exists $(x_1, x_2) \in \domain^d \times \domain^d$ such that $f(x_1) > I$ and $f(x_2) < I$.
    Since $f$ is smooth,
    there is~$\epsilon > 0$ such that $g_{\varepsilon}(x_1) > 0$ and $g_{\varepsilon}(x_2) < 0$ for all $\varepsilon \in (0, \epsilon]$.
    Therefore, by the intermediate value theorem,
    there exists for all $\varepsilon \in (0, \epsilon]$ a point~$x_{\varepsilon} \in \domain^d$ on the segment joining~$x_1$ and~$x_2$ such that~$g_{\varepsilon}(x_{\varepsilon}) = 0$.
    We define
    \[
        U_{\varepsilon}(x) =
        \begin{cases}
            0 & \text{if } \abs{x - x_{\varepsilon}} \leq \varepsilon, \\
            + \infty & \text{otherwise.}
        \end{cases}
    \]
    By construction $U_{\varepsilon} \in \mathcal U$,
    and we calculate from~\eqref{eq:asym_var_iid} that
    \begin{equation*}
        s^2_f[U_{\varepsilon}] =
        \frac
        {\int_{B_{\varepsilon}(x_{\varepsilon})} \lvert f-I \rvert^2 \e^{-V}}
        {\int_{B_{\varepsilon}(x_{\varepsilon})} \e^{-V}}.
    \end{equation*}
    By the mean value theorem,
    there is~$z_{\varepsilon} \in B_{\varepsilon}(x_{\varepsilon})$ such that $s^2_f[U_{\varepsilon}] = \lvert f(z_{\varepsilon}) -I \rvert^2$.
    Since $g_{\varepsilon}(x_{\varepsilon}) = 0$,
    the function~$f-I$ necessarily has a zero in~$B_{\varepsilon}(x_{\varepsilon})$.
    Additionally, since~$(x_{\varepsilon})_{\varepsilon \in (0, \epsilon]}$ is bounded,
    the function~$f-I$ restricted to $\bigcup_{\varepsilon \in (0, \epsilon]} B_{\varepsilon}(x_{\varepsilon})$ is Lipschitz continuous,
    with some constant~$L$.
    From this we obtain that~$s^2_f[U_{\varepsilon}] \leq (2L \varepsilon)^2$,
    and taking the limit~$\varepsilon \to 0$ in this equation,
    we deduce the first item.

    \vspace{.2cm}
    \noindent \textbf{Second item}.
    By the Cauchy--Schwarz inequality,
    the following lower bound holds for all~$U \in \mathcal U$:
    \begin{equation}
        \label{eq:lower_bound_asym_var_iid}
        s^2_f[U]
        \geq \frac{\Z{U}^2}{\mathscr Z[U]^2} \left( \int_{\domain^d} \abs{f-I} \e^U \, \d \mu_{U} \right)^2
        = \frac{1}{\mathscr Z[U]^2} \left( \int_{\supp(\mu_U)} \abs{f-I} \e^{-V} \right)^2.
    \end{equation}
    For~$U \in \mathcal U_0$,
    the right-most expression in~\eqref{eq:lower_bound_asym_var_iid} equals
    \[
        \frac{1}{\mathscr Z[U]^2} \left( \int_{\domain^d} \abs{f-I} \e^{-V} \right)^2 \geq s^*_f,
    \]
    where we used that $\mathscr Z[U] \leq Z$.
    Therefore, it holds that $s^2[U] \geq s^*_f$ for all $U \in \mathcal U_0$.
    That $s^*_f$ is in fact the \emph{infimum} of~$s^2_f$ over~$\mathcal U_0$ will follow from the third item,
    because $C^{\infty}_{\rm c} \subset \mathcal U_0$.
    \vspace{.2cm}

    \noindent \textbf{Third item}.
    For~$\varepsilon > 0$,
    let $\varrho_{\varepsilon}\colon \real \to \real$ denote the mollifier
    \begin{equation}
        \label{eq:mollification}
        \varrho_{\varepsilon}(z) = \varepsilon^{-1} \varrho\bigl( \varepsilon^{-1} z \bigr),
        \qquad
                                 \varrho(z) =
                                 \begin{cases}
                                     k\exp \left( - \frac{1}{1 - \abs{z}^2} \right) & \text{if } \abs{z} \leq 1, \\
                                     0 & \text{if } \abs{z} > 1,
                                 \end{cases}
    \end{equation}
    with $k>0$ such that~$\int_{\real} \varrho(z) \, \d z = 1$.
    Let us introduce the smooth regularization of the absolute value function given by~$\abs{z}_\varepsilon = \varrho_{\varepsilon} \star {\rm abs}(z)$,
    where ${\rm abs}(z) = |z|$ is the absolute value function.
    Notice that~$\abs{z}_\varepsilon = \abs{z}$ for $\lvert z \rvert \geq \varepsilon$ and that
    \[
        \forall z \in \real, \qquad
        0 \leq |z|_{\varepsilon} - \lvert z \rvert \leq |0|_{\varepsilon} \leq \varepsilon.
    \]
    The first inequality follows from the convexity of the absolute value function,
    and the second inequality comes from an application of the reverse triangle inequality:
    \[
        |z|_{\varepsilon} - |z|
        = \int_{\real} \left( \lvert z - y \rvert - \abs{z} \right) \varrho_{\varepsilon} (y)  \, \d y
        \leq \int_{\real} \abs{y} \, \varrho_{\varepsilon}(y) \, \d y = |0|_{\varepsilon}.
    \]
    Moreover,
    let~$U_{\varepsilon}\colon \domain \to \real$ be the smooth biasing potential given by
    \[
        U_{\varepsilon}(x) =
        \begin{cases}
            \chi_{\varepsilon}(x) \Bigl( -  \log \bigl\lvert f-I \bigr\rvert_{\varepsilon} \Bigr) & \textrm{if } \domain = \real,\\
            - \log \bigl\lvert f-I \bigr\rvert_{\varepsilon} & \textrm{if } \domain = \torus,
        \end{cases}
    \]
    where $\chi_{\varepsilon} = \varrho \star \mathds 1_{[-\varepsilon^{-1}, \varepsilon^{-1}]}$.
    Since $\abs{z}_{\varepsilon} \geq \abs{0}_{\varepsilon} > 0$ for all $z \in \real$,
    the function $U_{\varepsilon}$,
    which is a regularization of~\eqref{eq:minimizer_lemma_iid},
    is well-defined everywhere and uniformly bounded from above.
    The choice of~$U_*^{\rm iid}$ as the function we regularize is natural in view of~\cref{lemma:preparatory_asymvar_iid}
    and the fact that the inequality in~\eqref{eq:lower_bound_asym_var_iid} is an equality for~$U = U_*^{\rm iid}$.

    We now show that $s^2_f[U_{\varepsilon}]$ converges to the lower bound in~\eqref{eq:lower_bound_asym_var_iid} in the limit as $\varepsilon \to 0$.
    We focus in this proof on the case where $\domain = \real$,
    but the reasoning applies verbatim to the case where $\domain = \torus$.
    Using~\eqref{eq:asym_var_iid}, we have that
    \[
        s^2_f[U_{\varepsilon}]
        = \frac{\Z{U_{\varepsilon}}}{Z^2} \int_{\real^d} \left(\frac{\lvert f-I \rvert}{\lvert f-I \rvert_{\varepsilon}^{\chi_{\varepsilon}}}\right)^2  \lvert f-I \rvert_{\varepsilon}^{\chi_{\varepsilon}} \e^{-V}.
    \]
    Since $0 \leq \chi_{\varepsilon}(x) \leq 1$ for all $x \in \domain^d$,
    it holds for all $\varepsilon \in (0, 1)$ that
    \[
        \left\{
            \begin{aligned}
                \lvert f-I \rvert_{\varepsilon}^{\chi_{\varepsilon}}
        &\leq \left( \lvert f-I \rvert + 1 \right)^{\chi_{\varepsilon}} \leq \abs{f-I} + 1 \\
        \frac{\lvert f-I \rvert^2}{\lvert f-I \rvert_{\varepsilon}^{\chi_{\varepsilon}}}
        &\leq \lvert f-I \rvert^{2- \chi_{\varepsilon}} \leq \bigl(\lvert f-I \rvert + 1\bigr)^{2 - \chi_{\varepsilon}}
        \leq \bigl(\lvert f-I \rvert + 1\bigr)^2.
            \end{aligned}
        \right.
    \]
    Since $f \in L^2\left(\e^{-V}\right)$ by assumption,
    the right-hand sides of these inequalities are integrable.
    Therefore, using dominated convergence,
    we obtain that
    \[
        \Z{U_{\varepsilon}} = \int_{\real^d} \lvert f-I \rvert_{\varepsilon}^{\chi_{\varepsilon}} \e^{-V}
            \xrightarrow[\varepsilon \to 0]{} \int_{\real^d} \lvert f-I \rvert \e^{-V}
    \]
    and
    \[
        \int_{\real^d} \left(\frac{\lvert f-I \rvert}{\lvert f-I \rvert_{\varepsilon}^{\chi_{\varepsilon}}}\right)^2  \lvert f-I \rvert_{\varepsilon}^{\chi_{\varepsilon}} \e^{-V}
        = \int_{\real^d} \frac{\lvert f-I \rvert^2}{\lvert f-I \rvert_{\varepsilon}^{\chi_{\varepsilon}}} \e^{-V}
        \xrightarrow[\varepsilon \to 0]{} \int_{\real^d} \abs{f-I} \e^{-V}.
    \]
    and so $s^2_f[U_{\varepsilon}] \to s^*_f$ in the limit as $\varepsilon \to 0$.
    Since $s^2_f[U] \geq s^*_f$ for all $U \in \smoothcompact(\real^d)$ by~\eqref{eq:lower_bound_asym_var_iid},
    we deduce the statement.
\end{proof}

\begin{remark}
    \label{remark:optimal_idd_not_minimizer}
    First note that~$U_*^{\rm iid} \in \mathcal U_0$.
    In light of~\cref{lemma:preparatory_asymvar_iid},
    one may wonder whether $s^2_f[U_*^{\rm iid}] = s^*_f$.
    That is to say, is $U_*^{\rm iid}$ \emph{a minimizer} of~$s^2_f$ in~$\mathcal U_0$?
    Substitution into~\eqref{eq:asym_var_iid} reveals that this is not always the case:
    \begin{equation}
        \label{eq:not_minimum}
        s^2_f[U_*^{\rm iid}] = \frac{Z^2}{\mathscr Z[U_*^{\rm iid}]^2} \left(\int_{\domain^d} \abs{f-I} \d \mu\right)^2  \geq s^*_f.
    \end{equation}
    The inequality in this equation is an equality if and only if $\mathscr Z[U_*^{\rm iid}] = Z$ or,
    equivalently, $f^{-1}(I)$ has zero Lebesgue measure.
    In particular, if $f^{-1}(I)$ has positive Lebesgue measure,
    then the biasing potential~$U_*^{\rm iid}$ does not achieve the lower bound~$s^*_f$,
    even though a minimizing sequence that asymptotically achieves the lower bound~$s^*_f$ can be constructed by regularizing this potential.
\end{remark}

\Cref{example:iid_asym_var} in the appendix illustrates~\cref{proposition:background_infimum}.
We explicitly construct in this example a potential function $U \in \mathcal U \setminus \mathcal U_0$ such that $s^2_f[U] = 0$,
as well as a minimizing sequence $(U_{\varepsilon})_{\varepsilon > 0}$ in~$\mathcal U_0$ such that~$s^2_f[U_{\varepsilon}] \to s^*_f$ in the limit~$\varepsilon \to 0$.
The example also illustrates that~$U_*^{\rm iid}$ is not necessarily a minimizer in~$\mathcal U_0$,
as mentioned in~\cref{remark:optimal_idd_not_minimizer}.

This section reveals the difficulties encountered when no regularity of the biasing potential~$U$ is assumed.
Similar difficulties will be encountered in the analysis of the MCMC estimator~\eqref{eq:estimator}.

\section{Minimizing the asymptotic variance for a single observable}%
\label{sec:minimizing_the_asymptotic_variance_for_one_observable}

In this section, as in the previous one,
we consider a target-oriented approach:
we seek the optimal biasing potential~$U$ for a given observable~$f$.
After presenting the mathematical framework in~\cref{sub:math_setting},
we first obtain in~\cref{sub:asymvar} an expression for the asymptotic variance associated to the estimator~\eqref{eq:estimator}
in terms of the solution to a Poisson equation where~$f$ appears on the right-hand side,
and subsequently address the problem of finding the optimal biasing potential~$U$,
first in the one-dimensional setting in~\cref{sub:explicit_optimal_potential_in_dimension_one},
and then in the multi-dimensional setting in~\cref{sub:optimal_solution_in_the_multi_dimensional_setting}.

\subsection{Mathematical setting}
\label{sub:math_setting}
We will use the following functional spaces:
\[
    L^2_0(\mu_U) =
    \left\{ \varphi \in L^2(\mu_U) \, \middle| \, \int_{\domain^d} \varphi \, \d \mu_U = 0 \right\},
    \qquad
    H^1(\mu_U) = \left\{ \varphi \in L^2(\mu_U) \, \middle| \, \nabla \varphi \in  \bigl(L^2(\mu_U)\bigr)^d \right\},
\]
where~$\mu_U$ is the probability measure defined in~\eqref{eq:mu_u}.
In the i.i.d.\ setting, the condition $(f-I) \e^U \in L^2(\mu_U)$ is necessary and sufficient to guarantee that a central limit theorem holds.
There does not exist such a simple condition in the MCMC setting,
and so, in most of this section,
we work under conditions which are only sufficient.
Specifically, we denote by~$\mathfrak U_0 = \mathfrak U_0(V, f)$ the set of biasing potentials that satisfy the following assumptions.
\begin{assumption}
    \label{assumption:as1}
    $~$
    \begin{itemize}
        \item
            The function $U$ is smooth on $\domain^d$.

        \item
            The function $\e^{-V-U}$ is Lebesgue integrable over $\domain^d$.

        \item
            The probability measure $\mu_{U}$ satisfies a Poincaré inequality:
            there exists $R[U] > 0$ such that
            \begin{equation}
                \label{eq:poincaré}
                \forall \varphi \in H^1(\mu_{U}) \cap L^2_0(\mu_U), \qquad
                \norm{\varphi}_{L^2(\mu_U)}^2 \leq \frac{1}{R[U]}\norm{\nabla \varphi}_{L^2(\mu_U)}^2.
            \end{equation}

        \item
            It holds that $(f-I) \, \e^{U} \in L^2(\mu_{U})$.

        \item
            For any~$x_0\in\domain^d$, there exists a unique strong solution~$X_t$ to~\eqref{eq:mix} with~$X_0 = x_0$.
    \end{itemize}
\end{assumption}
In this section, the functions~$V$ and~$f$ are considered fixed data of the problem,
so the dependence of~$\mathfrak U_0$ on these data is omitted in the notation.
However, in~\cref{sec:minimizing_the_expected_asymptotic_variance} we will write~$\mathfrak U_0(V, f)$ to emphasize this dependence where necessary.
Just like the set~$\mathcal U_0$ in \cref{sub:iid},
the set~$\mathfrak U_0$ in this section contains ``nice'' biasing potentials.
Indeed, the asymptotic behavior of the estimator~\eqref{eq:estimator} can be rigorously characterized for~$U \in \mathfrak U_0$;
see~\cref{lemma:asymptotic_variance}.

\begin{remark}
    If~$U \in \mathfrak U_0$, then the estimator~\eqref{eq:estimator} converges to $I$ almost surely as $T \to \infty$ because,
    by ergodicity,
    \[
        \mu^T_U(f) = \frac{\displaystyle \frac{1}{T}\int_0^T (f \e^U)(X_t) \, \d t}{\displaystyle \frac{1}{T}\int_0^T(\e^U)(X_t) \, \d t}
        \xrightarrow[T \to \infty]{{\rm a.s.}}
        \frac{\displaystyle \int_{\domain^d} (f \e^U) \, \d \mu_U}{\displaystyle \int_{\domain^d} (\e^U) \, \d \mu_U}
        = I.
    \]
\end{remark}

\begin{remark}
In the case where $\domain = \torus$,
a Poincaré inequality of the form~\eqref{eq:poincaré} always holds provided that $V+U$ is smooth.
When $\domain = \real$, however,
the potential $V+U$ must satisfy appropriate growth conditions to ensure that the inequality holds.
For sufficient conditions, see~e.g.~\cite{MR1845806,MR2386063}.
We use the Poincaré inequality to establish the central limit theorem,
but there are other ways to obtain similar conclusions without directly using the Poincaré inequality,
for example by using results of Kipnis--Varadhan or Foster--Lyapunov (see~\cite{MR3483241} and references within).
\end{remark}

\begin{remark}
    When~$\domain = \torus$,
    the first item in~\cref{assumption:as1} implies all the other items,
    and in this setting~$\mathfrak U_0 = C^{\infty}(\torus^d) = C^{\infty}_{\rm c}(\torus^d)$.
\end{remark}

We denote by $\mathcal L_U$ the infinitesimal generator on~$L^2(\mu_U)$ of the Markov semigroup associated to~\eqref{eq:mix},
which is given on~$\smoothcompact(\domain^d)$ by
\begin{equation}
    \label{eq:generator}
    \mathcal L_U = - \nabla (V + U) \cdot \nabla + \laplacian = \e^{V+U} \nabla \cdot (\e^{-V-U} \nabla \dummy).
\end{equation}

\subsection{Asymptotic variance}
\label{sub:asymvar}
The following lemma gives an expression of the asymptotic variance of the estimator $\mu^T_U(f)$ given in~\eqref{eq:estimator}
in terms of the solution to a Poisson equation.

\begin{lemma}
    [Asymptotic variance]
    \label{lemma:asymptotic_variance}
    Suppose that~$U \in \mathfrak U_0$.
    Then there exists a unique distributional solution~$\phi_U \in H^1(\mu_{U}) \cap L^2_0(\mu_{U})$ to
    \begin{equation}
        \label{eq:poisson}
        -\mathcal L_U \phi_{U} = (f- I) \e^U.
    \end{equation}
    The solution $\phi_U$ is smooth and,
    for Lebesgue almost all initial condition,
    it holds that
    \[
        \sqrt{T} \bigl( \mu^T_U(f) - I\bigr)
        \xrightarrow[T \to \infty]{\rm Law} \normal\bigl(0, \sigma^2_f[U]\bigr),
    \]
    where
    \begin{equation}
        \label{eq:asym_var}
        \sigma^2_f[U] := \frac{2\Z{U}^2}{Z^2}\int_{\domain^d} \abs*{\nabla \phi_U}^2 \d\mu_{U}
        = \frac{2\Z{U}^2}{Z^2}\int_{\domain^d} \phi_U (f-I) \, \e^U \, \d\mu_{U}.
    \end{equation}
\end{lemma}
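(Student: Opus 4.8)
The plan is to establish the two parts of the statement separately: first the well-posedness and regularity of the Poisson equation~\eqref{eq:poisson}, then the central limit theorem.

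For the first part, I would work with the weak formulation of~\eqref{eq:poisson}. Using the divergence-form expression for $\mathcal L_U$ in~\eqref{eq:generator}, a function $\phi_U \in H^1(\mu_U) \cap L^2_0(\mu_U)$ solves~\eqref{eq:poisson} distributionally if and only if
\[
    \forall \psi \in H^1(\mu_U) \cap L^2_0(\mu_U), \qquad
    \int_{\domain^d} \nabla \phi_U \cdot \nabla \psi \, \d \mu_U = \int_{\domain^d} (f - I) \e^U \psi \, \d \mu_U .
\]
The right-hand side is a bounded linear functional on $L^2_0(\mu_U)$ because $(f-I)\e^U \in L^2(\mu_U)$ by \cref{assumption:as1}, and it is compatible with the zero-mean constraint since $\int_{\domain^d}(f-I)\e^U\,\d\mu_U = (Z/\Z{U})\int_{\domain^d}(f-I)\,\d\mu = 0$ by definition of $I$. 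The bilinear form on the left is coercive on $H^1(\mu_U) \cap L^2_0(\mu_U)$ thanks to the Poincaré inequality in \cref{assumption:as1}, so the Lax--Milgram theorem yields a unique solution. Smoothness of $\phi_U$ then follows from interior elliptic regularity, since $V+U$ and $f$ are smooth and $\mathcal L_U$ is uniformly elliptic on compact subsets of $\domain^d$.

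For the central limit theorem, I would write
\[
    \mu^T_U(f) - I = \frac{\displaystyle \frac1T \int_0^T \bigl((f-I)\e^U\bigr)(X_t)\,\d t}{\displaystyle \frac1T \int_0^T \e^U(X_t)\,\d t}
\]
and handle numerator and denominator with different limit theorems. By ergodicity of~\eqref{eq:mix} with respect to $\mu_U$ (which follows from the Poincaré inequality together with the existence of a unique strong solution), the denominator converges almost surely to $\int_{\domain^d}\e^U\,\d\mu_U = \mathscr Z[U]/\Z{U} = Z/\Z{U}$, the last equality holding because $U$ is smooth, hence finite everywhere, so that $\supp(\mu_U) = \domain^d$. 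For the numerator, I would apply the Itô--Dynkin formula to $\phi_U(X_t)$ and use~\eqref{eq:poisson} to get
\[
    \frac{1}{\sqrt T}\int_0^T \bigl((f-I)\e^U\bigr)(X_t)\,\d t
    = \frac{\phi_U(X_0) - \phi_U(X_T)}{\sqrt T} + \frac{\sqrt 2}{\sqrt T}\int_0^T \nabla\phi_U(X_t)\cdot\d W_t .
\]
The boundary term vanishes in probability, and the stochastic integral is a martingale whose rescaled quadratic variation $\frac2T\int_0^T|\nabla\phi_U(X_t)|^2\,\d t$ converges almost surely, again by ergodicity and the fact that $|\nabla\phi_U|^2 \in L^1(\mu_U)$ (as $\phi_U \in H^1(\mu_U)$), to $2\int_{\domain^d}|\nabla\phi_U|^2\,\d\mu_U$. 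The martingale central limit theorem then gives convergence of the numerator to $\normal\bigl(0, 2\int_{\domain^d}|\nabla\phi_U|^2\,\d\mu_U\bigr)$, and Slutsky's lemma combined with the limit of the denominator produces precisely $\sigma^2_f[U]$ as in~\eqref{eq:asym_var}; the second expression for $\sigma^2_f[U]$ follows by taking $\psi = \phi_U$ in the weak formulation.

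I expect the main technical obstacle to be making the central limit theorem for this additive functional rigorous for \emph{Lebesgue-almost every} deterministic initial condition rather than only for the stationary process. The clean statement holds when $X_0 \sim \mu_U$; to descend to deterministic starting points one uses that, for $t>0$, the law of $X_t$ is absolutely continuous with respect to $\mu_U$, so that conditioning on $X_t = y$ the CLT holds for $\mu_U$-a.e., hence Lebesgue-a.e., $y$, while the contribution of $[0,t]$ is negligible after rescaling by $\sqrt T$. Controlling the boundary term $\phi_U(X_T)/\sqrt T$ (via $\phi_U \in L^2(\mu_U)$ and stationarity) and justifying the almost-sure ergodic convergence of the quadratic variation are the remaining points; alternatively, one can invoke a ready-made central limit theorem for additive functionals of reversible diffusions under a spectral gap assumption, in the spirit of the Kipnis--Varadhan theory referenced earlier.
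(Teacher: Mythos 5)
Your proposal is correct and follows essentially the same route as the paper: Lax--Milgram on $H^1(\mu_U)\cap L^2_0(\mu_U)$ via the Poincar\'e inequality, elliptic regularity for smoothness, the numerator/denominator decomposition with the martingale central limit theorem for the numerator, and Slutsky's lemma. The only difference is that where you spell out the It\^o--Dynkin martingale decomposition and flag the passage from stationary to Lebesgue-a.e.\ deterministic initial conditions as the delicate point, the paper delegates exactly these steps to a citation (a ready-made CLT for additive functionals under a spectral gap), which is the alternative you yourself mention at the end.
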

\begin{proof}
    By density of $\smoothcompact(\domain^d)$ in~$H^1(\mu_U)$,
    a function~$\phi_U \in H^1(\mu_{U}) \cap L^2_0(\mu_{U})$ is a distributional solution to~\eqref{eq:poisson} if and only if
    \begin{equation}
        \label{eq:weak_formulation}
        \forall \varphi \in H^1(\mu_U) \cap L^2_0(\mu_{U}), \qquad
        \int_{\domain^d} \nabla \phi_U \cdot \nabla \varphi \, \d \mu_U
        = \int_{\domain^d} (f - I) \e^U \, \varphi \, \d \mu_U.
    \end{equation}
    The validity of a Poincaré inequality for~$\mu_U$ implies that
    the function space $H^1(\mu_U) \cap L^2_0(\mu_U)$ endowed with the inner product
    \[
        (\varphi_1, \varphi_2) \mapsto
        \int_{\domain^d} \nabla \varphi_1 \cdot \nabla \varphi_2 \, \d \mu_U
    \]
    is a Hilbert space,
    and that the right-hand side of~\eqref{eq:weak_formulation} is a bounded linear functional on this space.
    Therefore, the Lax--Milgram theorem (or the Riesz representation theorem)
    yields the existence of a unique solution $\phi_U$ in $H^1(\mu_{U}) \cap L^2_0(\mu_{U})$.
    Elliptic regularity theory~\cite{MR1814364,evans2010partial}
    then implies that~$\phi_U\in C^{\infty}(\domain^d)$.
    From the definition~\eqref{eq:estimator} of $\mu^T_U(f)$,
    we have
    \[
        \sqrt{T} \bigl( \mu^T_U(f) - I\bigr)
        = \frac{\displaystyle \frac{1}{\sqrt{T}} \int_0^T \left((f-I) \e^U\right)(X_t) \, \d t}
        {\displaystyle \frac{1}{T}\int_0^T\left(\e^U\right)({X}_t) \, \d t}.
    \]
    The numerator converges in law to $\normal\bigl(0, 2\int_{\domain^d} \abs*{\nabla \phi_{U}}^2 d\mu_{U}\bigr)$,
    for instance by~\cite[Theorem~3.1]{MR3069369}
    (the setting there is~$\real^d$,
    but for~$\torus^d$, the argument using~\eqref{eq:basicconv} and the martingale central limit theorem,
    that is essentially~\cite[Theorem~VIII.3.11]{MR1943877},
    works in the same way),
    while the denominator converges almost surely to $Z/\Z{U}$.
    The claimed convergence in law then follows from Slutsky's lemma.
    The last equality in~\eqref{eq:asym_var} follows from the definition~\eqref{eq:weak_formulation} of a weak solution.
\end{proof}

\begin{remark}
    [Stability estimate]
    From the weak formulation~\eqref{eq:weak_formulation} and the Poincaré inequality~\eqref{eq:poincaré},
    we deduce the stability estimate
    \begin{equation}
        \label{eq:stability_estimate}
        \norm{\nabla \phi_U}_{L^2(\mu_U)}
        \leq \frac{1}{\sqrt{R[U]}} \norm*{(f - I) \e^U}_{L^2(\mu_U)}.
    \end{equation}
    This standard estimate  will be useful in the proof of~\cref{lemma:stability_gradient_sol_poisson} in the appendix
    (for a Poisson equation with a different right-hand side).
\end{remark}

\begin{remark}
It is instructive to write the counterpart of~\cref{lemma:asymptotic_variance}
for the estimator in the right-most column and second row of~\cref{table:importance_sampling_estimators},
which is based on evaluations of the solution to~\eqref{eq:mix} at discrete times:
\begin{equation}
    \label{eq:semidiscrete_estimator}
    \widetilde \mu_U^N :=
    \frac
    {\displaystyle \sum_{n=0}^{N-1} (f \e^U)(X_{n \tau})}
    {\displaystyle \sum_{n=0}^{N-1} (\e^U)(X_{n \tau})}.
\end{equation}
For this estimator,
we prove in~\cref{sec:discrete_asym_var} that,
under appropriate conditions including~\cref{assumption:as1},
\begin{equation}
    \label{eq:discrete_convergence}
    \sqrt{N} \bigl( \widetilde \mu^N_U(f) - I\bigr)
    \xrightarrow[N \to \infty]{\rm Law} \normal\left(0, \widetilde \sigma^2_f[U]\right),
\end{equation}
with now
\begin{equation}
    \label{eq:sigma_subsampled}
    \widetilde \sigma^2_f[U] = \frac{\Z{U}^2}{Z^2} \left( 2 \int_{\domain^d} \widetilde \phi_U (f-I) \, \e^U \, d\mu_{U} - \int_{\domain^d} \left\lvert (f-I) \, \e^U \right\rvert^2 \, \d \mu_U \right),
\end{equation}
where $\widetilde \phi_U$ is the unique solution in~$L^2_0(\mu_U)$ to
\begin{equation}
    \label{eq:discrete_poisson}
    - \widetilde {\mathcal L}_U \widetilde \phi_U = (f- I) \e^U,
     \qquad
     \widetilde {\mathcal L}_U := \e^{\tau \mathcal L_U} - \mathcal I.
\end{equation}
Here $\e^{t\mathcal L_U}$ denotes the Markov semigroup corresponding to the stochastic dynamics~\eqref{eq:mix}:
\[
    \left(\e^{t \mathcal L_U} \varphi\right) (x) = \expect \bigl(\varphi(X_t) \big| X_0 = x\bigr).
\]
The asymptotic variance~$\widetilde \sigma^2_f[U]$ converges to that for the i.i.d.\ setting
given in~\eqref{eq:asym_var_iid} in the limit as~$\tau \to \infty$,
and it diverges in the limit as~$\tau \to 0$.
The latter is not surprising as the correlation between successive samples increases in this limit.
However,
since formally $\tau \widetilde \phi_U \to \phi_U$ in $L^2_0(\mu_U)$ in the limit as $\tau \to 0$,
it holds that~%
\(
    \tau \widetilde \sigma^2_f[U] \xrightarrow[\tau \to 0]{} \sigma^2_f[U].
\)
\end{remark}

\subsection{Explicit optimal \texorpdfstring{$U$}{potential} in dimension one}
\label{sub:explicit_optimal_potential_in_dimension_one}
In the one-dimensional setting,
it is possible to write an explicit expression for the asymptotic variance~$\sigma^2_f[U]$,
from which an explicit lower bound on $\sigma^2_f[U]$ can be obtained.
Our strategy in this section is the following:
\begin{itemize}
    \item
        We first obtain an explicit expression for~$\sigma^2_f[U]$ for~$U \in \mathfrak U_0$ (\cref{lemma:asymvar_in_1d}),
        which we then rewrite in a different form~$\widehat \sigma^2_f[U]$ given in~\eqref{eq:explicit_sigma_1d}.

    \item
        We then observe that~$\widehat \sigma^2_f[U]$ is defined more generally for~$U \in \mathfrak U \supset \mathfrak U_0$,
        where~$\mathfrak U$ is an appropriate superset of~$\mathfrak U_0$,
        noting that~$\widehat \sigma^2_f[U]$ is not necessarily the asymptotic variance of~$\mu_U^T(f)$ for~$U \in \mathfrak U \setminus \mathfrak U_0$.

    \item
        Next, we show that~$\widehat \sigma^2_f$ admits an explicit minimizer~$U_*$ over~$\mathfrak U$,
        with associated minimum~$\sigma^*_f$.
        This is proved in~\cref{proposition:bound_asym_var_1d}.

    \item
        Finally,
        using the expression of~$U_*$,
        we prove that~$\sigma^*_f$ is the infimum of the actual asymptotic variance~$\sigma^2_f[U]$ over~$\mathfrak U_0$,
        and that this infimum can be approached within the class of smooth biasing potentials with compact support.
        This is the content of~\cref{proposition:sharpness}.
\end{itemize}

The main result of this section, \cref{proposition:sharpness},
and preceding auxiliary result, \cref{proposition:bound_asym_var_1d},
should be viewed as the counterparts in the MCMC setting of~\cref{proposition:background_infimum} and~\cref{lemma:preparatory_asymvar_iid} in the i.i.d.\ setting.

\begin{lemma}
    [Explicit expression for the asymptotic variance in dimension 1]
    \label{lemma:asymvar_in_1d}
    For $U \in \mathfrak U_0$ and in dimension $d=1$,
    the asymptotic variance \eqref{eq:asym_var} writes
    \begin{align}
        \sigma^2_f[U]
        \label{eq:asymvar_1d}
                  &= \frac{2\Z{U}^2}{Z^2} \int_{\domain} \left\lvert \bigl(F-A_{\domain}[U]\bigr) \, \e^{V+U} \right\rvert^2 \, \d \mu_{U},
    \end{align}
    where $F:\domain \rightarrow \real$ is given by
    \begin{equation*}
    F(x) = \int_0^x \bigl( f(\xi)-I \bigr) \e^{-V(\xi)}\d \xi,
    \end{equation*}
    and
    \begin{equation}
        \label{eq:A_domain}
        A_{\domain}[U] =
        \begin{cases}
            \displaystyle
            - \int_{-\infty}^{0} \bigl( f-I \bigr) \e^{-V} & \text{ if $\domain = \real$, } \\[.4cm]
            \displaystyle
            \frac{\displaystyle \int_{\torus} F \, \e^{V+U}}
            {\displaystyle \int_{\torus} \e^{V+U}} & \text{ if $\domain = \torus$. }
        \end{cases}
    \end{equation}
\end{lemma}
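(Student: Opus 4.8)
The plan is to solve the Poisson equation~\eqref{eq:poisson} explicitly in dimension one, exploiting that the generator $\mathcal L_U$ in~\eqref{eq:generator} is in divergence form, and then to substitute the resulting expression for $\nabla\phi_U$ into the formula~\eqref{eq:asym_var} for $\sigma^2_f[U]$.

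First, since $\phi_U$ is smooth by~\cref{lemma:asymptotic_variance}, the identity $-\mathcal L_U\phi_U = (f-I)\e^U$ can be divided by $\e^{V+U}$ and rewritten in the classical form $-\bigl(\e^{-V-U}\phi_U'\bigr)' = (f-I)\e^{-V}$. Integrating from $0$ to $x$ yields $\e^{-V(x)-U(x)}\phi_U'(x) = C - F(x)$ for some constant $C \in \real$, i.e.
\[
    \phi_U'(x) = \e^{V(x)+U(x)}\bigl(C - F(x)\bigr);
\]
here $F$ is well-defined, bounded and continuous on $\domain$ because $(f-I)\e^{-V} \in L^1(\domain)$ by~\cref{assumption:assumptions_V}, and when $\domain = \real$ it admits finite limits at $\pm\infty$.

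Next I would identify $C$, distinguishing the two cases. When $\domain = \torus$, periodicity of $\phi_U$ forces $\int_{\torus}\phi_U' = 0$, and integrating the displayed expression over $\torus$ gives directly $C = A_{\torus}[U]$ as in~\eqref{eq:A_domain}. When $\domain = \real$, the crucial observation is that $\int_{\real}(f-I)\e^{-V} = 0$ since $I = \mu(f)$; hence, using this mean-zero property, $F$ has the same limit at $+\infty$ and at $-\infty$, equal to $A_{\real}[U] = -\int_{-\infty}^0 (f-I)\e^{-V}$ by~\eqref{eq:A_domain}. I claim $C$ must equal this common value: indeed, $\phi_U \in H^1(\mu_U)$ requires $\int_{\real}(C-F)^2\,\e^{V+U} < \infty$, but if $C \neq F(+\infty)$ then $(C-F)^2$ is bounded below by a positive constant near $+\infty$, while $\int^{+\infty}\e^{V+U} = +\infty$ because $\e^{-V-U}$ is integrable --- by Cauchy--Schwarz on unit intervals, $\int_n^{n+1}\e^{V+U} \geq \bigl(\int_n^{n+1}\e^{-V-U}\bigr)^{-1} \to \infty$, so the series $\sum_n \int_n^{n+1}\e^{V+U}$ diverges. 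Together with the symmetric argument at $-\infty$, this yields $C = A_{\real}[U]$, and this value is admissible because~\cref{lemma:asymptotic_variance} already guarantees the existence of $\phi_U$.

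Finally, substituting $\nabla\phi_U = \phi_U' = \e^{V+U}\bigl(A_{\domain}[U] - F\bigr)$ into~\eqref{eq:asym_var} gives
\[
    \sigma^2_f[U] = \frac{2\Z{U}^2}{Z^2}\int_{\domain}\abs*{\e^{V+U}\bigl(A_{\domain}[U] - F\bigr)}^2\,\d\mu_U = \frac{2\Z{U}^2}{Z^2}\int_{\domain}\abs*{\bigl(F - A_{\domain}[U]\bigr)\e^{V+U}}^2\,\d\mu_U,
\]
which is~\eqref{eq:asymvar_1d}. The one genuinely delicate step is the determination of the constant $C$ in the case $\domain = \real$, where one must rule out every value except $A_{\real}[U]$ via the integrability of $\phi_U'$; on the torus the identification of $C$ is immediate from periodicity.
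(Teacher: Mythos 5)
Your proof is correct and follows essentially the same route as the paper: integrate the divergence-form Poisson equation once, identify the integration constant via periodicity on $\torus$ and via the $L^2(\mu_U)$-integrability of $\phi_U'$ on $\real$ (using the Cauchy--Schwarz argument to show $\int^{\infty}\e^{V+U}=\infty$), and substitute into~\eqref{eq:asym_var}. The only cosmetic difference is that you apply Cauchy--Schwarz on unit intervals rather than directly on the half-line $[K,\infty)$ as the paper does; both yield the same divergence.
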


Note that $A_{\real}[U]$ is independent of~$U$;
we shall henceforth drop the dependence in the notation.
It seems from~\eqref{eq:A_domain} that $A_{\real}$ and $A_{\torus}[U]$ have very different expressions.
In fact, the constant~$A_{\real}$ may be obtained as a limit of~$A_{\torus}[U]$ for an increasingly large torus;
see~\cref{remark:constant_A}.
\begin{proof}
    In dimension one, the Poisson equation~\eqref{eq:poisson} reads
    \begin{equation}
        \label{eq:1d_equation}
        -\e^{V+U} \left(\e^{-(V+U)} \phi_{U}'\right)'  = (f-I) \e^U.
    \end{equation}
    By integration of~\eqref{eq:1d_equation},
    it holds that
    \begin{equation}
        \label{eq:derivative_of_phi}
        \phi_{U}'(x) = -\left(\int_0^x \bigl( f(\xi)-I \bigr) \e^{-V(\xi)}\d \xi - A\right) \e^{(V+U)(x)}
        = -\bigl( F(x) - A \bigr)  \e^{(V+U)(x)},
    \end{equation}
    and so
    \begin{equation}
        \label{eq:phi_general_form}
        \phi_{U}(x) = B - \int_{0}^{x} \bigl( F(\xi) - A \bigr)  \e^{(V+U)(\xi)} \, \d \xi,
    \end{equation}
    for some constants $A \in \real$ and $B \in \real$.
    The requirement that $\phi_{U} \in H^1(\mu_{U})$ enables to determine the constant $A$:
    \begin{itemize}
        \item
            When $\domain = \torus$, the embedding~$H^1(\mu_{U}) \subset C(\torus)$ gives that
            $\phi_{U}(-\pi) = \phi_{U}(\pi)$, which leads to the equation for $A_{\torus}[U]$ in~\eqref{eq:A_domain}.

        \item
            When $\domain = \real$,
            the requirement that $\phi_{U} \in H^1(\mu_{U})$ implies that
            \begin{equation}
                \label{eq:constant_A_real}
                A
                = \lim_{x \to \infty} F(x),
            \end{equation}
            where
            \[
                \lim_{x \to \infty} F(x)
                = \int_{0}^{\infty} \bigl( f(x)-I \bigr) \e^{-V(x)} \, \d x
                = - \int_{-\infty}^{0} \bigl( f(x)-I \bigr) \e^{-V(x)} \, \d x,
            \]
            because otherwise~$\phi_{U}'$ in~\eqref{eq:derivative_of_phi} is not in $L^2(\mu_{U})$.
            Indeed, assume for contradiction that
            \[
                \lim_{x \to \infty} F(x)
                =: L \neq A.
            \]
            (The limit exists because $(f-I)\e^{-V} \in L^1(\real)$ by~\cref{assumption:assumptions_V}.)
            Then there is $K \in \real$ such that
            \[
                \inf_{x \geq K} \abs{F(x) - A} \geq \frac{1}{2} \abs{L - A},
            \]
            and so by~\eqref{eq:derivative_of_phi}, it holds that
            \[
                \int_{\real} \lvert \phi_U' \rvert^2 \, \e^{-(V+U)} \, \d x
                \geq \frac{1}{4} \abs{L - A}^2 \int_{K}^{\infty} \e^{V+U} \, \d x.
            \]
            The right-hand side of this equation is infinite because,
            by the Cauchy--Schwarz inequality,
            \[
                + \infty = \int_{K}^{\infty} \sqrt{\e^{V+U}} \sqrt{\e^{-V-U}} \, \d x
                \leq \int_{K}^{\infty} \e^{V+U} \, \d x \int_{K}^{\infty} \e^{-V-U} \, \d x.
            \]
    \end{itemize}
    Equation~\eqref{eq:asymvar_1d} is then obtained by substitution of~\eqref{eq:derivative_of_phi} in~\eqref{eq:asym_var}.
\end{proof}

\begin{remark}
    Once $A$ in~\eqref{eq:phi_general_form} has been determined,
    the value of $B$ can be obtained from the condition that $\phi_{U} \in L^2_0(\mu_{U})$ is mean-zero.
    This constant is not required for our purposes,
    because it cancels out in the formula~\eqref{eq:asymvar_1d} for the asymptotic variance,
    and so its explicit expression is omitted.
\end{remark}

For all~$U \in \mathfrak U_0$, the right-hand side of~\eqref{eq:asymvar_1d} coincides,
both when $\domain = \real$ and when $\domain = \torus$,
with
\begin{equation}
    \label{eq:explicit_sigma_1d}
    \widehat \sigma^2_f[U] :=
    \frac{2\Z{U}}{Z^2} \inf_{A \in \real} \int_{\domain} \abs{F - A}^2 \e^{V+U} .
\end{equation}
\begin{lemma}
    \label{remark:infimum_in_hat_sigma}
    For all~$U \in \mathfrak U_0$,
    it holds that~$\sigma^2_f[U] = \widehat \sigma^2_f[U]$.
\end{lemma}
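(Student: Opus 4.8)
The plan is to reduce the claim to a one-parameter minimization. First I would rewrite the expression~\eqref{eq:asymvar_1d} from~\cref{lemma:asymvar_in_1d} using $\d\mu_U = \Z{U}^{-1}\e^{-V-U}\,\d x$, so that the factors of $\e^{V+U}$ partly cancel and
\[
    \sigma^2_f[U] = \frac{2\Z{U}}{Z^2}\int_\domain \abs{F - A_\domain[U]}^2 \e^{V+U}.
\]
Comparing this with the definition~\eqref{eq:explicit_sigma_1d} of $\widehat\sigma^2_f[U]$, it then suffices to prove that $A_\domain[U]$ realizes the infimum $\inf_{A\in\real} \int_\domain \abs{F-A}^2 \e^{V+U}$, which I would do separately in the two cases $\domain = \torus$ and $\domain = \real$.

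In the periodic case $\domain = \torus$, since $\torus$ is compact and $F$, $V$, $U$ are continuous, the map $A \mapsto \int_\torus \abs{F-A}^2 \e^{V+U}$ is a genuine quadratic polynomial in $A$ with finite coefficients and strictly positive leading coefficient $\int_\torus \e^{V+U}$; its unique minimizer, found by differentiating in $A$, is exactly the quantity $A_\torus[U]$ defined in~\eqref{eq:A_domain}. In the case $\domain = \real$, I would recycle the computation already carried out in the proof of~\cref{lemma:asymvar_in_1d}: because $U \in \mathfrak U_0$, \cref{lemma:asymptotic_variance} furnishes a solution $\phi_U \in H^1(\mu_U)$ with $\phi_U'(x) = -(F(x)-A_\real)\e^{(V+U)(x)}$ by~\eqref{eq:derivative_of_phi}, whence $\int_\real \abs{F-A_\real}^2 \e^{V+U} = \Z{U}\,\norm{\phi_U'}_{L^2(\mu_U)}^2 < \infty$. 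For any other value $A \neq A_\real$, the same reverse-triangle-inequality and Cauchy--Schwarz argument as in that proof --- using $\lim_{x\to\infty}F(x) = A_\real$ together with $\int_K^\infty \e^{-V-U}\,\d x < \infty$, which forces $\int_K^\infty \e^{V+U}\,\d x = +\infty$ --- shows $\int_\real \abs{F-A}^2 \e^{V+U} = +\infty$. Hence the infimum over $A \in \real$ is finite and attained precisely at $A = A_\real = A_\domain[U]$, giving $\sigma^2_f[U] = \widehat\sigma^2_f[U]$.

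I do not expect a serious obstacle here; the only point requiring a little care is verifying, in the case $\domain = \real$, that the common value $\int_\real \abs{F-A_\real}^2 \e^{V+U}$ is actually finite, rather than the \emph{a priori} also conceivable situation in which every $A$ yields $+\infty$. This finiteness is precisely what membership of $U$ in $\mathfrak U_0$ buys us, through the $H^1(\mu_U)$-regularity of the solution $\phi_U$ to the Poisson equation~\eqref{eq:poisson}.
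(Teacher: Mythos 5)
Your proof is correct and follows essentially the same route as the paper's (which is a two-line argument: on $\real$ the infimum over $A$ is attained at $A_{\real}$ because the integral is infinite for every other $A$, and on $\torus$ it is attained at $A_{\torus}[U]$ because the weighted mean is the best constant approximation in $L^2(\e^{V+U})$). You simply spell out the details the paper leaves implicit — the quadratic minimization on the torus, the recycled divergence argument on $\real$, and the (not strictly necessary, since the infimum equals the value at $A_{\domain}[U]$ in either case) finiteness check via $\phi_U \in H^1(\mu_U)$.
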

\begin{proof}
    When~$\domain = \real$,
    the infimum in~\eqref{eq:explicit_sigma_1d} is achieved for~$A =  A_{\real}$,
    because the integral is infinite for any other value of~$A$.
    Likewise, when~$\domain = \torus$
    the infimum in~\eqref{eq:explicit_sigma_1d} is achieved for~$A =  A_{\torus}[U]$,
    because the mean under the probability measure proportional to~$\e^{V+U}$ is the approximation by a constant in the~$L^2(\e^{V+U})$ norm;
    see, for instance, \cite[Exercise 1.4.23]{MR2760872}.
\end{proof}

With the convention that $0 \cdot + \infty = + \infty \cdot 0 = 0$,
the right-hand side of~\eqref{eq:explicit_sigma_1d} makes sense as an element of~$\real\cup\{\infty\}$ for all~$U \in \mathfrak U$,
where $\mathfrak U = \mathfrak U(V, f)$ is the set of measurable functions~$U\colon \domain \to \real \cup \{ \infty \}$ such that
the following assumptions are satisfied:
\begin{itemize}
    \item
        It holds that $\e^{-V - U} \in L^1(\domain^d)$ and $Z[U] > 0$,
        so that $\mu_U$ is well defined.

    \item
        The condition~\eqref{eq:unbiased} is satisfied.
\end{itemize}
We emphasize that~$\mathfrak U$ is a proper superset of~$\mathfrak U_0$;
it contains elements which violate~\cref{assumption:as1}.
For biasing potentials not in~$\mathfrak U_0$,
the quantity~\eqref{eq:explicit_sigma_1d} is not in general an asymptotic variance.
In particular,
it is possible to construct examples where~$\sigma^2_f[U]$ is zero even though $\widehat \sigma^2_f[U] > 0$,
as we illustrate in~\cref{example:not_asymvar}.
\begin{example}
    \label{example:not_asymvar}
    Consider the setting where $V\colon \torus \to \real$ is zero and~$f\colon [-\pi,\pi] \to \real$ is given by
    \begin{equation}
        \label{eq:not_asymvar}
        f(x) =
        \begin{cases}
            \sgn(x) \qquad &\text{if $\abs{x} \geq \frac{\pi}{2}$}, \\
            0 \qquad &\text{otherwise}.
        \end{cases},
        \qquad
        \qquad \sgn(x) :=
        \begin{cases}
            1 & \text{ if $x > 0$,} \\
            0 & \text{ if $x = 0$,} \\
            -1 & \text{ if $x < 0$.} \\
        \end{cases}
    \end{equation}
    Here we identify~$[-\pi, \pi]$ with its image under the quotient map~$\real \to \torus$.
    If~$U$ is a potential such that (i)~there exists a unique strong solution to~\eqref{eq:mix} with initial condition~$X_0 = 0$
    and (ii) this solution satisfies~$X_t \in (-\pi/2, \pi/2)$ with probability 1 for all times,
    then~\eqref{eq:estimator} is a well defined estimator with zero asymptotic variance.
    However $\widehat \sigma^2_f[U] > 0$,
    which can be viewed from~\eqref{eq:explicit_sigma_1d} and is confirmed in~\cref{proposition:bound_asym_var_1d} hereafter.
\end{example}

Although $\widehat \sigma^2_f$ does not in general correspond to an asymptotic variance,
obtaining a bound from below on~$\widehat \sigma^2_f$ over~$\mathfrak U$ will be useful in order to motivate the proof of~\cref{proposition:sharpness},
just like~\cref{lemma:preparatory_asymvar_iid} proved useful for establishing~\cref{proposition:background_infimum} in the i.i.d.\ setting.

\begin{lemma}
    \label{proposition:bound_asym_var_1d}
    It holds that
    \begin{equation}
        \label{eq:1dsig}
        \min_{U \in \mathfrak U} \widehat \sigma^2_f[U] =
        \frac{2}{Z^2} \bigg(\int_{\domain} \bigl\lvert F(x) - A^*_{\domain} \bigr\rvert \d x \bigg)^2 =: \sigma^*_f,
    \end{equation}
    with $A^*_{\real} := A_{\real}$ and
    \begin{equation}
        \label{eq:equation_A}
        A^*_{\torus} := \sup \left\{ A \in \real : \int_{\torus} \sgn (F-A) \geq 0 \right\}.
    \end{equation}
    In addition, the infimum is achieved for
    \begin{equation}
        \label{eq:optu}
        U = U_*(x) := - V(x) -\log\abs*{F(x) - A^*_{\domain}} \in \mathfrak U.
    \end{equation}
    In this case
    \(
        \e^{-V-U_*} \propto \abs*{F(x) - A^*_{\domain}}.
    \)
\end{lemma}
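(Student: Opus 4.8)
The plan is to mimic the structure of the proof of \cref{lemma:preparatory_asymvar_iid} in the i.i.d.\ setting: first establish the lower bound $\widehat\sigma^2_f[U]\geq\sigma^*_f$ for all $U\in\mathfrak U$ via Cauchy--Schwarz, and then verify that equality holds for the explicit choice $U_*$ in~\eqref{eq:optu}. For the lower bound, fix $U\in\mathfrak U$ and assume $\widehat\sigma^2_f[U]<\infty$ (otherwise there is nothing to prove). Using~\eqref{eq:explicit_sigma_1d} with the particular choice $A=A^*_\domain$ inside the infimum, we have
\[
    \widehat\sigma^2_f[U]\leq\frac{2Z[U]}{Z^2}\int_\domain\abs{F-A^*_\domain}^2\e^{V+U},
\]
wait --- the infimum runs the other way, so instead I bound the infimum from below. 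The cleaner route: for \emph{any} $A\in\real$,
\[
    \frac{2Z[U]}{Z^2}\int_\domain\abs{F-A}^2\e^{V+U}
    =\frac{2}{Z^2}\Bigl(\int_\domain\e^{-V-U}\Bigr)\Bigl(\int_\domain\abs{F-A}^2\e^{V+U}\Bigr)
    \geq\frac{2}{Z^2}\Bigl(\int_\domain\abs{F-A}\Bigr)^2,
\]
by Cauchy--Schwarz. Taking the infimum over $A$ on the left and noting that $A\mapsto\int_\domain\abs{F-A}$ is minimized (by an elementary median argument; on $\torus$ this is exactly where the condition $\int_\torus\sgn(F-A)\geq0$ in~\eqref{eq:equation_A} enters) precisely at $A=A^*_\domain$ gives $\widehat\sigma^2_f[U]\geq\sigma^*_f$.

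For the upper bound, I substitute $U=U_*$ directly into~\eqref{eq:explicit_sigma_1d}. With $\e^{-V-U_*}\propto\abs{F-A^*_\domain}$, equivalently $\e^{V+U_*}\propto\abs{F-A^*_\domain}^{-1}$, one computes
\[
    \int_\domain\abs{F-A}^2\e^{V+U_*}\propto\int_\domain\frac{\abs{F-A}^2}{\abs{F-A^*_\domain}},
    \qquad
    Z[U_*]=\int_\domain\e^{-V-U_*}\propto\int_\domain\abs{F-A^*_\domain},
\]
so that choosing $A=A^*_\domain$ in the infimum collapses the first integral to $\int_\domain\abs{F-A^*_\domain}$, and the product telescopes to $\sigma^*_f$. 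One must check that the infimum over $A$ is indeed attained at $A^*_\domain$ for this specific $U_*$; but for any other $A$ the Cauchy--Schwarz bound above is strict unless $\abs{F-A}/\abs{F-A^*_\domain}$ is a.e.\ constant, which fails, so $A^*_\domain$ is optimal. I also need to check $U_*\in\mathfrak U$: integrability of $\e^{-V-U_*}\propto\abs{F-A^*_\domain}$ holds since $F$ is bounded on $\torus$ and, on $\real$, $F-A_\real\to0$ at $+\infty$ and $F-A_\real\to -2A_\real$... actually $F(x)-A_\real\to 0$ as $x\to+\infty$ and $F(x)-A_\real\to -(\lim_{+\infty}F - \lim_{-\infty}F)\cdot(\text{sign})$; in any case $F$ is bounded and continuous, and one needs $\int_\real\abs{F-A_\real}<\infty$, which requires a decay argument --- here I would invoke that $(f-I)\e^{-V}\in L^1$ forces $F-A_\real$ to decay, combined with $\e^{-V}\in L^1$; this is the one place where some care with the $\domain=\real$ tails is needed. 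Finally, the condition~\eqref{eq:unbiased} for $U_*$ holds because $\supp(\mu_{U_*})=\domain$ up to the zero set of $F-A^*_\domain$, and the defining property of $A^*_\domain$ (in the torus case) or $A_\real$ (on the line, where it equals $\lim_{+\infty}F$) ensures $\int_{\supp(\mu_{U_*})}(f-I)\d\mu=0$ --- on $\real$ this is immediate from $F' = (f-I)\e^{-V}$ and $F-A_\real\to 0$.

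The main obstacle I anticipate is twofold: (i) pinning down that $A^*_\domain$ as defined in~\eqref{eq:equation_A} is genuinely the minimizer of $A\mapsto\int_\domain\abs{F-A}$ --- the $\sup$ in~\eqref{eq:equation_A} is selecting a specific median of the pushforward of Lebesgue measure under $F$, and one should verify the subdifferential condition $0\in\partial_A\int\abs{F-A}$, i.e.\ $\int\sgn(F-A^*_\domain)\ni 0$ in the appropriate one-sided sense; and (ii) the $\domain=\real$ integrability and unbiasedness bookkeeping for $U_*$, since there the measure $\abs{F-A_\real}\,\d x$ must be shown finite and the support condition~\eqref{eq:unbiased} must be re-derived from the structure of $F$. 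Both are routine but need to be done carefully; the Cauchy--Schwarz core of the argument is straightforward.
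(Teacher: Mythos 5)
Your proposal is correct and follows essentially the same route as the paper's proof: the Cauchy--Schwarz bound $\frac{2\Z{U}}{Z^2}\int_{\domain}\abs{F-A}^2\e^{V+U}\geq\frac{2}{Z^2}\bigl(\int_{\domain}\abs{F-A}\bigr)^2$ for each fixed $A$, identification of $A^*_{\domain}$ as the minimizer of $A\mapsto\int_{\domain}\abs{F-A}$ (median on $\torus$, finiteness forcing $A=A_{\real}$ on $\real$), and direct substitution of $U_*$ for the matching upper bound. Your worry about whether the infimum over $A$ is attained at $A^*_{\domain}$ for $U=U_*$ is superfluous --- one only needs $\widehat\sigma^2_f[U_*]\leq$ (value at $A=A^*_{\domain}$) $=\sigma^*_f$ together with the general lower bound --- and the integrability and unbiasedness checks for $U_*\in\mathfrak U$ that you flag are handled no more explicitly in the paper than in your sketch.
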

\begin{proof}
    It is sufficient to show that, for all $A \in \real$,
    \begin{equation}\label{eq:foralla}
        \frac{2\Z{U}}{Z^2} \int_{\domain} \abs{F - A}^2 \e^{V+U}
        \geq \frac{2}{Z^2} \bigg(\int_{\domain} \bigl\lvert F(x) - A^*_{\domain} \bigr\rvert \, \d x \bigg)^2.
    \end{equation}
    If the left-hand side of~\eqref{eq:foralla} is infinite,
    then the inequality is trivially satisfied.
    On the other hand,
    if the left-hand side is finite,
    in which case the set on which~$\abs{F-A}^2 \e^{V+U} = \infty$ is of measure zero,
    then we have by the Cauchy--Schwarz inequality that
    \begin{align*}
        \frac{2\Z{U}}{Z^2} \int_{\domain} \abs{F- A}^2 \e^{V+U}
        &= \frac{2}{Z^2} \int_{\domain} \e^{-V-U} \int_{\domain} \abs{F-A}^2 \e^{V+U}
        \geq \frac{2}{Z^2} \left( \int_{\domain} \abs{F - A}  \right)^2.
    \end{align*}
    In the case $\domain = \real$,
    the right-hand side is finite only if $A = A_{\real}$,
    which leads to~\eqref{eq:1dsig}.
    In the case~$\domain = \torus$,
    the inequality~\eqref{eq:1dsig} is obtained by noting that
    \[
        \int_{\domain} \bigl\lvert F - A \bigr\rvert
        = \expect \bigl\lvert F(X) - A \bigr\rvert,
    \]
    where $X \sim \mathcal U(\torus)$ is a random variable uniformly distributed on the torus.
    It is well known,
    see for example~\cite[Exercise 1.4.23]{MR2760872},
    that the expectation on the right-hand side is minimized for any $A$ that is a median of~$F(X)$.
    Here~$F$ is continuous, so the median of $F(X)$ is unique and given by~$A^*_{\torus}$,
    which implies that~\eqref{eq:1dsig} holds.

    The fact that the lower bound is achieved for~$U$ in~\eqref{eq:optu}
    follows from the inequality
    \begin{align*}
        \widehat \sigma^2_f[U_*]
        &= \frac{2\Z{U_*}}{Z^2} \inf_{A \in \real} \int_{\domain} \left\lvert F-A  \right\rvert^2  \e^{V+U_*}  \\
          &\leq \frac{2\Z{U_*}}{Z^2} \int_{\domain} \left\lvert F - A_\domain^* \right\rvert^2 \e^{V+U_*}
          = \frac{2}{Z^2} \left( \int_{\domain} \left\lvert F - A_\domain^* \right\rvert \right)^2,
    \end{align*}
    which concludes the proof.
\end{proof}

\begin{remark}
    \label{remark:domain_divided}
    The singularities in the biasing potential~$U_*$ coincide with zeros of the function $F(x) - A^*_{\domain}$.
    Consider for simplicity the case where $\domain = \real$.
    If~$x_*$ denotes a zero of the function $F(x) - A_{\real}^*$,
    then it holds by definition of~$F(x)$ that
    \[
        0 = F(x_*) - A^*_{\real}
        = \int_{-\infty}^{x_*} \bigl(f(x) - I\bigr) \e^{-V(x)} \, \d x.
    \]
    Rearranging this equation,
    we obtain
    \[
        \frac
        {\displaystyle \int_{-\infty}^{x_*} f(x) \, \e^{-V(x)} \, \d x}
        {\displaystyle \int_{-\infty}^{x_*}  \e^{-V(x)} \, \d x} = I.
    \]
    In other words,
    the average of $f$ with respect to the measure~$\mu$ restricted to~$[-\infty,x_*]$
    coincides with its average with respect to~$\mu$ over the real line.
    When singular, the biasing potential~\eqref{eq:optu} effectively divides the domain into regions that suffice for the estimation of~$I$.
    Several numerical experiments illustrating this behavior are presented in~\cref{sec:examples_and_numerical_experiments}.
\end{remark}

\begin{remark}
    Equation~\eqref{eq:equation_A} implies that~$A_{\torus}^*$ is the median associated with $F(X)$,
    where $X$ is a random variable with uniform distribution over~$\torus$.
    Just as $A_{\real}$ is obtained as a limit of~$A_{\torus}$ for an increasingly large torus,
    so too $A_{\real}^* = A_{\real}$ is recovered as a limit of~$A^*_{\torus}$;
    see~\cref{remark:constant_A}.
\end{remark}

The potential~$U_*$ defined by~\eqref{eq:optu} does not necessarily satisfy~\cref{assumption:as1},
and the measure $\mu_{U_*}$ may not have full support.
However, regularizing~$U_*$ enables to show that~$\sigma^*_f$ is the \emph{infimum} of the asymptotic variance~$\sigma^2_f[U]$,
not only over~$\mathfrak U_0$,
but also over the smaller subset $C^{\infty}_{\rm c} \subset \mathfrak U_0$ of smooth and compactly supported biasing potentials.
This is the content of the following result.
\Cref{table:comparison} after the proof summarizes the main results obtained in this section
and presents a comparison with the i.i.d.\ setting.
\begin{proposition}
    \label{proposition:sharpness}
    Suppose that $0 \in \mathfrak U_0$.
    Then,
    \begin{itemize}
        \item
            It holds that
            \(
                \inf_{U \in \mathfrak U_0} \sigma^2_f[U] = \sigma^*_f.
            \)

        \item
            It holds that
            \(
                \inf_{U \in C^{\infty}_{\rm c}(\domain)} \sigma^2_f[U] = \sigma^*_f.
            \)
    \end{itemize}
\end{proposition}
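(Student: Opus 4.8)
The plan is to obtain both identities by combining a lower bound valid on all of $\mathfrak U$ with the construction of a recovery sequence inside $C^\infty_{\rm c}(\domain)$. For the lower bound, I would invoke \cref{remark:infimum_in_hat_sigma} and \cref{proposition:bound_asym_var_1d}: for every $U \in \mathfrak U_0$ one has $\sigma^2_f[U] = \widehat\sigma^2_f[U] \geq \min_{\mathfrak U}\widehat\sigma^2_f = \sigma^*_f$, so $\inf_{\mathfrak U_0}\sigma^2_f \geq \sigma^*_f$. Since $0 \in \mathfrak U_0$ forces $f \in L^2(\mu)$ and a Poincaré inequality for $\mu$, every $U \in C^\infty_{\rm c}(\domain)$ is a bounded perturbation of $0$, so $\mu_U$ inherits the Poincaré inequality (Holley--Stroock) and the remaining items of \cref{assumption:as1} are immediate; hence $C^\infty_{\rm c}(\domain) \subset \mathfrak U_0$ and $\inf_{C^\infty_{\rm c}(\domain)}\sigma^2_f \geq \sigma^*_f$ as well. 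It then remains to build $(U_\varepsilon) \subset C^\infty_{\rm c}(\domain)$ with $\sigma^2_f[U_\varepsilon] = \widehat\sigma^2_f[U_\varepsilon] \to \sigma^*_f$, after which both infima are pinched between $\sigma^*_f$ and $\inf_{C^\infty_{\rm c}(\domain)}\sigma^2_f = \sigma^*_f$.

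The recovery sequence will be a regularization of the explicit optimizer $U_* = -V - \log\abs{F - A^*_\domain}$ from \cref{proposition:bound_asym_var_1d}, in the spirit of the third item of \cref{proposition:background_infimum}. On the torus, $C^\infty_{\rm c}(\torus) = C^\infty(\torus) = \mathfrak U_0$ and no truncation is needed: set $U_\varepsilon := -V - \log\abs{F - A^*_\torus}_\varepsilon$, with $\abs{\,\cdot\,}_\varepsilon$ the mollified absolute value of \eqref{eq:mollification}, so that $\e^{-V-U_\varepsilon} = \abs{F - A^*_\torus}_\varepsilon$ is smooth and bounded below by $\abs{0}_\varepsilon > 0$, whence $U_\varepsilon \in C^\infty(\torus)$. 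Using $A = A^*_\torus$ in \eqref{eq:explicit_sigma_1d} and $\e^{V+U_\varepsilon} = 1/\abs{F-A^*_\torus}_\varepsilon$,
\[
    \widehat\sigma^2_f[U_\varepsilon] \leq \frac{2\,\Z{U_\varepsilon}}{Z^2}\int_\torus \frac{\abs{F-A^*_\torus}^2}{\abs{F-A^*_\torus}_\varepsilon}
    \xrightarrow[\varepsilon\to 0]{} \frac{2}{Z^2}\left(\int_\torus \abs{F-A^*_\torus}\right)^2 = \sigma^*_f ,
\]
where both $\Z{U_\varepsilon} = \int_\torus\abs{F-A^*_\torus}_\varepsilon$ and the displayed integral converge to $\int_\torus\abs{F-A^*_\torus}$ by dominated convergence on the compact torus; combined with $\widehat\sigma^2_f[U_\varepsilon] \geq \sigma^*_f$, this yields $\widehat\sigma^2_f[U_\varepsilon] \to \sigma^*_f$.

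On the real line, a truncation to compact support is required. Write $G := F - A_{\real}$ and recall $A^*_{\real} = A_{\real}$. Fix a smooth $\psi$ with $\psi \equiv 0$ on $[-1,1]$, $\psi \equiv 1$ off $[-2,2]$, and $0 \leq \psi \leq 1$; put $\psi_R := \psi(\,\cdot\,/R)$, and define $U_{R,\delta}$ through
\[
    \e^{-V-U_{R,\delta}} := (1-\psi_R)\,\abs{G}_\delta + \psi_R\,\e^{-V} .
\]
The right-hand side is smooth, strictly positive, and equal to $\e^{-V}$ outside $[-2R,2R]$, so $U_{R,\delta} \in C^\infty_{\rm c}(\real) \subset \mathfrak U_0$ (its unbiasedness being automatic since $\mu_{U_{R,\delta}}$ has full support and $\mu(f) = I$). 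By Cauchy--Schwarz as in \cref{proposition:bound_asym_var_1d}, with $A = A_{\real}$ in \eqref{eq:explicit_sigma_1d},
\[
    \widehat\sigma^2_f[U_{R,\delta}] \leq \frac{2}{Z^2}\left(\int_\real \e^{-V-U_{R,\delta}}\right)\left(\int_\real \abs{G}^2\,\e^{V+U_{R,\delta}}\right) .
\]
In the first factor, $\int\psi_R\e^{-V} \to 0$ by dominated convergence, and $\int(1-\psi_R)\abs{G}_\delta \to \int_\real\abs{G}$ since $\int(1-\psi_R)\abs{G} \to \int_\real\abs{G}$ while the mollification defect obeys $\int(1-\psi_R)(\abs{G}_\delta - \abs{G}) \leq \abs{0}_\delta\,\lvert\{\abs{G}<\delta\}\cap[-2R,2R]\rvert \leq C\,\delta R$, which vanishes provided we let $\delta\to 0$, $R\to\infty$ with $\delta R \to 0$. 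In the second factor, on $[-R,R]$ the integrand equals $\abs{G}^2/\abs{G}_\delta \in [0,\abs{G}]$ and tends pointwise to $\abs{G}$, so $\int_{[-R,R]} \to \int_\real\abs{G}$; on $\{\abs{x}>R\}$, bounding the denominator below by $\tfrac12\abs{G}_\delta$ or by $\tfrac12\e^{-V}$ according to whether $\psi_R \leq \tfrac12$ or $\psi_R \geq \tfrac12$ gives $\abs{G}^2\e^{V+U_{R,\delta}} \leq 2\abs{G} + 2\abs{G}^2\e^{V}$, so $\int_{\{\abs{x}>R\}}\abs{G}^2\e^{V+U_{R,\delta}} \to 0$ because $\abs{G}$ and $\abs{G}^2\e^{V}$ are in $L^1(\real)$. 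Hence both factors converge to $\int_\real\abs{G}$, so $\widehat\sigma^2_f[U_{R,\delta}] \to \tfrac{2}{Z^2}\bigl(\int_\real\abs{G}\bigr)^2 = \sigma^*_f$, which together with the lower bound completes the proof.

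The one genuinely delicate point is the real-line truncation: since $U_{R,\delta}$ must vanish outside a compact set, $\e^{-V-U_{R,\delta}}$ reverts to $\e^{-V}$ at infinity, and one must verify that the truncation region is asymptotically negligible. This relies on exactly the two finiteness facts encoded in $0 \in \mathfrak U_0$ — that $\int_\real\abs{F-A_{\real}} < \infty$ (equivalently $\sigma^*_f < \infty$, i.e.\ $\abs{G}\in L^1$) and that $\int_\real\abs{F-A_{\real}}^2\e^{V} = \tfrac{Z}{2}\sigma^2_f[0] < \infty$ by the stability estimate \eqref{eq:stability_estimate} — together with the coupling $\delta R \to 0$, which suppresses the spurious $O(\abs{0}_\delta R)$ mass that the mollifier injects near the (possibly non-isolated) zeros of $F - A_{\real}$.
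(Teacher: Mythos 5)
Your proposal is correct, and its skeleton is the same as the paper's: the lower bound comes from \cref{remark:infimum_in_hat_sigma} together with \cref{proposition:bound_asym_var_1d}, and the matching upper bound comes from a recovery sequence obtained by mollifying the singular optimizer $U_*$; on the torus your sequence and estimates coincide with the paper's. The only substantive difference is the real-line truncation. The paper interpolates \emph{geometrically}, setting $\e^{-V-U_{\varepsilon}} = \e^{(\chi_{\varepsilon}-1)V}\,\abs{F-A^*_{\real}}_{\varepsilon}^{\chi_{\varepsilon}}$ with a single parameter (the cutoff radius $\ell_{\varepsilon}=\varepsilon^{-1/2}-1$ being slaved to the mollification scale so that $\varepsilon\ell_{\varepsilon}\to 0$), and controls the resulting integrands via convexity of the exponential. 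You interpolate \emph{arithmetically}, $\e^{-V-U_{R,\delta}} = (1-\psi_R)\abs{G}_{\delta} + \psi_R \e^{-V}$, with two decoupled parameters linked by $\delta R\to 0$, and control the second factor by lower-bounding the convex combination by half of one of its terms. Both devices rest on exactly the same two finiteness facts supplied by $0\in\mathfrak U_0$, namely $\abs{G}\in L^1(\real)$ and $\abs{G}^2\e^{V}\in L^1(\real)$; yours arguably makes the pointwise bounds slightly more transparent, while the paper's keeps a single regularization parameter. Two cosmetic remarks: the inequality $\widehat\sigma^2_f[U_{R,\delta}]\leq \frac{2\Z{U_{R,\delta}}}{Z^2}\int\abs{G}^2\e^{V+U_{R,\delta}}$ is just the choice $A=A_{\real}$ in the infimum of~\eqref{eq:explicit_sigma_1d}, not an application of Cauchy--Schwarz; and in the statement ``$\int_{[-R,R]}\to\int_{\real}\abs{G}$'' you should make the joint limit explicit, e.g.\ by bounding $\abs{G}-\abs{G}^2/\abs{G}_{\delta}$ by $\abs{G}\mathds 1_{\{\abs{G}<\delta\}}$ and invoking dominated convergence uniformly in $R$.
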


\begin{proof}
    Since~$\sigma^2_f[U] = \widehat \sigma^2_f[U]$ for~$U \in \mathfrak U_0$ and $C^{\infty}_{\rm c}(\domain^d) \subset \mathfrak U_0$,
    \cref{proposition:bound_asym_var_1d} and the second statement of~\cref{proposition:sharpness} imply the first statement.
    In order to prove the second statement,
    we use the same notation in this proof as in~\cref{proposition:background_infimum}.
    Let~$U_{\varepsilon}\colon \domain \to \real$ be the smooth biasing potential given by
    \[
        U_{\varepsilon}(x) = \begin{cases}
            - \chi_{\varepsilon}(x) \Bigl(V(x) + \log \bigl\lvert F(x) - A^*_{\real} \bigr\rvert_{\varepsilon}\Bigr) & \textrm{if } \domain = \real,\\
            - V(x) - \log \bigl\lvert F(x) - A^*_{\torus} \bigr\rvert_{\varepsilon} & \textrm{if } \domain = \torus,
        \end{cases}
    \]
    where for $\varepsilon \in (0, 1)$, $\chi_{\varepsilon} = \varrho \star \mathds 1_{[-\ell_{\varepsilon}, \ell_{\varepsilon}]}$ with $\ell_{\varepsilon} = \varepsilon^{-1/2} - 1$.
    (This choice of~$\ell_{\varepsilon}$ enables to write the bound in~\eqref{eq:motivation_choice_ell} below.)
    The probability distribution $\mu_{U_{\varepsilon}}$,
    with density proportional to~$\e^{-V-U_{\varepsilon}}$,
    satisfies a Poincaré inequality.
    This follows from the fact that $\e^{-V-U_{\varepsilon}}$ is uniformly bounded from below when~$\domain = \torus$,
    and from the classical Holley--Stroock biasing argument when~$\domain = \real$;
    see~\cite{MR893137}, as for example reviewed in~\cite[Theorem~2.11]{MR3509213}.
    Therefore~$U_{\varepsilon} \in \mathfrak U_0$ and so,
    by \cref{lemma:asymvar_in_1d},
    the associated asymptotic variance is given by
    \begin{align}
        \sigma^2_f[U_{\varepsilon}]
        &= \frac{2\Z{U_{\varepsilon}}}{Z^2} \int_{\domain} \bigl(F-A_{\varepsilon}\bigr)^2 \, \e^{V+U_{\varepsilon}},
    \label{eq:epsig0}
    \end{align}
    where $A_{\varepsilon} := A_{\domain}[U_{\varepsilon}]$ is given by~\eqref{eq:A_domain}
    with~$U = U_{\varepsilon}$.
    We now prove, separately for~$\domain = \torus$ and~$\domain = \real$,
    that
    \[
        \lim_{\varepsilon \to 0} \Z{U_{\varepsilon}} \to \int_{\domain} \abs{F - A^*_{\domain}}
        \qquad \text{ and } \qquad
        \limsup_{\varepsilon \to 0} \int_{\domain} \bigl(F-A_{\varepsilon}\bigr)^2 \, \e^{V+U_{\varepsilon}}
        \leq \int_{\domain} \bigl\lvert F(x) - A^*_{\domain} \bigr\rvert \, \d x.
    \]
    Given these results,
    taking the limit superior as $\varepsilon \to 0$ in~\eqref{eq:epsig0} gives
    \[
        \limsup_{\varepsilon \to 0}\sigma^2_f[U_{\varepsilon}]
        \leq \frac{2}{Z^2} \left(\int_{\domain} \bigl\lvert F(x) - A^*_{\domain} \bigr\rvert \, \d x\right)^2.
    \]
    Since the right-hand side is a lower bound on~$\sigma^2_f[U_{\varepsilon}] = \widehat \sigma^2_f[U_{\varepsilon}]$ by~\eqref{eq:1dsig},
    the result will be proved.

    \paragraph{Case $\domain = \torus$.}
    In this setting,
    it holds by dominated convergence,
    with an argument similar to the one used to prove the third item of~\cref{proposition:background_infimum},
    that
    \[
        \Z{U_{\varepsilon}}
        = \int_{\torus} \bigl\lvert F - A^*_{\torus} \bigr\rvert_{\varepsilon}
        \xrightarrow[\varepsilon \to 0]{} \int_{\torus} \bigl\lvert F - A^*_{\torus} \bigr\rvert.
    \]
    In addition, since~$A_{\varepsilon}$ is the average of $F$ over~$\torus$ with respect to the probability measure with density proportional to~$\e^{V + U_{\varepsilon}}$,
    it holds for all $\varepsilon \in (0, 1)$ that
    \begin{align*}
        \int_{\torus} \bigl\lvert F-A_{\varepsilon}\bigr \rvert^2 \, \e^{V+U_{\varepsilon}}
    &= \inf_{C \in \real} \int_{\torus} \bigl\lvert F - C\bigr \rvert^2 \, \e^{V+U_{\varepsilon}} \\
    &\leq \int_{\torus} \bigl\lvert F - A_{\torus}^* \bigr \rvert^2 \, \e^{V+U_{\varepsilon}}
    = \int_{\torus} \frac{\bigl\lvert F - A_{\torus}^* \bigr \rvert^2}{\bigl\lvert F - A_{\torus}^* \bigr \rvert_{\varepsilon}} \leq \int_{\torus} \bigl\lvert F - A_{\torus}^* \bigr \rvert,
    \end{align*}
    which enables to conclude.

    \paragraph{Case $\domain = \real$.}

    The numerator of the fraction on the right-hand side of~\eqref{eq:epsig0} can be written as
    \begin{align}
        \Z{U_{\varepsilon }}
    &= \int_{\real} \exp\Big( -V(x) + \chi_{\varepsilon}(x) \left(V(x) + \log\bigl\lvert F(x) - A^*_{\real} \bigr\rvert_{\varepsilon}\right) \Big) \, \d x \nonumber\\
    &= \int_{\real} \exp \bigl( (\chi_{\varepsilon} - 1 )V \bigr)\abs{F - A^*_{\real}}_{\varepsilon}^{\chi_{\varepsilon}}. \label{eq:integrand}
    \end{align}
    By convexity of the exponential function,
    it holds that
    \begin{align*}
        \nonumber
        \exp \bigl( (1 - \chi_{\varepsilon}) (-V) + \chi_{\varepsilon} \log \abs{F - A^*_{\real}}_{\varepsilon} \bigr)
    &\leq (1 - \chi_{\varepsilon}) \e^{-V} + \chi_{\varepsilon} \abs{F - A^*_{\real}}_{\varepsilon} \\
    \nonumber
    &\leq \e^{-V} + \chi_{\varepsilon} \bigl(\abs{F - A^*_{\real}} + \varepsilon\bigr) \\
    \label{eq:convexity_exponential}
    &\leq \e^{-V} + \abs{F - A^*_{\real}} + \varepsilon \chi_{\varepsilon}.
    \end{align*}
    Since $\varepsilon = (\ell_{\varepsilon} + 1)^{-2}$,
    all three terms on the right-hand side are dominated by an integrable function over~$\real$ independent of~$\varepsilon$,
    because
    \begin{equation}
        \label{eq:motivation_choice_ell}
        \forall x \in  \real, \qquad
        \varepsilon \chi_{\varepsilon}(x)
        \leq \frac{\mathds 1_{[-\ell_{\varepsilon}- 1, \ell_{\varepsilon}+ 1]}(x)}{(\ell_{\varepsilon}+ 1)^2}
        \leq \min \left\{1, \frac{1}{x^2} \right\}.
    \end{equation}
    Therefore,
    by dominated convergence,
    we deduce from~\eqref{eq:integrand} that $\Z{U_{\varepsilon}} \to \int_{\real} \abs{F - A^*_{\real}}$ in the limit as~$\varepsilon = (\ell_{\varepsilon} + 1)^{-2} \to 0$.
    For the integral on the right-hand side of~\eqref{eq:epsig0},
    noting that $A_{\varepsilon} = A^*_{\real}$ and recalling that this constant is independent of~$U$,
    we have that
    \begin{align}
        \notag
        \int_{\real} ( F - A_{\varepsilon} )^2 \e^{V+U_{\varepsilon}}
    &= \int_{\real} \abs{F - A^*_{\real}}^{2} \exp\Bigl((1 - \chi_{\varepsilon}) V + \chi_\varepsilon \bigl(- \log \abs{F - A^*_{\real}}_{\varepsilon}\bigr) \Bigr)  \\
    \notag
    &\leq \int_{\real} \abs{F - A^*_{\real}}^{2} \left( (1 - \chi_{\varepsilon}) \e^V + \frac{\chi_\varepsilon}{\abs{F - A^*_{\real}}_{\varepsilon}} \right) \\
    \label{eq:rhs_infimum}
    &\leq \int_{\real \setminus B_{\ell_{\varepsilon}-1}} \abs{F - A^*_{\real}}^{2} \e^V
    + \int_{\real} \abs{F  -  A^*_{\real}},
    \end{align}
    where we used the convexity of the exponential function and the notation $B_{\ell_{\varepsilon}-1} = [-\ell_{\varepsilon}+1, \ell_{\varepsilon}-1]$.
    Since~$\sigma_f^2[0]$ is finite, so is~$\int_{\real} \abs{F - A^*_{\real}}^2 \e^V$,
    implying that the first term on the right-hand side of~\eqref{eq:rhs_infimum} converges to 0 in the limit $\varepsilon \to 0$.
\end{proof}

\begin{table}[ht]
    \centering
    \begin{tabular}{|c|c|c|c|}
         \hline
         \phantom{$\Big($}
         & Minimal assumptions & ``Nice'' biasing potentials & $C^{\infty}_{\rm c}$
         \\ \hline
         i.i.d.\ setting
         (any $d$)
         & \phantom{$\frac{\Bigg(}{\Bigg(}$}\hspace{-4mm}
            \(
                \displaystyle
                \begin{aligned}
                    &\inf_{U \in \mathcal U} s^2_f[U] = 0
                \end{aligned}
            \)
         &
            \(
                \displaystyle
                \inf_{U \in \mathcal U_0} s^2_f[U] = s^*_f
            \)
        &
            \(
                \displaystyle
                \inf_{U \in C^{\infty}_{\rm c}(\domain^d)} s^2_f[U] = s^*_f
            \)
         \\ \hline
         MCMC setting
         ($d=1$)
         &  \phantom{$\frac{\Bigg(}{\Bigg(}$}\hspace{-4mm}
            \(
                \displaystyle
                \begin{aligned}
                    &\inf_{U \in \mathfrak U} \sigma^2_f[U] \leq \sigma^*_f
                \end{aligned}
            \)
         &
            \(
                \displaystyle
                \inf_{U \in \mathfrak U_0} \sigma^2_f[U] = \sigma^*_f
            \)
        &
            \(
                \displaystyle
                \inf_{U \in C^{\infty}_{\rm c}(\domain)} \sigma^2_f[U] = \sigma^*_f
            \)
         \\ \hline
    \end{tabular}
    \caption{%
        Summary of the main results obtained in~\cref{sub:explicit_optimal_potential_in_dimension_one},
        and comparison with the corresponding results in the i.i.d.\ setting obtained in~\cref{sub:iid}.
        See~\eqref{eq:def_infimum_iid} and~\eqref{eq:1dsig} respectively for the definitions of~$s_f^*$ and~$\sigma_f^{*}$.
        Concerning the infimum of the asymptotic variance under minimal assumptions in the MCMC setting and $d = 1$,
        we showed in~\cref{example:not_asymvar} that
        there are cases where $\sigma_f^* > 0$ but the minimum over~$\mathfrak U$ of the asymptotic variance for the~MCMC estimator is 0.
        It may be possible to prove,
        by extending the reasoning in the proof of~\cref{proposition:background_infimum},
        that the infimum of the asymptotic variance of the MCMC estimator~\eqref{eq:estimator} is~0 for any potential~$V$ and observable~$f$ satisfying~\cref{assumption:assumptions_V},
        but we do not address this question here.
    }
    \label{table:comparison}
\end{table}

\begin{remark}
    The results in~\cref{proposition:sharpness} parallel the second and third items in~\cref{proposition:background_infimum}.
    We do not aim at rigorously establishing an analogue of the first item in~\cref{proposition:background_infimum},
    which would require analyzing the well-posedness of~\eqref{eq:mix} and the properties of the estimator~\eqref{eq:estimator} when the biasing potential is irregular and unbounded.
\end{remark}

\begin{remark}
    Since $U_* \in \mathfrak U$ defined in~\eqref{eq:optu} is a minimizer of~$\widehat \sigma^2_f$,
    and since the lower bound~$\sigma^*_f$ on $\sigma^2_f[U]$ may be approached by regularizing this potential,
    we often refer to~$U_*$ as \emph{the optimal biasing potential}.
    This is, of course, a slight abuse of terminology given that~$U_*$ is not in general a minimizer of the actual asymptotic variance~$\sigma^2_f[U]$,
    neither on~$\mathfrak U$ (because an asymptotic variance smaller than~$\sigma^*_f$ can sometimes be achieved) nor on~$\mathfrak U_0$ (because it does not hold that $U_* \in \mathfrak U_0$ in general).
\end{remark}

\begin{remark}
    To conclude this section, we note that the parallel between the i.i.d.\ and MCMC settings is not perfect:
    while $\sigma^2_f[U]$ and $\widehat \sigma^2_f[U]$ coincide for~$U \in \mathfrak U_0$ in the MCMC setting,
    $s^2_f[U]$ and $\widehat s^2_f[U]$ do not generally coincide for~$U \in \mathcal U_0$ in the i.i.d. setting.
\end{remark}

\subsection{Optimal \texorpdfstring{$U$}{potential} in the multi-dimensional setting via steepest descent}%
\label{sub:optimal_solution_in_the_multi_dimensional_setting}

In the multi-dimensional setting,
obtaining an explicit expression for the optimal biasing~$U$ is not possible.
However, analogously to~\cite{optfric}, the functional derivative of the asymptotic variance with respect to the biasing potential can be expressed
in terms of the solution to a Poisson equation.
This enables a numerical strategy based on a steepest descent for finding a good biasing potential.

\subsubsection*{Computation of the functional derivative}%
\label{ssub:Computation of the fucntional derivative}

In the following,
the directional derivative of a functional~$E: C^{\infty}(\domain^d) \rightarrow \real$ at~$U\in C^{\infty}(\domain^d)$
in the direction~$\delta U \in \smoothcompact(\domain^d)$ is denoted by
\begin{equation}\label{eq:funcd}
    \d E[U] \cdot \delta U = \lim_{\varepsilon \rightarrow 0} \frac{1}{\varepsilon} \bigl(E[U+\varepsilon \delta U] - E[U]\bigr),
\end{equation}
whenever the limit exists.
\begin{theorem}
    [Functional derivative of the asymptotic variance]
    \label{proposition:functional_derivative_asym_var}
    Suppose that~$U \in \mathfrak U_0$ and let $\phi_U$ be as in~\cref{lemma:asymptotic_variance}.
    Then for all $\delta U \in \smoothcompact(\domain^d)$,
    it holds that
    \begin{align}
        \notag
        \frac{1}{2} \d \sigma^2_f[U] \cdot \delta U
        &:= \frac{1}{2}\lim_{\varepsilon \to 0} \frac{1}{\varepsilon} \bigl(\sigma^2_f[U + \varepsilon \delta U] - \sigma^2_f[U]\bigr) \\
        \label{eq:funcder}
        &= \frac{\Z{U}^2}{Z^2} \int_{\domain^d} \delta U \bigg( \abs*{\nabla{\phi_{U}}}^2 - \int_{\domain^d} \abs*{\nabla {\phi_{U}}}^2 \, \d \mu_{U} \bigg) \, \d \mu_{U}.
    \end{align}
\end{theorem}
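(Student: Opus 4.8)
The plan is to differentiate the \emph{second} expression for the asymptotic variance in~\eqref{eq:asym_var}, which is linear in~$\phi_U$; this is more convenient than differentiating the quadratic form $\int\abs{\nabla\phi_U}^2\,\d\mu_U$, since it avoids having to control $\nabla\dot\phi$ directly. Using the identity $\e^U\,\d\mu_U = \Z{U}^{-1}\,\e^{-V}\,\d x$, one first rewrites
\[
    \sigma^2_f[U] = \frac{2\,\Z{U}}{Z^2}\int_{\domain^d}(f-I)\,\e^{-V}\,\phi_U\,\d x,
\]
so that $g := (f-I)\,\e^{-V}$ is a fixed, $U$-independent right-hand side with $\int_{\domain^d} g = 0$ (by unbiasedness). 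Because $\int_{\domain^d}g = 0$, the weak formulation~\eqref{eq:weak_formulation} in unnormalized form,
\[
    \int_{\domain^d}\nabla\phi_U\cdot\nabla\varphi\;\e^{-V-U}\,\d x = \int_{\domain^d}g\,\varphi\,\d x,
\]
holds for all $\varphi\in H^1(\mu_U)$, not merely for mean-zero ones; moreover $\varphi\mapsto\int g\,\varphi$ is a bounded linear functional on $H^1(\mu_U)$, thanks to $(f-I)\,\e^U\in L^2(\mu_U)$ from~\cref{assumption:as1}.

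Write $U_\varepsilon = U + \varepsilon\,\delta U$. The key step is to show that $\varepsilon\mapsto\phi_{U_\varepsilon}$ is differentiable at $\varepsilon = 0$ in~$H^1(\mu_U)$ (up to an immaterial additive constant), with derivative $\dot\phi$ characterized by
\[
    \forall\,\psi\in H^1(\mu_U),\qquad
    \int_{\domain^d}\nabla\dot\phi\cdot\nabla\psi\;\e^{-V-U}\,\d x
    = \int_{\domain^d}\delta U\;\e^{-V-U}\,\nabla\phi_U\cdot\nabla\psi\,\d x.
\]
Since $\delta U\in\smoothcompact(\domain^d)$, the weight $\e^{-\varepsilon\,\delta U}$ is bounded above and below uniformly for $\varepsilon$ near $0$, so the bilinear forms $a_\varepsilon(\varphi,\psi) = \int\nabla\varphi\cdot\nabla\psi\,\e^{-V-U_\varepsilon}$ are uniformly equivalent and uniformly coercive modulo constants (the Poincaré constant of $\mu_{U_\varepsilon}$ being controlled by that of $\mu_U$). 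Subtracting the weak formulations at $\varepsilon$ and at $0$, the difference quotient $w_\varepsilon := \varepsilon^{-1}(\phi_{U_\varepsilon} - \phi_U)$ solves $a_\varepsilon(w_\varepsilon,\psi) = -\int\varepsilon^{-1}\bigl(\e^{-V-U_\varepsilon} - \e^{-V-U}\bigr)\nabla\phi_U\cdot\nabla\psi$, and since $\varepsilon^{-1}\bigl(\e^{-V-U_\varepsilon} - \e^{-V-U}\bigr) = \e^{-V-U}\,\varepsilon^{-1}\bigl(\e^{-\varepsilon\,\delta U} - 1\bigr)$ converges uniformly to $-\delta U\,\e^{-V-U}$ with support contained in $\supp(\delta U)$, standard estimates give a uniform bound on $\norm{\nabla w_\varepsilon}_{L^2(\mu_U)}$; extracting a weak limit and identifying it through the limiting equation (uniquely solvable by Lax--Milgram) yields $w_\varepsilon\to\dot\phi$ weakly in $H^1(\mu_U)$, hence $\int g\,w_\varepsilon\to\int g\,\dot\phi$. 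This justification of differentiability is the main technical point, and mirrors the computation carried out in~\cite{optfric}.

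It then remains to assemble the pieces. By dominated convergence, $\varepsilon\mapsto\Z{U_\varepsilon} = \int\e^{-V-U_\varepsilon}$ is differentiable with $\tfrac{\d}{\d\varepsilon}\Z{U_\varepsilon}\big|_{\varepsilon = 0} = -\int\delta U\,\e^{-V-U} = -\Z{U}\int_{\domain^d}\delta U\,\d\mu_U$. Testing the equation for $\phi_U$ against $\dot\phi$ gives $\int g\,\dot\phi = \int\nabla\phi_U\cdot\nabla\dot\phi\;\e^{-V-U}$, and testing the equation for $\dot\phi$ against $\phi_U$ gives $\int\nabla\dot\phi\cdot\nabla\phi_U\;\e^{-V-U} = \int\delta U\,\e^{-V-U}\abs{\nabla\phi_U}^2 = \Z{U}\int\delta U\,\abs{\nabla\phi_U}^2\,\d\mu_U$; likewise $\int g\,\phi_U = \int\abs{\nabla\phi_U}^2\,\e^{-V-U} = \Z{U}\int\abs{\nabla\phi_U}^2\,\d\mu_U$. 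Substituting all of this into
\[
    \frac{1}{2}\,\d\sigma^2_f[U]\cdot\delta U
    = \frac{1}{Z^2}\left(\Bigl(\tfrac{\d}{\d\varepsilon}\Z{U_\varepsilon}\big|_{\varepsilon=0}\Bigr)\int_{\domain^d}g\,\phi_U + \Z{U}\int_{\domain^d}g\,\dot\phi\right)
\]
and collecting terms produces the claimed formula, the subtracted mean $\int_{\domain^d}\abs{\nabla\phi_U}^2\,\d\mu_U$ arising precisely from the derivative of the prefactor~$\Z{U}$.
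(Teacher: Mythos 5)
Your proposal is correct and follows essentially the same route as the paper: rewrite $\sigma^2_f[U]$ in the form linear in $\phi_U$ with the $U$-independent right-hand side $(f-I)\e^{-V}$, differentiate $\Z{U}$, and identify the derivative of $\phi_U$ (your $\dot\phi$ is exactly the paper's $\psi_{U,\delta U}$ from~\cref{lemma:stability_gradient_sol_poisson}), then test the two weak formulations against each other to produce $\int\delta U\,\abs{\nabla\phi_U}^2\e^{-V-U}$. The only minor variation is that you establish convergence of the difference quotient weakly in $H^1(\mu_U)$ via uniform coercivity and compactness, whereas the paper proves strong $L^2(\mu_U)$ convergence of the gradients in a separate lemma; both suffice since the functional $\varphi\mapsto\int(f-I)\e^{-V}\varphi$ is continuous on $L^2(\mu_U)$.
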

\begin{proof}
    We first rewrite~\eqref{eq:asym_var} as
    \[
        \sigma^2_f[U]
        = \frac{2\Z{U}^2}{Z^2}\int_{\domain^d} \phi_{U} (f-I) \e^U \, \d \mu_{U}
        = \frac{2\Z{U}}{Z^2}\int_{\domain^d} \phi_{U} (f-I)\e^{-V},
    \]
    so that the only factors depending on $U$ are $\Z{U}$ and $\phi_{U}$.
    By definition of the functional derivative,
    using~\eqref{eq:weak_formulation} for the first integral term on the right-hand side,
    we have
    \begin{align}
        \label{eq:def_func_der}
        \frac{1}{2} \d\sigma^2_f[U] \cdot \delta U
        &= \frac{\d\Z{U} \cdot \delta U}{Z^2}\int_{\domain^d} \abs*{\nabla {\phi_{U}}}^2 \e^{-V-U}
        + \lim_{\varepsilon\rightarrow 0}\frac{\Z{U}}{\varepsilon Z^2} \int_{\domain^d} (\phi_{U + \varepsilon \delta U} - \phi_{U}) (f-I) \e^{-V},
    \end{align}
    where $\phi_{U + \varepsilon \delta U}$ is the solution to the perturbed Poisson equation
    \begin{equation}
        \label{eq:perturbed_poisson}
        - \mathcal L_{U + \varepsilon \delta U} \phi_{U + \varepsilon \delta U}
        = (f- I) \e^{U + \varepsilon \delta U}.
    \end{equation}
    The function $(f-I)\e^{U+\varepsilon \delta U}$ has zero mean with respect to $\mu_{U + \varepsilon \delta U}$.
    By~\cref{assumption:as1} and the fact that $\delta U \in \smoothcompact(\domain^d)$,
    it holds that $(f-I)\e^{U+\varepsilon \delta U} \in L^2(\e^{-V-U-\varepsilon \delta U})$
    and that $\mu_{U+\varepsilon \delta U}$ satisfies a Poincaré inequality,
    by the Holley--Stroock theorem.
    Consequently, there exists a unique solution in $L^2_0(\e^{-V-U-\varepsilon \delta U})$ to~\eqref{eq:perturbed_poisson} by~\cref{lemma:asymptotic_variance}.
    A simple calculation using the fact that~$\delta U\in \smoothcompact(\domain^d)$ gives
    \begin{equation}
        \label{eq:func_der_ztilde}
        \d\Z{U} \cdot \delta U = - \int_{\domain^d} \delta U \e^{-V-U}.
    \end{equation}
    For the second term on the right-hand side of~\eqref{eq:def_func_der},
    we have by~\cref{lemma:stability_gradient_sol_poisson} that
    \begin{align*}
        &\frac{1}{\varepsilon} \int_{\domain^d} (\phi_{U + \varepsilon \delta U} - \phi_{U}) (f-I) \e^{-V}
        = - \frac{1}{\varepsilon} \int_{\domain^d} (\phi_{U + \varepsilon \delta U} - \phi_{U}) (\mathcal L_U \phi_U) \, \e^{-V-U} \\
        &\qquad = \frac{1}{\varepsilon} \int_{\domain^d} \nabla(\phi_{U + \varepsilon \delta U} - \phi_{U}) \cdot \nabla \phi_U \, \e^{-V-U}
        \xrightarrow[\varepsilon \to 0]{}
        \int_{\domain^d} \nabla \psi_{U,\delta U} \cdot \nabla  \phi_{U} \, \e^{-V-U} \\
        &\qquad = \int_{\domain^d} (\mathcal L_U \psi_{U,\delta U}) \phi_{U} \, \e^{-V-U}
        = \int_{\domain^d} \Bigl(- \e^{U+V} \nabla \cdot \left(\e^{-U-V}\delta U \grad \phi_U\right)\Bigr)   \phi_{U} \, \e^{-V-U} \\
        &\qquad = \int_{\domain^d} \delta U \lvert \nabla \phi_U \rvert^2  \, \e^{-V-U},
    \end{align*}
    where~$\psi_{U,\delta U}$ is the solution to the Poisson equation~\eqref{eq:pert2}.
    The equalities before and after the limit follow from the definitions of~$\phi_U$ and~$\psi_{U,\delta U}$ as weak solutions to~\eqref{eq:poisson} and~\eqref{eq:pert2},
    respectively.
    The last inequality is obtained by integration by parts,
    which is justified because~$\delta U$ is compactly supported.
    Combining this equation with~\eqref{eq:def_func_der} and~\eqref{eq:func_der_ztilde},
    we deduce~\eqref{eq:funcder}.
\end{proof}

Before presenting the numerical method for approaching the optimal biasing potential~$U$,
we mention two corollaries of~\cref{proposition:functional_derivative_asym_var}.
\begin{corollary}
    [Critical points]
    \label{corollary_critical_points}
    Suppose that~$U \in \mathfrak U_0$.
    Then $U$ is a critical point of the asymptotic variance viewed as a functional of $U$
    if and only if the corresponding solution to the Poisson equation~\eqref{eq:poisson} satisfies
    \begin{equation}
        \label{eq:eik}
        \abs*{\nabla{\phi_{U}}}^2 = \int_{\domain^d} \abs*{\nabla \phi_{U}}^2 \d \mu_{U}.
    \end{equation}
    In other words, the norm of $\nabla{\phi_{U}}$ is constant over $\domain^d$.
\end{corollary}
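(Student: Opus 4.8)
The plan is to read the characterization directly off the expression for the functional derivative obtained in~\cref{proposition:functional_derivative_asym_var}, since the corollary is essentially an immediate consequence of that result. By definition, $U$ is a critical point of $\sigma^2_f[\dummy]$ precisely when $\d\sigma^2_f[U]\cdot\delta U = 0$ for every direction $\delta U\in\smoothcompact(\domain^d)$. Using~\eqref{eq:funcder}, this is equivalent to
\[
    \frac{\Z{U}^2}{Z^2}\int_{\domain^d}\delta U\,\Bigl(\abs*{\nabla\phi_U}^2 - \int_{\domain^d}\abs*{\nabla\phi_U}^2\,\d\mu_U\Bigr)\,\d\mu_U = 0
    \qquad\text{for all }\delta U\in\smoothcompact(\domain^d).
\]
Since $\Z{U}>0$ and the probability measure $\mu_U$ has the strictly positive and smooth density $\e^{-V-U}/\Z{U}$, the multiplicative constant and the measure may be absorbed, and the condition reduces to $\int_{\domain^d}\delta U\,g_U\,\d x = 0$ for all $\delta U\in\smoothcompact(\domain^d)$, where $g_U := \abs*{\nabla\phi_U}^2 - \int_{\domain^d}\abs*{\nabla\phi_U}^2\,\d\mu_U$.

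Next I would invoke the fundamental lemma of the calculus of variations: a locally integrable function whose integral against every test function in $\smoothcompact(\domain^d)$ vanishes is zero almost everywhere. Here $g_U$ is in fact continuous --- indeed smooth --- because $\phi_U\in C^{\infty}(\domain^d)$ by the elliptic regularity already established in~\cref{lemma:asymptotic_variance}; hence $g_U\equiv 0$ pointwise on $\domain^d$, which is exactly~\eqref{eq:eik}, i.e.\ $\abs*{\nabla\phi_U}$ is constant over $\domain^d$ with the stated value. For the converse, if $\abs*{\nabla\phi_U}^2$ is constant on $\domain^d$, then that constant necessarily equals its $\mu_U$-average $\int_{\domain^d}\abs*{\nabla\phi_U}^2\,\d\mu_U$, so $g_U\equiv 0$ and the right-hand side of~\eqref{eq:funcder} vanishes for every $\delta U$; thus $U$ is a critical point.

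There is no genuine obstacle here: the corollary follows directly from~\cref{proposition:functional_derivative_asym_var}, the only points deserving a word of care being that the prefactor $\Z{U}^2/Z^2$ is nonzero (so that it can be divided out) and that the integrand $g_U$ is regular enough --- continuous, thanks to the smoothness of $\phi_U$ --- for the fundamental lemma to upgrade the conclusion from ``$\mu_U$-almost everywhere'' to ``everywhere on $\domain^d$''.
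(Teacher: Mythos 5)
Your proof is correct and follows exactly the route the paper intends: the corollary is stated as an immediate consequence of \cref{proposition:functional_derivative_asym_var}, obtained by reading off when the right-hand side of~\eqref{eq:funcder} vanishes for all $\delta U \in \smoothcompact(\domain^d)$, using the strict positivity of the density of $\mu_U$ and the smoothness of $\phi_U$ to conclude pointwise. The points of care you flag (nonvanishing prefactor, continuity of the integrand for the fundamental lemma) are exactly the right ones.
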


\begin{corollary}
    [No smooth minimizer]
    \label{corollary:no_smooth_minimizer}
    Let $\domain = \torus$.
    Then there is no biasing potential~$U \in C^\infty(\torus^d)$ that is a critical point of the asymptotic variance~$\sigma^2_f[U]$.
\end{corollary}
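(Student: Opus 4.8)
The plan is to combine \cref{corollary_critical_points} with the elementary fact that a smooth function on a compact manifold must have a critical point. First I would note that when $\domain = \torus$ one has $\mathfrak U_0 = C^\infty(\torus^d)$ (the first item of \cref{assumption:as1} implies all the others in this setting), so any $U \in C^\infty(\torus^d)$ lies in $\mathfrak U_0$ and \cref{corollary_critical_points} applies: if such a $U$ were a critical point of $\sigma^2_f$, then the associated solution $\phi_U$ of the Poisson equation~\eqref{eq:poisson} would satisfy that $\abs{\nabla \phi_U}$ is constant on $\torus^d$.

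Next I would exploit compactness. By \cref{lemma:asymptotic_variance}, $\phi_U$ is smooth, and since $\torus^d$ is compact it attains its maximum at some point $x_* \in \torus^d$, where necessarily $\nabla \phi_U(x_*) = 0$. Combined with the previous step, this forces $\nabla \phi_U \equiv 0$ on $\torus^d$; as $\torus^d$ is connected, $\phi_U$ is constant, and since $\phi_U \in L^2_0(\mu_U)$ we conclude $\phi_U \equiv 0$.

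Finally I would reach a contradiction with \cref{assumption:assumptions_V}. Substituting $\phi_U \equiv 0$ into~\eqref{eq:poisson} gives $(f - I)\e^U \equiv 0$ on $\torus^d$; since $U$ is real-valued and smooth, $\e^U > 0$ everywhere, hence $f \equiv I$, contradicting the requirement that $f$ is not $\mu$-almost everywhere constant. I do not anticipate any serious obstacle: the only points requiring care are verifying that the hypotheses of \cref{corollary_critical_points} and \cref{lemma:asymptotic_variance} are met for smooth $U$ on the torus, and observing that $(f-I)\e^U$ cannot vanish identically because $f$ is nonconstant.
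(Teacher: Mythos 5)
Your argument is correct and is essentially the paper's own proof: both invoke \cref{corollary_critical_points}, use compactness of $\torus^d$ and smoothness of $\phi_U$ to find an extremum where $\nabla\phi_U$ vanishes, conclude from the constancy of $\abs{\nabla\phi_U}$ that $\nabla\phi_U \equiv 0$, and derive the contradiction $(f-I)\e^U \equiv 0$ with the non-constancy of $f$ in \cref{assumption:assumptions_V}. The only cosmetic difference is that the paper passes directly from $\nabla\phi_U \equiv 0$ to $-\mathcal L_U\phi_U = 0$ rather than first concluding $\phi_U \equiv 0$.
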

\begin{proof}
    Assume for contradiction that there is a smooth biasing potential $U$ for which~\eqref{eq:eik} holds.
    By elliptic regularity, the corresponding solution~$\phi_{U}$ to the Poisson equation~\eqref{eq:poisson} is also smooth and so,
    by the extreme value theorem,
    it attains its minimum at some point in~$\torus^d$, where $\nabla \phi_U$ vanishes.
    By~\eqref{eq:eik}, this implies that $\nabla \phi_{U} = 0$ for all $x \in \torus^d$ and, therefore, $- \mathcal L_U \phi_{U} = 0 = (f-I) \, \e^{U}$.
    This is a contradiction because,  by~\cref{assumption:assumptions_V},
    the observable~$f$ is not everywhere equal to~$I$.
\end{proof}

\Cref{corollary:no_smooth_minimizer} highlights a limitation of the target-oriented approach taken in this section,
as singular potentials are impractical at the numerical level and unlikely to be of any use for different observables.
This motivates the approach taken in~\cref{sec:minimizing_the_expected_asymptotic_variance},
which aims at finding a biasing potential~$U$ that leads to a reduction in variance for not just one but a family of observables.

\subsubsection*{Steepest descent method}%
To conclude this section,
we present an iterative approach for approximating a minimizer of~$\sigma^2_f[U]$.
We focus on the case where $\domain = \torus$ for simplicity,
but the approach is easy to generalize to other settings (see \cref{sub:numopt}).
Since an expression for the functional derivative of~$\sigma^2_f[U]$ is available by~\cref{proposition:functional_derivative_asym_var},
we employ a method based on steepest descent.
Each step of the method may be decomposed into three stages:
\begin{itemize}
    \item
        First, an approximate solution to the Poisson equation~\eqref{eq:poisson} is computed.
        A number of numerical methods can be employed to this end.
        Given that the optimal potential always exhibits singularities when $\domain = \torus$,
        we opt for a finite difference approach rather than,
        for example, a spectral method~\cite{MR3683698,shen2011spectral}.
        The details of the finite difference method are presented in~\cref{sec:numerical_discretization_of_the_Poisson_equation},
        together with a convergence proof in the setting where~$U$ is regular.

    \item
        Then, from the solution to~\eqref{eq:poisson},
        an approximation of the gradient~$G$ of the asymptotic variance is calculated based on \cref{proposition:functional_derivative_asym_var}.

    \item
        Finally, the potential~$U$ is updated according to $U \gets U - \eta \, G$,
        where $\eta$ is found by backtracking line search following Armijo's method~\cite{MR191071}.
\end{itemize}
These steps are repeated until the $L^2(\torus)$ norm of the gradient is sufficiently small.
A couple of comments are in order.
First, the expression of~$G$ depends on the considered Hilbert functional space.
By~\eqref{eq:funcder}, the functions
\begin{align}
    \nonumber
    &G_{L^2(\e^{-V-U})} =
    \frac{\Z{U}}{Z^2} \bigg( \abs*{\nabla{\phi_{U}}}^2 - \int \abs*{\nabla {\phi_{U}}}^2 \, \d \mu_{U} \bigg),  \\
    \label{eq:explicit_functional_derivative_weighted}
    &G_{L^2(\e^{-V})} =
    \frac{\Z{U}}{Z^2} \bigg( \abs*{\nabla{\phi_{U}}}^2 - \int \abs*{\nabla {\phi_{U}}}^2 \, \d \mu_{U} \bigg) \e^{-U}, \\
    \label{eq:explicit_functional_derivative}
    &G_{L^2(\torus)} =
    \frac{\Z{U}}{Z^2} \bigg( \abs*{\nabla{\phi_{U}}}^2 - \int \abs*{\nabla {\phi_{U}}}^2 \, \d \mu_{U} \bigg) \e^{-U-V},
\end{align}
are all ascent directions for $\sigma^2_f[U]$,
corresponding to the gradients in $L^2(\e^{-V-U})$, in $L^2(\e^{-V})$ and in $L^2(\torus)$,
respectively.
Of these three options, the latter two are better suited for use in an optimization method.
Indeed, employing the $L^2(\e^{-V-U})$ derivative would lead to a change of metric with each update of~$U$,
which precludes the use of methods that rely on information from multiple steps,
such as the Barzilai--Borwein method~\cite{MR967848}.
In the numerical experiments presented in~\cref{sec:examples_and_numerical_experiments},
we use~\eqref{eq:explicit_functional_derivative},
but good results can also be obtained by using~\eqref{eq:explicit_functional_derivative_weighted}.

Second,
the gradient needs to be discretized in practice.
In order to avoid convergence issues,
it is desirable that the discretized gradient is itself the gradient of a function.
Therefore, we use the discretization given in~\eqref{eq:funcder_discrete},
which is guaranteed to be the gradient of an appropriate discretization of the asymptotic variance;
see~\cref{proposition:functional_derivative_asym_var_discrete}.

\section{Minimizing the asymptotic variance for a class of observables}%
\label{sec:minimizing_the_expected_asymptotic_variance}
Assume that the set of observables of which we want to compute the expectation is well described by a Gaussian random field
\begin{equation}
    \label{eq:random_observable}
    f = \sum_{j=1}^{J} \sqrt{\lambda_j} u_j f_j,
    \qquad u_j \sim \mathcal N(0, 1),
    \qquad \lambda_j \in (0, \infty),
\end{equation}
where $(f_j)_{1\leq j \leq J}$ are given functions from $\domain^d$ to $\real$ and the random variables $(u_j)_{1 \leq j \leq J}$ are independent.
This equation defines a probability distribution $\mathcal F$ on the space of observables
as the pushforward of the finite-dimensional Gaussian measure $\mathcal G = \mathcal N(0, \matid_J)$ on $\real^J$;
the probability measure $\mathcal F$ assigns a probability 1 to $\Span(f_1, \dotsc, f_J)$.
One may wonder whether it is possible to minimize the average asymptotic variance for observables drawn from this distribution.
For clarity, we denote by $\expect_{\mathcal F}$ expectations with respect to observables.
Within this section,
we also use the notation
\[
    \mathfrak U_0 = \bigcap_{j=1}^{J} \mathfrak U_0(V, f_j),
    \qquad \mathfrak U = \bigcap_{j=1}^J \mathfrak U(V,f_j),
\]
where~$\mathfrak U_0(V,f_j)$ and~$\mathfrak U(V,f_j)$ are the sets defined before~\cref{assumption:as1} and after~\eqref{eq:explicit_sigma_1d}, respectively.
These sets are assumed to be non-empty in this section.
Denoting by $\phi$ the solution in $H^1(\mu_{U}) \cap L^2_0(\mu_{U})$ to
\[
    - \mathcal L_U \phi = (f - I) \e^{U},
\]
and assuming that~$U \in \mathfrak U_0$,
we have by~\cref{lemma:asymptotic_variance} that
\begin{align*}
    \expect_{\mathcal F} \bigl[\sigma^2_f[U] \bigr]
    &=\expect_{\mathcal F} \left[ \frac{2\Z{U}}{Z^2}\int_{\domain^d} \phi (f-I) \e^{-V}  \right] \\
    &=  \frac{2\Z{U}}{Z^2} \, \expect_{\mathcal G} \left[  \sum_{j=1}^{J} \sum_{k=1}^{J} u_j u_k \sqrt{\lambda_j \lambda_k}\int_{\domain^d} \phi_k (f_j-I_j) \e^{-V}  \right],
\end{align*}
where, for $1 \leq j \leq J$,
the function $\phi_j$ is the unique solution in $H^1(\mu_{U}) \cap L^2_0(\mu_{U})$ to the Poisson equation~$-\mathcal L_U \phi_j = (f_j - I_j) \e^U$,
with $I_j := \mu(f_j)$.
Since $\expect_{\mathcal G} [u_j u_k] = \delta_{jk}$,
we obtain by rearranging the previous expression that
\begin{align}
    \label{eq:expectation_asymptotic_variance}
    \sigma^2[U] :=
    \expect_{\mathcal F} \bigl[\sigma^2_f[U] \bigr]
    = \frac{2\Z{U}}{Z^2} \sum_{j=1}^{J} \lambda_j \int_{\domain^d} \phi_j (f_j-I_j) \e^{-V}
    = \sum_{j=1}^{J} \lambda_j \sigma^2_{f_j}[U].
\end{align}
Therefore, minimizing the expectation $\expect_{\mathcal F} [ \sigma^2_f ]$ amounts to minimizing the sum on the right-hand side of~\eqref{eq:expectation_asymptotic_variance}.

\subsection{\texorpdfstring%
    {Optimal biasing potential in the one-dimensional setting with $\domain = \real$}
    {Optimal biasing potential in the one-dimensional setting on the real line}
}
In the one-dimensional setting with $\domain = \real$,
an explicit expression for the infimum of the asymptotic variance can be obtained,
similar to that~\cref{proposition:bound_asym_var_1d} in the case of a single observable.
In order to state a precise result,
we introduce the notation
\[
    \widehat \sigma^2[U]
    = \sum_{j=1}^{J} \lambda_j \widehat \sigma^2_{f_j}[U],
\]
where $\widehat \sigma^2_{g}$ for an observable~$g$ was defined in~\eqref{eq:explicit_sigma_1d}.
Let us recall that, by the reasoning in~\cref{sub:explicit_optimal_potential_in_dimension_one},
the quantity $\widehat \sigma^2[U]$ coincides with~$\sigma^2[U]$ in~\eqref{eq:expectation_asymptotic_variance} when~$U \in \mathfrak U_0$,
but~$\widehat \sigma[U]$ is well-defined more generally for any~$U \in \mathfrak U$.
In~\cref{propopsition:optimal_potential_class} below,
we give a bound from below on~$\widehat \sigma^2[U]$ in the particular case where~$d = 1$ and $\domain = \real$.
Before presenting this result,
we introduce the notation
\[
    F_j(x) = \int_{0}^{x} \bigl(f_j(\xi)-I_j\bigr) \e^{-V(\xi)} \d \xi, \qquad
    A_{\real,j} = - \int_{-\infty}^{0} \bigl(f_j(\xi)-I_j\bigr) \e^{-V(\xi)} \d \xi.
\]
\begin{remark}
    In the rest of this section,
    $F_j$ and~$A_{\real,j}$ always appear together as $F_j - A_{\real,j}$,
    which can be rewritten as an integral over~$(-\infty, x]$ with the same integrand as in the definition of~$F_j$,
    that is to say
    \[
        F_j(x) - A_{\real,j} = \int_{- \infty}^x \bigl(f_j(\xi) - I_j\bigr) \e^{-V(\xi)} \, \d \xi.
    \]
    For the sake of conciseness,
    we could introduce a new notation to refer to $F_j - A_{\real,j}$,
    but we refrain from doing so in order to keep the notation consistent with that used in~\cref{sec:minimizing_the_asymptotic_variance_for_one_observable}.
\end{remark}
\begin{lemma}
    \label{propopsition:optimal_potential_class}
    Assume that $d = 1$ and $\domain = \real$.
    Then, for all $U \in \mathfrak U$, it holds that
    \begin{equation}
        \label{eq:lower_bound_class}
        \min_{U \in \mathfrak U} \widehat \sigma^2(U)
        = \frac{2}{Z^2} \left( \int_{\real}  \sqrt{ \sum_{j=1}^{J}  \lambda_j \lvert F_j - A_{\real,j} \rvert^2 }  \right)^2.
    \end{equation}
    The minimum is achieved for
    \begin{equation}
        \label{eq:optimal_solution_class}
        U = U_* := - V -\log \left( \sqrt{\sum_{j=1}^{J} \lambda_j \lvert F_j - A_{\real,j} \rvert^2} \right) \in \mathfrak U.
    \end{equation}
    In this case
    \(
        \e^{-V-U_*} \text{ is proportional to } \sqrt{\sum_{j=1}^{J} \lambda_j \lvert F_j - A_{\real,j} \rvert^2}.
    \)
\end{lemma}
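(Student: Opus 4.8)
The plan is to mimic the structure of the proof of \cref{proposition:bound_asym_var_1d}, generalizing from a single function $F-A^*_\real$ to the vector $(F_1 - A_{\real,1}, \dotsc, F_J - A_{\real,J})$ weighted by $\sqrt{\lambda_j}$. First I would recall from \cref{lemma:asymvar_in_1d} and \cref{remark:infimum_in_hat_sigma} that for each $j$, and for any $U \in \mathfrak U$,
\[
    \widehat \sigma^2_{f_j}[U] = \frac{2\Z{U}}{Z^2} \int_{\real} \abs{F_j - A_{\real,j}}^2 \, \e^{V+U},
\]
since in the case $\domain = \real$ the infimum over $A$ in \eqref{eq:explicit_sigma_1d} is attained at $A = A_{\real,j}$ (any other value makes the integral infinite). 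Summing against the weights $\lambda_j$ gives
\[
    \widehat \sigma^2[U] = \frac{2\Z{U}}{Z^2} \int_{\real} \Phi^2 \, \e^{V+U},
    \qquad \Phi^2 := \sum_{j=1}^J \lambda_j \abs{F_j - A_{\real,j}}^2.
\]

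Next I would establish the lower bound. If $\Z{U}$ or the integral is infinite the inequality is trivial, so assume both are finite. By the Cauchy--Schwarz inequality applied to $\sqrt{\e^{-V-U}}$ and $\Phi \sqrt{\e^{V+U}}$,
\[
    \widehat \sigma^2[U]
    = \frac{2}{Z^2} \left(\int_{\real} \e^{-V-U}\right)\left(\int_{\real} \Phi^2 \, \e^{V+U}\right)
    \geq \frac{2}{Z^2} \left(\int_{\real} \Phi \right)^2,
\]
which is exactly the right-hand side of \eqref{eq:lower_bound_class}. Then, to see that the bound is attained, I would substitute $U = U_*$ from \eqref{eq:optimal_solution_class}: with $\e^{-V-U_*} = c\,\Phi$ for a normalizing constant $c$, one has $\e^{V+U_*} = c^{-1}\Phi^{-1}$ (valid since $\Phi > 0$ a.e.\ — I would check that $\Phi$ vanishes on at most a Lebesgue-null set, using that each $F_j - A_{\real,j}$ is real-analytic-free but at least that $\{F_j = A_{\real,j}\}$ has measure zero unless $f_j \equiv I_j$, which is excluded by \cref{assumption:assumptions_V}; since $\Phi^2$ is a sum of nonnegative terms with positive weights, its zero set is the intersection of the individual zero sets, hence null), so $\int \Phi^2 \e^{V+U_*} = c^{-1}\int \Phi$ and $\Z{U_*} = c\int \Phi$, giving $\widehat\sigma^2[U_*] = \frac{2}{Z^2}(\int\Phi)^2$.

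The remaining point is to verify $U_* \in \mathfrak U$, i.e.\ that $\e^{-V-U_*} \in L^1(\real)$ with $\Z{U_*} > 0$ and that the unbiasedness condition \eqref{eq:unbiased} holds. Integrability follows because $\e^{-V-U_*} \propto \Phi = \bigl(\sum_j \lambda_j \abs{F_j - A_{\real,j}}^2\bigr)^{1/2} \leq \sum_j \sqrt{\lambda_j}\,\abs{F_j - A_{\real,j}}$, and each $\abs{F_j - A_{\real,j}} = \abs{\int_x^{\infty}(f_j - I_j)\e^{-V}}$ is integrable on $\real$ precisely when $\sigma^2_{f_j}[0] < \infty$, which holds since $\mathfrak U_0(V,f_j)$ is assumed non-empty (in fact $0 \in \mathfrak U_0$ is the relevant hypothesis, as in \cref{proposition:sharpness}); positivity of $\Z{U_*}$ is clear since $\Phi > 0$ a.e. For \eqref{eq:unbiased}, since $\mu_{U_*}$ has full support (as $\Phi>0$ a.e.), the condition reduces to $\int_\real (f_j - I_j)\d\mu = 0$, which is the definition of $I_j$. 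I expect the only mildly delicate step to be the measure-zero argument for the zero set of $\Phi$, and the bookkeeping to confirm that the hypothesis "$\mathfrak U_0(V,f_j)$ non-empty" (or $0\in\mathfrak U_0$) supplies the integrability of $F_j - A_{\real,j}$; everything else is a routine Cauchy--Schwarz computation parallel to \cref{proposition:bound_asym_var_1d}.
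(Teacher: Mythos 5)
Your proof follows essentially the same route as the paper's: rewrite $\widehat\sigma^2[U]$ as $\frac{2\Z{U}}{Z^2}\int_\real \Phi^2\,\e^{V+U}$ using \cref{remark:infimum_in_hat_sigma}, apply the Cauchy--Schwarz inequality to obtain the lower bound, and verify by substitution that $U_*$ attains it. The only caveat is your side claim that $\Phi>0$ Lebesgue-a.e.: this can fail (the functions $F_j-A_{\real,j}$ may share a zero set of positive measure, in the spirit of \cref{example:1d_subset_example}), but it is also unnecessary, since with the convention $0\cdot\infty=0$ the identities $\int_\real\Phi^2\,\e^{V+U_*}=\int_\real\Phi$ and $\Z{U_*}=\int_\real\Phi$ hold regardless.
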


\begin{proof}
    We first show that $\widehat \sigma^2[U]$ is bounded from below by the right-hand side of~\eqref{eq:lower_bound_class}.
    This is trivial if~$\widehat \sigma^2[U]$ is infinite,
    so we assume from now on that~$\widehat \sigma^2[U] < \infty$.
    Using the definition of $\widehat \sigma^2[U]$,
    we have
    \begin{align}
        \label{eq:expression_class_sigma_hat}
        \widehat \sigma^2[U] =
        \sum_{j=1}^{J} \lambda_j \widehat \sigma^2_{f_j}[U]
        =  \frac{2\Z{U}}{Z^2} \int_{\real} \left( \sum_{j=1}^{J}  \lambda_j \lvert F_j - A_{\real,j} \rvert^2 \right) \e^{V+U}.
    \end{align}
    Here, we used that the infimum in the definition~\eqref{eq:explicit_sigma_1d} of~$\widehat \sigma^2_{f_j}$ is achieved for~$A = A_{\real,j}$,
    as explained in~\cref{remark:infimum_in_hat_sigma}.
    Let us introduce the notation
    \[
        G = \sqrt{\sum_{j=1}^{J} \int_{\real} \lambda_j  \lvert F_j - A_{\real,j} \rvert^2}.
    \]
    Since~$\widehat \sigma^2[U] < \infty$ by assumption,
    the set over which $G^2 \e^{V+U}$ takes an infinite value is of zero Lebesgue measure.
    Therefore, by the Cauchy--Schwarz inequality,
    \[
        \widehat \sigma^2[U]
        = \frac{2\Z{U}}{Z^2} \int_{\real} \lvert G \rvert^2 \e^{U+V}
        \geq \frac{2}{Z^2} \left(\int_{\real} \lvert G \rvert \sqrt{\e^{U+V}} \sqrt{\e^{-V-U}}\right)^2
        = \frac{2}{Z^2} \left(\int_{\real} G \right)^2.
    \]
    The claim that~$U_*$ in~\eqref{eq:optimal_solution_class} achieves the lower bound can be verified by substitution in~\eqref{eq:expression_class_sigma_hat}.
\end{proof}

\begin{remark}
    Notice that, unless the functions $(F_j - A_{\real,j})_{1\leq j \leq J}$ share a common root,
    the optimal biasing potential~$U_*$ in~\eqref{eq:optimal_solution_class} is a smooth function.
\end{remark}

In the same way that we obtained~\cref{proposition:sharpness} from~\cref{proposition:bound_asym_var_1d},
we deduce from~\cref{propopsition:optimal_potential_class} the following result.
The proof is a simple adaptation of that of~\cref{proposition:sharpness}, so we omit it.
\begin{proposition}
    Assume that $d = 1$ and $\domain = \real$,
    and suppose that~$0 \in \mathfrak U_0$.
    Then
    \[
        \inf_{U \in \mathfrak U_0} \sigma^2[U]
        = \inf_{U \in C^{\infty}_{\rm c}(\real)} \sigma^2[U]
        = \frac{2}{Z^2} \left( \int_{\real}  \sqrt{ \sum_{j=1}^{J}  \lambda_j \lvert F_j - A_{\real,j} \rvert^2 }  \right)^2.
    \]
\end{proposition}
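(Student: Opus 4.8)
The plan is to follow, almost verbatim, the derivation of \cref{proposition:sharpness} from \cref{proposition:bound_asym_var_1d}, now starting from \cref{propopsition:optimal_potential_class}. Write $G := \sqrt{\sum_{j=1}^{J} \lambda_j \lvert F_j - A_{\real,j}\rvert^2}$, so that the common value claimed for both infima is $\frac{2}{Z^2}\bigl(\int_{\real} G\bigr)^2$. Recall that $\sigma^2[U] = \widehat\sigma^2[U]$ for every $U \in \mathfrak U_0$ (in particular on $C^{\infty}_{\rm c}(\real) \subset \mathfrak U_0$), while $\widehat\sigma^2[U] \geq \frac{2}{Z^2}\bigl(\int_{\real} G\bigr)^2$ for every $U \in \mathfrak U \supseteq \mathfrak U_0$ by \cref{propopsition:optimal_potential_class}. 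Together with $C^{\infty}_{\rm c}(\real) \subset \mathfrak U_0$, this already yields $\frac{2}{Z^2}\bigl(\int_{\real} G\bigr)^2 \leq \inf_{U \in \mathfrak U_0}\sigma^2[U] \leq \inf_{U \in C^{\infty}_{\rm c}(\real)}\sigma^2[U]$, so it remains only to construct a sequence $(U_{\varepsilon})_{\varepsilon>0} \subset C^{\infty}_{\rm c}(\real)$ with $\limsup_{\varepsilon\to0}\sigma^2[U_{\varepsilon}] \leq \frac{2}{Z^2}\bigl(\int_{\real} G\bigr)^2$. Note in passing that $\int_{\real} G < \infty$, because $\frac{2}{Z^2}\bigl(\int_{\real} G\bigr)^2 = \min_{U\in\mathfrak U}\widehat\sigma^2[U] \leq \widehat\sigma^2[0] = \sum_{j} \lambda_j \sigma^2_{f_j}[0] < \infty$ by \cref{propopsition:optimal_potential_class} and the hypothesis $0 \in \mathfrak U_0$.

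For the construction we would take $U_{\varepsilon}(x) = -\chi_{\varepsilon}(x)\bigl(V(x) + \log G_{\varepsilon}(x)\bigr)$, exactly as in the case $\domain = \real$ of \cref{proposition:sharpness}, with $\chi_{\varepsilon} = \varrho \star \mathds 1_{[-\ell_{\varepsilon},\ell_{\varepsilon}]}$, $\ell_{\varepsilon} = \varepsilon^{-1/2} - 1$, and with $G_{\varepsilon} := \sqrt{\sum_{j=1}^{J} \lambda_j \lvert F_j - A_{\real,j}\rvert_{\varepsilon}^2}$, where $\lvert\cdot\rvert_{\varepsilon}$ is the smooth regularization of the absolute value introduced in the proof of \cref{proposition:background_infimum}. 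The pointwise bounds $0 \leq \lvert z \rvert_{\varepsilon} - \lvert z \rvert \leq \varepsilon$ and the triangle inequality for the $\lambda$-weighted Euclidean norm on $\real^J$ give $G \leq G_{\varepsilon} \leq G + \varepsilon\sqrt{\sum_{j}\lambda_j}$; in particular $G_{\varepsilon}$ is smooth and bounded below by a positive constant, so that $U_{\varepsilon} \in C^{\infty}_{\rm c}(\real)$, and $\mu_{U_{\varepsilon}}$ satisfies a Poincaré inequality by the Holley--Stroock argument (since $U_{\varepsilon}$ is a bounded, compactly supported perturbation of $0 \in \mathfrak U_0$), whence $U_{\varepsilon} \in \mathfrak U_0$.

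Applying \cref{lemma:asymvar_in_1d} to each $f_j$ — using that the constant $A_{\real,j}$ does not depend on the biasing potential — and summing with weights $\lambda_j$ gives
\[
    \sigma^2[U_{\varepsilon}] = \sum_{j=1}^{J} \lambda_j \sigma^2_{f_j}[U_{\varepsilon}] = \frac{2\,\Z{U_{\varepsilon}}}{Z^2} \int_{\real} G^2 \, \e^{V + U_{\varepsilon}},
\]
and one then establishes, exactly as in \cref{proposition:sharpness},
\[
    \Z{U_{\varepsilon}} = \int_{\real} \e^{(\chi_{\varepsilon}-1)V} G_{\varepsilon}^{\chi_{\varepsilon}} \xrightarrow[\varepsilon\to0]{} \int_{\real} G, \qquad \limsup_{\varepsilon\to0} \int_{\real} G^2 \, \e^{V+U_{\varepsilon}} \leq \int_{\real} G.
\]
For the first limit, convexity of the exponential gives $\e^{(\chi_{\varepsilon}-1)V} G_{\varepsilon}^{\chi_{\varepsilon}} \leq \e^{-V} + G + \varepsilon\sqrt{\sum_{j}\lambda_j}\,\chi_{\varepsilon}$, which is dominated by the $\varepsilon$-independent integrable function $\e^{-V} + G + \sqrt{\sum_{j}\lambda_j}\,\min\{1, x^{-2}\}$ thanks to the choice of $\ell_{\varepsilon}$ (cf.~\eqref{eq:motivation_choice_ell}), so dominated convergence applies; for the second, convexity again gives $G^2 \e^{V+U_{\varepsilon}} \leq (1-\chi_{\varepsilon}) G^2 \e^V + \chi_{\varepsilon} G^2/G_{\varepsilon} \leq \mathds 1_{\real \setminus B_{\ell_{\varepsilon}-1}} G^2 \e^V + G$ (with $B_{\ell_{\varepsilon}-1} = [-\ell_{\varepsilon}+1, \ell_{\varepsilon}-1]$), and $\int_{\real} G^2 \e^V$ is finite — being a positive multiple of $\sum_{j}\lambda_j\sigma^2_{f_j}[0]$ — so the tail integral vanishes as $\varepsilon \to 0$. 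Combining these with the displayed identity yields $\limsup_{\varepsilon\to0}\sigma^2[U_{\varepsilon}] \leq \frac{2}{Z^2}\bigl(\int_{\real} G\bigr)^2$, which closes the argument. The only step that goes beyond a transcription of the proof of \cref{proposition:sharpness} is the choice of the regularization $G_{\varepsilon}$ and the verification of the two-sided estimate $G \leq G_{\varepsilon} \leq G + \varepsilon\sqrt{\sum_{j}\lambda_j}$ — in particular the (harmless) handling of the set where all the functions $F_j - A_{\real,j}$ vanish simultaneously, on which $G = 0$ and $G^2/G_{\varepsilon}$ is interpreted as $0$; everything else is routine.
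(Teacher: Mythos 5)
Your proof is correct and is precisely the argument the paper has in mind: the paper omits the proof, stating only that it is "a simple adaptation of that of Proposition \ref{proposition:sharpness}", and your regularization $G_{\varepsilon} = \bigl(\sum_{j}\lambda_j\lvert F_j - A_{\real,j}\rvert_{\varepsilon}^2\bigr)^{1/2}$ together with the two-sided bound $G \leq G_{\varepsilon} \leq G + \varepsilon\bigl(\sum_j\lambda_j\bigr)^{1/2}$ is exactly the right way to carry out that adaptation. Nothing is missing.
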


\subsection{Numerical optimization}
\label{sub:numopt}
A result similar to~\cref{propopsition:optimal_potential_class} is not easily available when~$d=1$ and~$\domain = \torus$,
because the constant~$A_{\torus}$ given in~\eqref{eq:A_domain} depends on $U$.
In this case or in the multi-dimensional setting,
one can resort to a steepest descent approach in order to find the optimal biasing potential.
It is easy to prove,
based on~\cref{proposition:functional_derivative_asym_var},
that the functional derivative of $\sigma^2[U]$ is given by
\begin{equation}
    \label{eq:functional_derivative_class_of_obs}
    \frac{1}{2} \d \sigma^2[U] \cdot \delta U
    = \frac{\Z{U}^2}{Z^2} \int_{\domain^d} \left( \delta U - \int_{\domain^d} \delta U \, \d \mu_U \right) \left( \sum_{j=1}^{J} \lambda_j \abs*{\nabla{\phi_j}}^2  \right) \, \d \mu_{U}.
\end{equation}
The approach presented in~\cref{sub:optimal_solution_in_the_multi_dimensional_setting} can then be applied \emph{mutatis mutandis}.
Numerical experiments illustrating the potential found as a result of this procedure are presented in~\cref{sec:examples_and_numerical_experiments}.

\subsection{Free energy biasing}
In the molecular dynamics literature,
variance reduction over a compact state space is often achieved by \emph{free energy biasing},
a heuristic approach which,
in the absence of coarse graining via a reaction coordinate,
amounts to setting $U = -V$;
see~\cite{MR3509213} and the references therein.
To conclude this section, we address the following related question:
is there a probability distribution~$\mathcal F$ on observables such that $U = -V$ is a minimizer of the average asymptotic variance $\sigma^2[U]$,
i.e.\ for which energy biasing is optimal?
While we will not be able to provide a definite answer to this question,
we shall prove in \cref{proposition:when_is_free_energy_biasing_optimal} that,
for an appropriate probability measure on observables, the biasing $U = -V$ corresponds to a critical point of~$\sigma^2[U]$.
This does not imply that $U = -V$ is necessarily optimal,
because $\sigma^2[U]$ is not convex in general;
see~\cref{remark:non_convexity_torus} in \cref{sec:second_variation_of_the_asymptotic_variance}.

\begin{proposition}
    \label{proposition:when_is_free_energy_biasing_optimal}
    Suppose that~$\domain = \torus$ with~$d=1$ and assume,
    for~$J \in 2\nat_{>0}$,
    that $(f_j, \lambda_j)_{1 \leq j \leq J}$
    are the~$J$ first eigenfunctions and eigenvalues of the operator $\mathcal K = \e^{V}(-\laplacian + \tau^2 \id)^{-\alpha} \e^{-V}$,
    where $\alpha \in (0, \infty)$ and~$\tau \in \real$ are parameters of the random field
    and~$\mathcal K$ is viewed as a compact self-adjoint operator on the following space of functions defined on $\torus$:
    \[
        \left\{ f \in L^2\left(\e^{-2V}\right) \colon \int_{\torus} f \e^{-V} = 0 \right\}.
    \]
    Then the average asymptotic variance $\sigma^2[U]$ admits a critical point for $U = -V$.
\end{proposition}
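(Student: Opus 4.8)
The plan is to verify directly that the functional derivative~\eqref{eq:functional_derivative_class_of_obs} vanishes at~$U = -V$. For this choice the biasing measure~$\mu_U$ has density proportional to~$\e^{-V-U} \equiv 1$, hence is the uniform measure on~$\torus$, and the generator reduces to~$\mathcal L_U = \laplacian$. One first checks that~$U = -V$ belongs to~$\mathfrak U_0 = \bigcap_j \mathfrak U_0(V, f_j)$: smoothness is inherited from~$V$, the function~$\e^{-V-U} \equiv 1$ is integrable over~$\torus$, the uniform measure satisfies a Poincaré inequality, the equation~$\d X_t = \sqrt 2 \, \d W_t$ on~$\torus$ admits a unique strong solution, and~$(f_j - I_j)\e^{-V} \in L^2(\mu_U)$ since~$f_j \in L^2(\e^{-2V})$. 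Reading off~\eqref{eq:functional_derivative_class_of_obs} exactly as in the proof of~\cref{corollary_critical_points}, $U = -V$ is a critical point of~$\sigma^2[U]$ if and only if the function~$x \mapsto \sum_{j=1}^{J} \lambda_j \abs{\phi_j'(x)}^2$ is constant on~$\torus$, where~$\phi_j \in H^1(\mu_U) \cap L^2_0(\mu_U)$ solves~$-\laplacian \phi_j = (f_j - I_j)\e^{-V}$. The constraint defining the domain of~$\mathcal K$, namely~$\int_\torus f_j \e^{-V} = 0$, says exactly that~$I_j := \mu(f_j) = 0$, so the right-hand side of this Poisson equation is simply~$g_j := \e^{-V} f_j$.

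The crucial observation is that the eigenrelation~$\mathcal K f_j = \lambda_j f_j$ becomes, after multiplication by~$\e^{-V}$, the statement that~$g_j = \e^{-V} f_j$ is an eigenfunction of~$(-\laplacian + \tau^2 \id)^{-\alpha}$ on~$\torus$ with eigenvalue~$\lambda_j$, equivalently an eigenfunction of~$-\laplacian$ with eigenvalue~$\kappa_j := \lambda_j^{-1/\alpha} - \tau^2$. Since~$\int_\torus g_j = 0$, the constant mode is excluded, so~$\kappa_j = k_j^2$ for some integer~$k_j \geq 1$, and in particular~$\lambda_j = (\tau^2 + k_j^2)^{-\alpha}$ depends only on~$k_j$. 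Because the~$\lambda_j$ are labelled in decreasing order and the nonzero eigenvalues of~$-\laplacian$ on~$\torus$ are~$1, 4, 9, \dots$, each of multiplicity two, selecting the first~$J \in 2\nat_{>0}$ eigenfunctions amounts precisely to collecting the complete two-dimensional eigenspaces associated with~$k = 1, \dots, J/2$. Moreover, since~$g_j$ is mean-zero, the Poisson solution of~$-\laplacian \phi_j = g_j$ in~$H^1(\mu_U) \cap L^2_0(\mu_U)$ is~$\phi_j = k_j^{-2} g_j$, so that~$\phi_j' = k_j^{-2} g_j'$.

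It then remains to compute~$\sum_{j=1}^{J} \lambda_j k_j^{-4} \abs{g_j'(x)}^2$. For an orthonormal basis~$(g_j)$ of a fixed eigenspace, $\sum_j \abs{g_j'(x)}^2$ is the squared Euclidean norm of the vector~$(g_j'(x))_j$ and is therefore invariant under orthogonal changes of basis within that eigenspace, so we may compute with the basis~$x \mapsto \pi^{-1/2}\cos(kx)$ and~$x \mapsto \pi^{-1/2}\sin(kx)$ of the eigenspace attached to~$k^2$, which is orthonormal in~$L^2(\torus)$ consistently with the normalization~$\int_\torus f_j^2 \e^{-2V} = \int_\torus g_j^2 = 1$. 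For these two functions the squared derivative equals~$\pi^{-1} k^2 \sin^2(kx)$ and~$\pi^{-1} k^2 \cos^2(kx)$ respectively, whose sum is the constant~$\pi^{-1} k^2$. Hence the contribution of the~$k$-th eigenspace, with common eigenvalue~$\lambda = (\tau^2 + k^2)^{-\alpha}$, to~$\sum_{j} \lambda_j k_j^{-4} \abs{g_j'}^2$ is the constant~$\lambda/(\pi k^2)$, and summing over~$k = 1, \dots, J/2$ shows that~$\sum_{j=1}^J \lambda_j \abs{\phi_j'}^2$ is constant on~$\torus$, which is the desired criterion.

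I expect the main obstacle to be the careful identification of the selected eigenfunctions — in particular justifying that~$J$ being even ensures whole eigenspaces are picked up, handling the exclusion of the constant mode, and making the reduction to the trigonometric basis precise — rather than the analytic steps, which amount to an elementary Poisson solve and the Pythagorean identity.
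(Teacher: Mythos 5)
Your proof is correct and follows essentially the same route as the paper: identify the eigenfunctions of $\mathcal K$ as $\e^{V}$ times trigonometric functions, solve the Poisson equation explicitly when $\mathcal L_U = \laplacian$, and use $\sin^2 + \cos^2 = 1$ within each paired eigenspace to show that $\sum_j \lambda_j \abs{\phi_j'}^2$ is constant, so the derivative \eqref{eq:functional_derivative_class_of_obs} vanishes. Your additional care in checking $U=-V\in\mathfrak U_0$ and in noting that the quantity $\sum_j \abs{g_j'}^2$ over an eigenspace is invariant under orthogonal changes of basis (so the conclusion does not depend on which orthonormal eigenbasis is selected) is a small but genuine tightening of the paper's argument, which simply asserts the specific sine/cosine eigenfunctions.
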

\begin{proof}
    The eigenfunctions of the operator~$\mathcal K$ are given by
    \begin{equation}
        \label{eq:definition_trigonometric_functions}
        f_j(x) = \e^{V}
        \left\{ \begin{aligned}
                &\sin\left(\frac{j + 1}{2}x\right), \quad && \text{if $j$ is odd}, \\
                &\cos\left(\frac{j}{2}x\right), \quad && \text{if $j$ is even}.
        \end{aligned} \right.
    \end{equation}
    Note that $I_j = 0$ for all $1 \leq j \leq J$.
    When $U = -V$,
    the generator of the Markov semigroup associated with~\eqref{eq:mix} is just the Laplacian operator.
    Therefore, for a given $j \in \{1, \dotsc, J\}$,
    the solution to the Poisson equation~$- \mathcal L_U \phi_j = (f_j - I_j) \e^{U}$ is given by
    \begin{equation*}
        \phi_j(x) = \left\lceil \frac{j}{2} \right\rceil^{-2} f_j(x).
    \end{equation*}
    Since $\lambda_j = \lambda_{j+1}$ for all odd values of $j$
    and since $\sin^2 + \cos^2 = 1$,
    we deduce that the sum on the right-hand side of~\eqref{eq:functional_derivative_class_of_obs} is constant,
    implying that the functional derivative of $\sigma^2[U]$ is zero when evaluated at the biasing potential $U = -V$.
\end{proof}
\begin{remark}
    The choice of the operator~$\mathcal K$ is motivated by the form~\eqref{eq:func_der_ztilde} of the desired observables,
    which is itself motivated by the fact that these observables lead to Poisson equations with explicit trigonometric solutions when~$U = -V$.
\end{remark}
\begin{remark}
    In this section,
    we assumed for simplicity that the random observable admitted a finite expansion of the form~\eqref{eq:random_observable}.
    The results we obtained could in principle be extended to the case of a more general Gaussian field,
    with an infinite Karhunen--Loève series.
    For background on Gaussian variables in infinite dimension,
    see for example~\cite[Section 1.5]{MR3288096}.
    The reference~\cite{MR3839555} is also useful for understanding the regularity of infinite series of the form~\eqref{eq:random_observable}.
\end{remark}

\section{Examples and numerical experiments}
\label{sec:examples_and_numerical_experiments}
We begin in~\cref{sub:one_dim_examples} by presenting examples and numerical experiments in dimension 1.
Then, in~\cref{sub:two_dim_examples},
we present numerical experiments for the case where the state space is $\torus^2$ and the optimal biasing potential is approximated by steepest descent.
Finally, in~\cref{sub:class_examples},
we illustrate the approach proposed in~\cref{sec:minimizing_the_expected_asymptotic_variance}.

\subsection{One-dimensional examples}
\label{sub:one_dim_examples}

In this subsection,
we present a few examples illustrating the optimal biasing potential~$U$ for various observables and underlying potentials~$V$.
In all the figures,
the optimal potential depicted is calculated numerically using the steepest descent approach presented in~\cref{sub:explicit_optimal_potential_in_dimension_one}.
It is apparent in~\cref{example:one_dim_cosine,example:1d_subset_example} that this approach indeed yields the optimal biasing potential~\eqref{eq:optu},
an explicit expression of which is given in these examples.

The first few examples aim at illustrating settings where the optimal biasing potential exhibits different levels of singularity.
The optimal potential is smooth in~\cref{example:smooth},
it exhibits two singularity points in~\cref{example:one_dim_cosine},
and it blows up over a whole interval in~\cref{example:1d_subset_example}.
In these examples, the reference probability measure is unimodal,
and the gain in asymptotic variance obtained from using the optimal potential is small.
Finally, in example \cref{example:1d_metastable},
a multi-modal reference probability measure is considered,
and it is observed that importance sampling enables a considerable decrease in asymptotic variance in this case.
Without loss of generality,
we normalize in the figures the potentials~$U$ so that the minimum value of $V+U$ over the domain considered is 0.

\begin{example}
    \label{example:smooth}
    Assume that $\domain = \real$ and $f = V'$.
    Then $I = 0$ and $F(x) - A_{\real} = \e^{-V(x)}$;
    in this case,
    the optimal biasing potential~\eqref{eq:optu} is $U_* = 0$.
\end{example}

\begin{example}
    \label{example:one_dim_cosine}
    Assume that $\domain = \torus$, $V = 0$ and $f(x) = \cos(x)$.
    Then $I = 0$ and $F(x) = \sin(x)$.
    The constant $A^*_{\torus}$ in~\eqref{eq:equation_A} is 0,
    and so the optimal biasing potential~\eqref{eq:optu} is given by
    \[
        U_*(x) = - \log \abs{\sin(x)}.
    \]
    This potential is illustrated in~\cref{fig:optimal_perturbation_potential}
    together with the corresponding measure~$\mu_{U_*}$.
    Notice that the singularities divide the domain into the two regions $[-\pi, 0]$ and $[0, \pi]$,
    but the average of~$f$ with respect to $\e^{-V}$ conditioned to either region is equal to $I = 0$,
    in accordance with the discussion in~\cref{remark:domain_divided}.

    When $U = 0$,
    the solution to the Poisson equation~\eqref{eq:poisson} is given by~$\phi(x) = \cos(x)$ and $Z = 1$,
    so
    the asymptotic variance~\eqref{eq:asymvar_1d} is given by
    \[
        \sigma^2_f[U = 0] = 2 \int_{-\pi}^{\pi} \abs{\sin(x)}^2 \, \frac{\d x}{2 \pi} = 1.
    \]
    The infimum~\eqref{eq:1dsig} of the asymptotic variance,
    on the other hand,
    is given by
    \[
        \frac{2}{Z^2} \bigg(\int_{\torus} \bigl\lvert F(x) - A^*_{\torus} \bigr\rvert \d x \bigg)^2 =
        2 \bigg(\int_{\torus} \bigl\lvert \sin(x) \bigr\rvert \frac{\d x}{2 \pi} \bigg)^2
        = \frac{8}{\pi^2} = 0.810\dotsc
    \]
    The optimal biasing potential therefore leads a reduction in variance of about 19\%.
\end{example}
\begin{figure}[ht]
    \centering
    \includegraphics[width=0.49\linewidth]{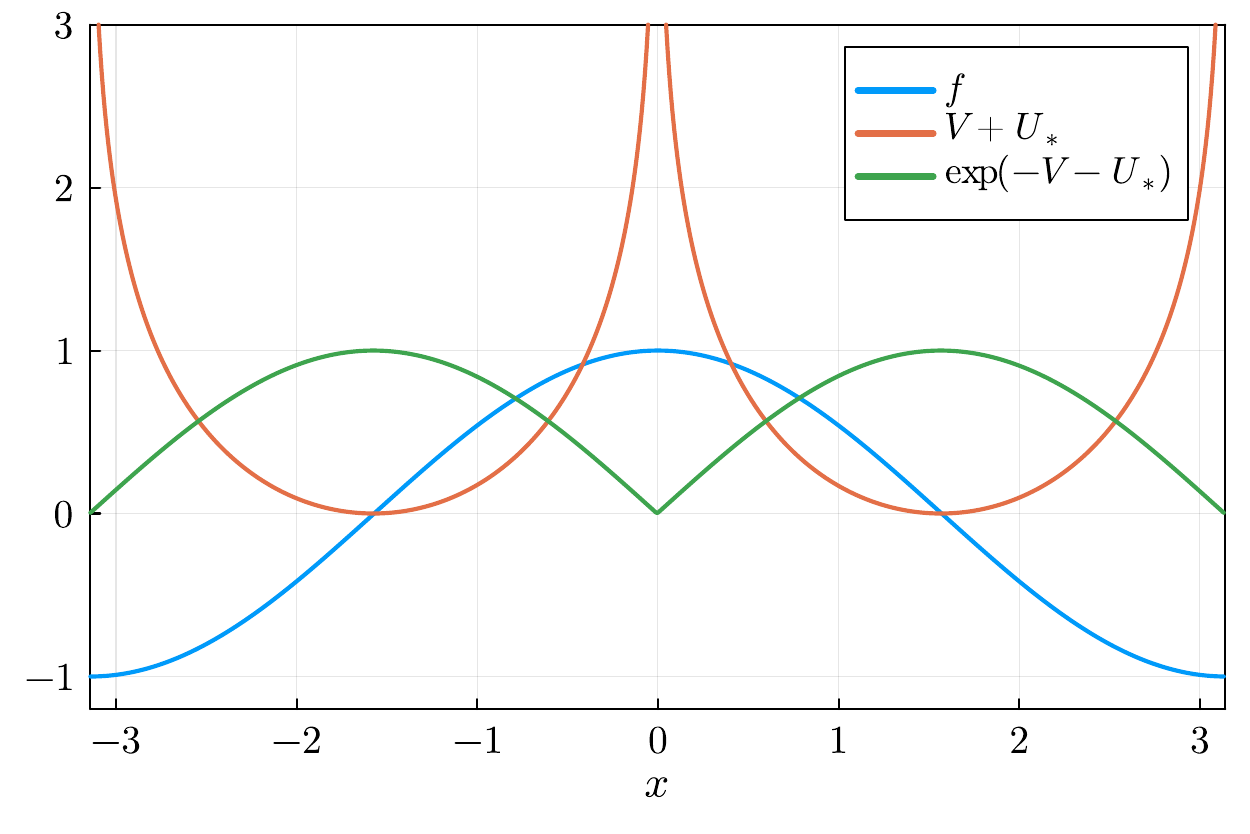}
    \includegraphics[width=0.49\linewidth]{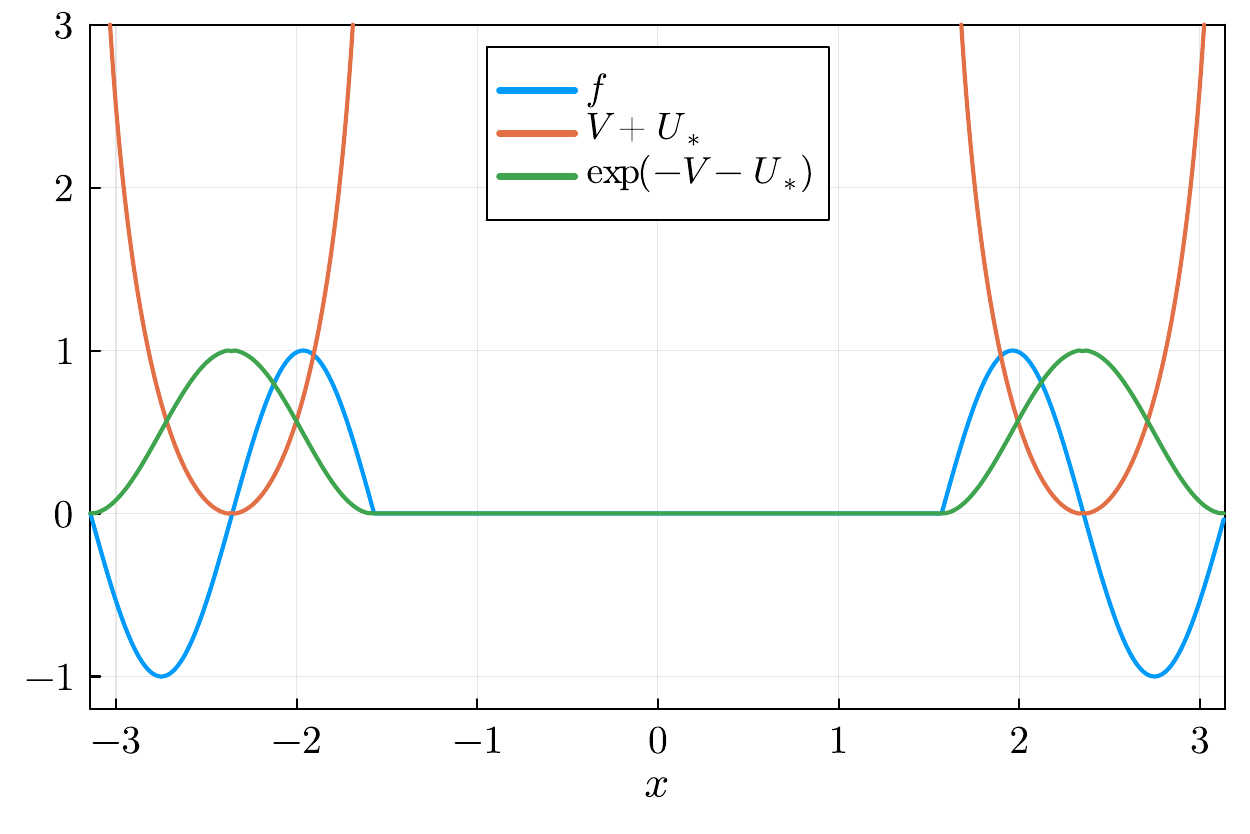}
    \caption{Optimal potential for~\cref{example:one_dim_cosine} (left) and~\cref{example:1d_subset_example} (right).}
    \label{fig:optimal_perturbation_potential}
\end{figure}

As we show above in~\cref{corollary:no_smooth_minimizer},
singularities in the optimal biasing potential are inevitable when the state space is the torus in any dimension.
Moreover, it is possible to construct examples where the optimal measure~$\mu_{U_*}$ is supported on a subset of $\torus$,
as shown in the following example.
\begin{example} 
    \label{example:1d_subset_example}
    Consider the case where $\domain = \torus$ and $V(x) = 0$,
    with the observable
    \[
        f(x) =
        \begin{cases}
            \sin(4 \abs{x}) \qquad &\text{if $\abs{x} \geq \frac{\pi}{2}$}, \\
            0 \qquad &\text{otherwise}.
        \end{cases}
    \]
    Then $I = 0$ and
    \[
        F(x) =
        \int_{0}^x f(\xi) \, \e^{-V(\xi)} \, \d \xi =
        \begin{cases}
            \frac{1}{4} \bigl(- 1 + \cos(4 x)\bigr) \qquad &\text{if $x \leq -\frac{\pi}{2}$}, \\
            \frac{1}{4} \bigl(1 - \cos(4 x)\bigr)   \qquad &\text{if $x \geq \frac{\pi}{2}$}, \\
            0 \qquad &\text{otherwise}.
        \end{cases}
    \]
    The constant $A^*_{\torus}$ in~\eqref{eq:equation_A} is again zero,
    and so the optimal biasing potential is~$U_*(x) = - \log |F(x)|$.
    The optimal biasing potential and the corresponding measure~$\mu_{U_*}$ for this example
    are depicted in \cref{fig:optimal_perturbation_potential}.
    As the figure illustrates,
    the Lebesgue density of~$\mu_{U_*}$ with respect to the Lebesgue measure is zero over the interval~$[-\pi/2, \pi/2]$.
\end{example}

To conclude this section,
we present an example where using the biasing potential $U_*$ leads to a significant decrease of the variance.
\begin{example}
    \label{example:1d_metastable}
    Consider again the case where $\domain = \torus$,
    with this time $V(x) = 5\cos(2 x)$ and the observable~$f(x) = \sin(x)$.
    The reference dynamics, i.e.\ the dynamics~\eqref{eq:mix} with $U = 0$,
    is metastable in view of the high potential barrier;
    the asymptotic variance when~$U=0$ for the observable considered,
    estimated numerically from~\eqref{eq:asymvar_1d},
    is equal to 3459 after rounding to the closest integer.

    The optimal total potential~$V + U_*$ and probability distribution~$\mu_{U_*}$ are depicted in~\cref{fig:optimal_perturbation_potential_1d_metastable}.
    The asymptotic variance associated with~$U_*$ is about~$3.64$,
    roughly $1000$ times smaller than the asymptotic variance for the reference dynamics.
\end{example}
\begin{figure}[ht]
    \centering
    \includegraphics[width=0.49\linewidth]{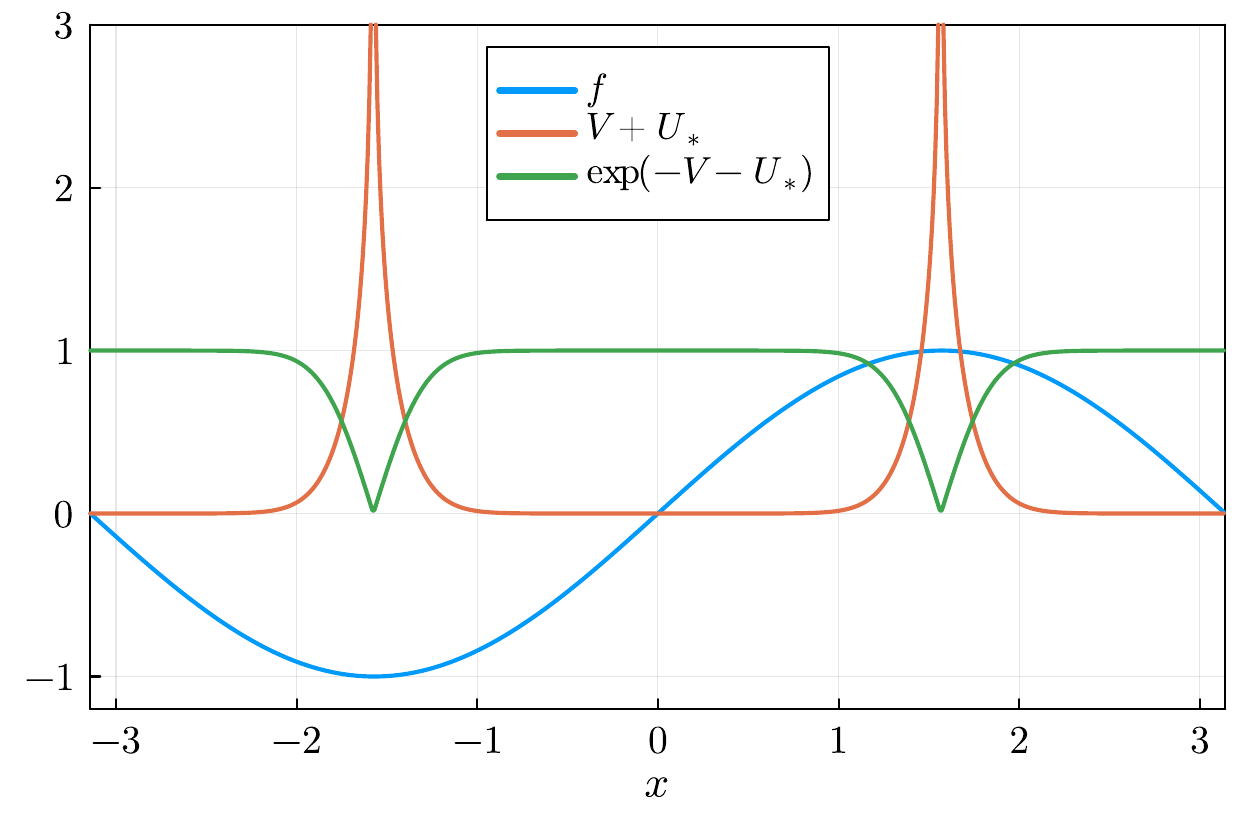}
    \includegraphics[width=0.49\linewidth]{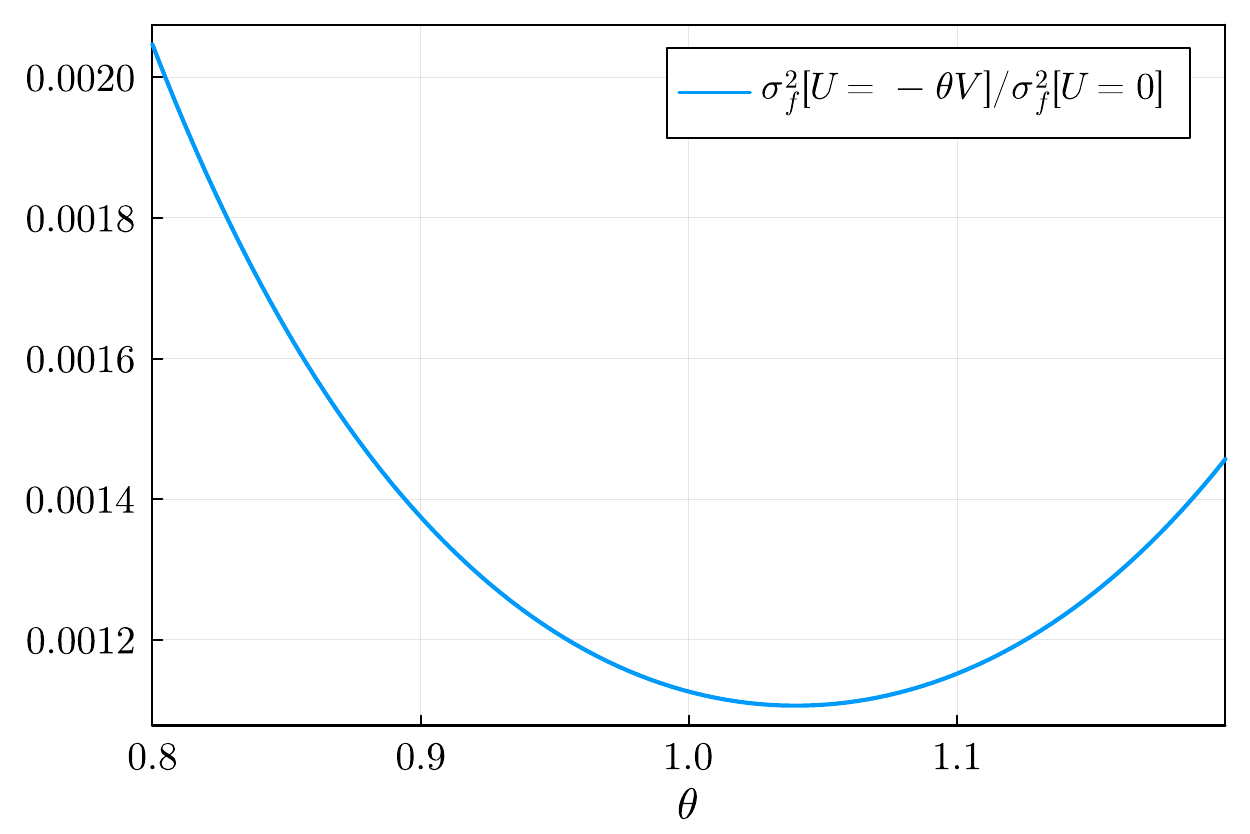}
    \caption{%
        Optimal potential for~\cref{example:1d_metastable} (left),
        and asymptotic variance corresponding to the biasing potential $U = -\theta V$ over the range $[0.8, 1.2]$ of values for $\theta$.
    }
    \label{fig:optimal_perturbation_potential_1d_metastable}
\end{figure}

The numerical values of the asymptotic variance for the examples considered in this section and different choices of~$U$ are summarized in~\cref{tab:asym_vars_1d}.
We also present in this table,
for each of the examples, the value of the asymptotic variance corresponding to case where~$U = - \theta V$,
for the value of~$\theta$ that yields the largest variance reduction.
This approach is found to yield a variance reduction close to the optimal one in the setting of~\cref{example:1d_metastable}.
\begin{table}[ht]
    \centering
    \begin{tabular}{|c|c|c|c|c|}
         \hline
         Test case & $U = 0$ & $U = -V$ & $U = - \theta_* V$ & Optimal $U$
         \\ \hline
         \cref{example:one_dim_cosine} & $1$ & $1$ & $1$ & $0.811$
         \\ \hline
         \cref{example:1d_subset_example} & $1$ & $1$ & $1$ & $0.334$
         \\ \hline
         \cref{example:1d_metastable} & $1$ & $0.00113$ & $0.00111$ ($\theta_* = 1.038$) & $0.00105$
         \\ \hline
    \end{tabular}
    \caption{%
        Ratio of the asymptotic variance of the importance sampling estimator~\eqref{eq:estimator} relative to its value in the case where $U = 0$,
        for different choices of the biasing potential~$U$ and in the one-dimensional setting.
        The parameter $\theta_*$ is the minimizer of $\sigma^2_f[-\theta_* V]$,
        calculated by using the \texttt{Optim.jl}~\cite{Optim.jl-2018} implementation of the LBFGS solver.
        The numbers are rounded to 3 significant digits.
    }
    \label{tab:asym_vars_1d}%
\end{table}

\subsection{Two-dimensional examples}
\label{sub:two_dim_examples}

By~\cref{corollary:no_smooth_minimizer},
the optimal potential when the domain is the two-dimensional torus~$\torus^2$ also exhibits singularities.
In all the examples presented hereafter, these are line singularities.
In order to approximate the optimal potential in the numerical experiments presented in this section,
we use the steepest descent approach presented in~\cref{sub:optimal_solution_in_the_multi_dimensional_setting} with~$150 \times 150$ discretization points.
We begin in \cref{example:2d_first_example,example:2d_difficult} by considering cases where
the reference dynamics does not suffer from metastability.
The gain in asymptotic variance provided by importance sampling is small in these settings.
Then, in~\cref{example:2d_metastable}, we consider a setting where the reference measure is multi-modal,
for which importance sampling leads to a considerable decrease in asymptotic variance.

\begin{example}
    \label{example:2d_first_example}
    We consider the case where the potential is~$V(x) = 0$ and the observable is~$f(x) = \sin(x_1) + \sin(x_2)$.
    This observable has average zero not only with respect to~$\mu$,
    but also with respect to the restrictions of this measure to the subsets $[-\pi/2, \pi/2] \times [-\pi/2, \pi/2]$,
    $[-\pi/2, \pi/2] \times [\pi/2, 3\pi/2]$, $[\pi/2, 3\pi/2] \times [-\pi/2, \pi/2]$,
    and~$[\pi/2, 3\pi/2] \times [\pi/2, 3\pi/2]$
    which together form a partition of $\torus^2$.
    Here we identify subsets of $\real^2$ with their image under the quotient map $\real^2 \to \torus^2$.
    Interestingly,
    the total potential~$V+U$ corresponding to the optimal biasing potential exhibits singularities precisely at the boundaries between these regions,
    effectively dividing the state space into four separate regions;
    see~\cref{fig:2d_first_example}.
    It appears clearly from the right panel in the same figure that,
    in agreement with~\cref{corollary_critical_points},
    the solution to the corresponding Poisson equation~\eqref{eq:poisson} is affine by parts,
    with discontinuities of the first derivative at singularities of $V+U$.
    The reduction in asymptotic variance corresponding to the optimal potential in this case is only about~$19\%$.
\end{example}
\begin{figure}[ht]
    \centering
    \includegraphics[width=0.49\linewidth]{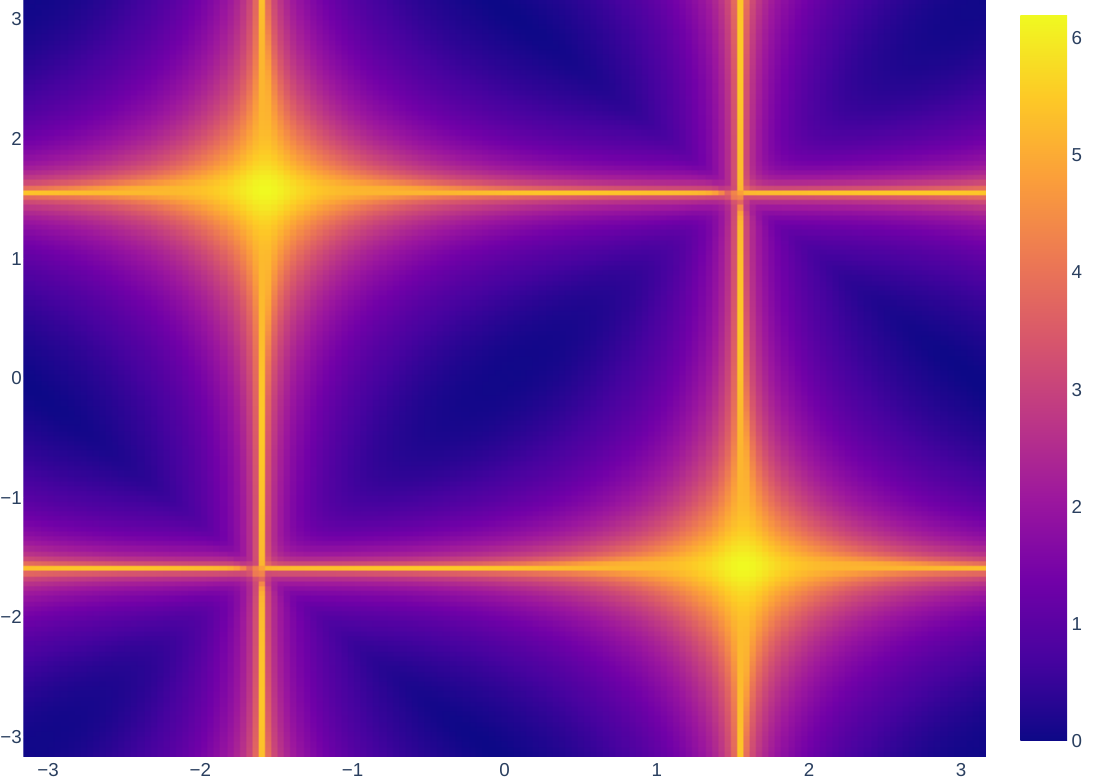}
    \includegraphics[width=0.49\linewidth]{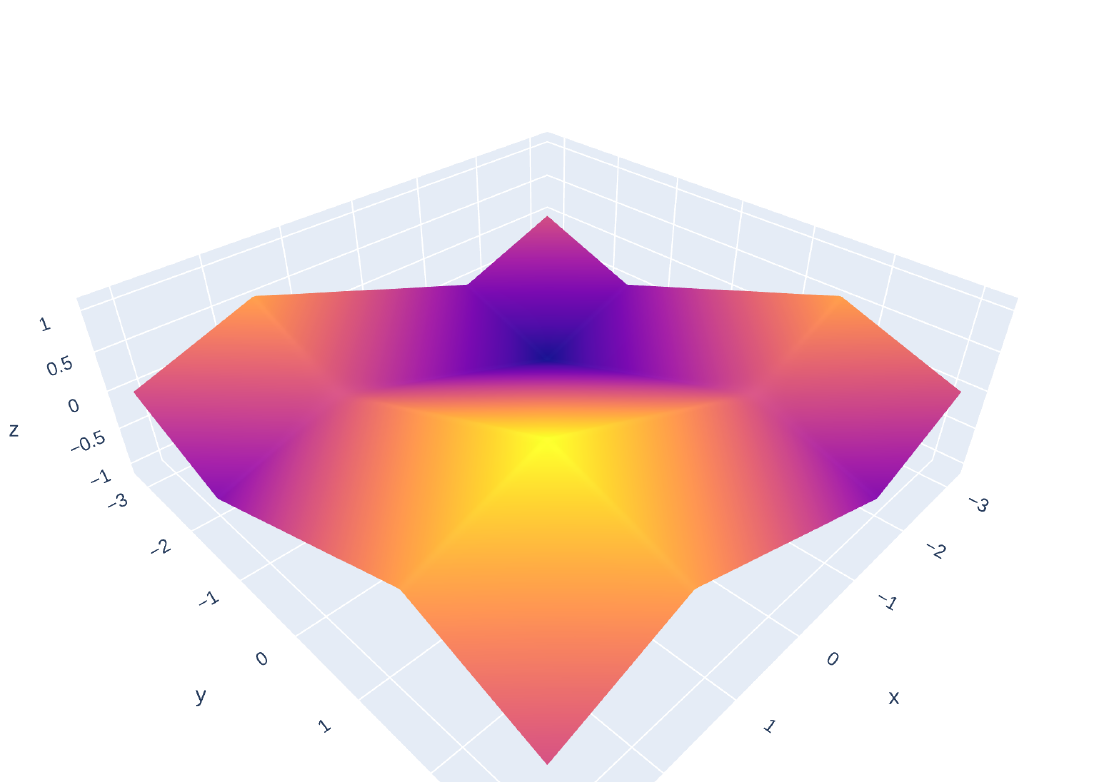}
    \caption{%
        Optimal potential~$V+U$ for~\cref{example:2d_first_example} (left),
        together with the solution to the associated Poisson equation~\eqref{eq:poisson} (right).
    }
    \label{fig:2d_first_example}
\end{figure}

We now present an example with a non-uniform reference distribution~$\mu$.
\begin{example}
    \label{example:2d_difficult}
    In this example, we consider that the potential and observable are given by
    \[
        V(x) = \exp\left(\cos(x_1) \sin(x_2) + \frac{1}{5} \cos(3x_1)\right),
        \qquad
        f(x) = \sin\Bigl(x_1 + \cos(x_2)\Bigr)^3.
    \]
    The potential~$V$ and observable~$f$ are illustrated respectively in the top left and right panels of~\cref{fig:2d_difficult}.
    The corresponding optimal total potential $V+U$,
    together with the associated solution to the Poisson equation~\eqref{eq:poisson},
    are depicted in the bottom left and right panels respectively.
    Once again, it appears that the optimal potential divides the domain into two separate regions where the averages of~$f$ are the same.
    The reduction in asymptotic variance obtained by employing the perturbed dynamics~\eqref{eq:mix} is about~$20\%$.
\end{example}

\begin{figure}[ht]
    \centering
    \includegraphics[width=0.49\linewidth]{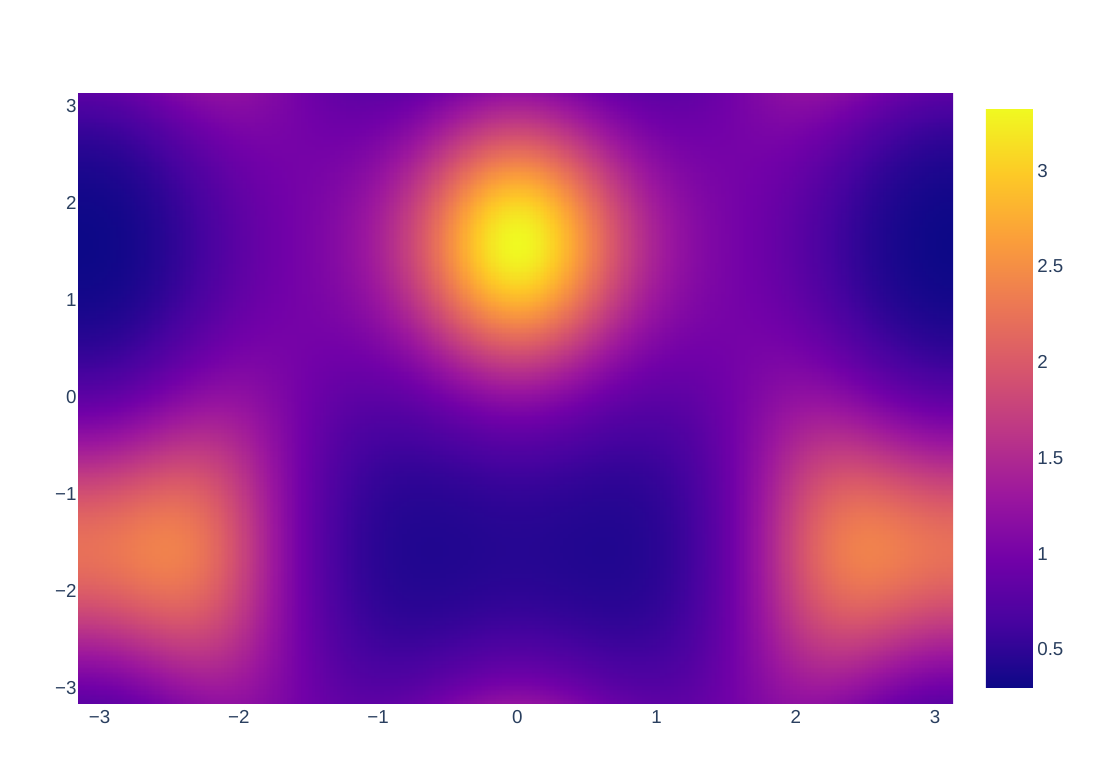}
    \includegraphics[width=0.49\linewidth]{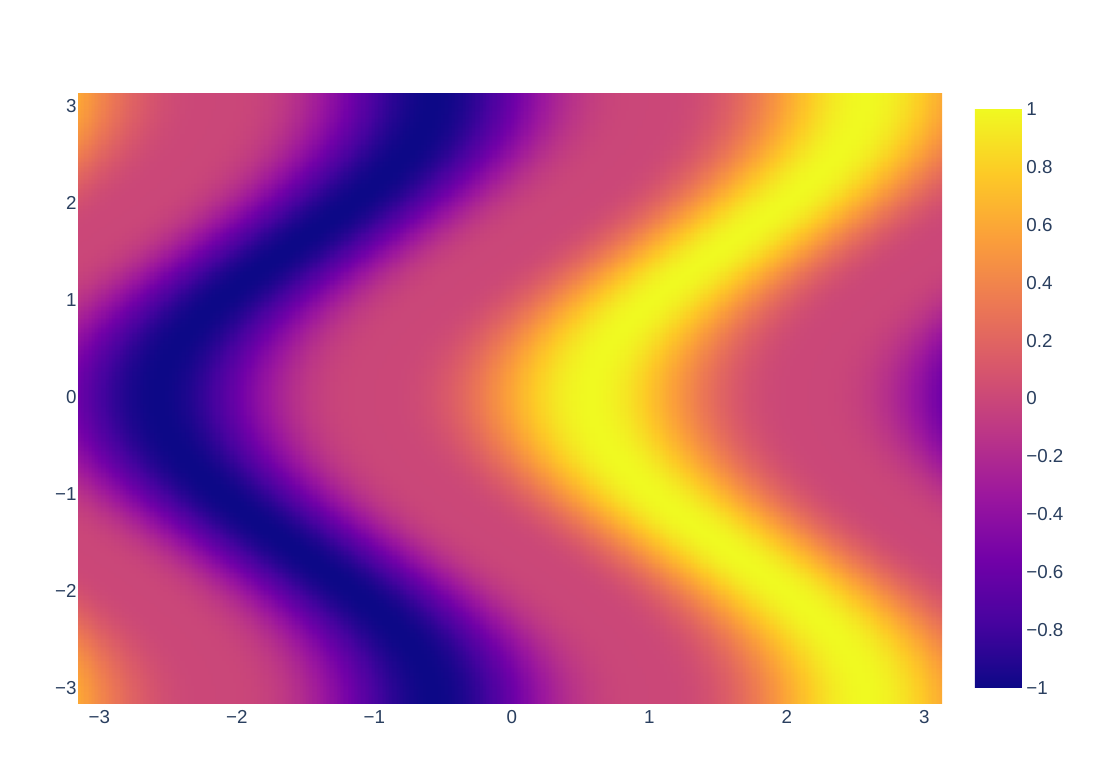}
    \includegraphics[width=0.49\linewidth]{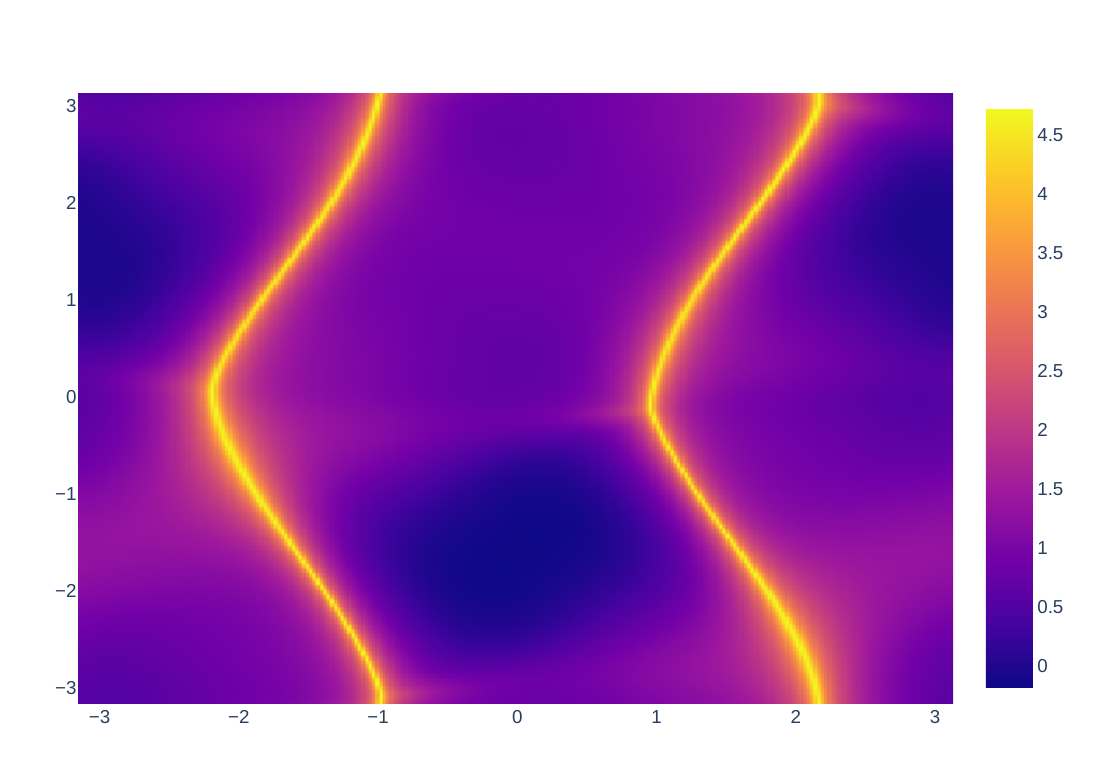}
    \includegraphics[width=0.49\linewidth]{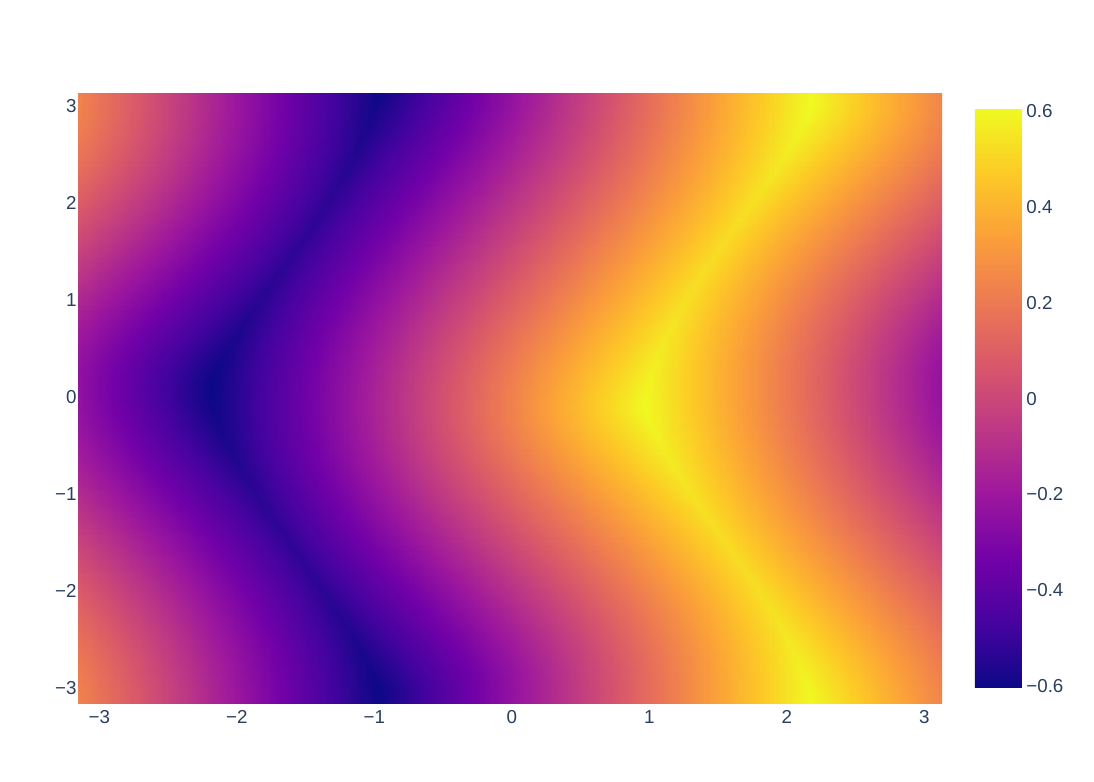}
    \caption{%
        Unperturbed potential~$V$ (top left), observable~$f$ (top right), optimal potential~$V+U_*$ (bottom left), and corresponding solution to the Poisson equation~$\phi_U$ (bottom right)
        for~\cref{example:2d_difficult}.
    }
    \label{fig:2d_difficult}
\end{figure}

To conclude this section,
we present an example where the target probability distribution is multimodal,
in which case a considerable reduction of the asymptotic variance can be achieved.
\begin{example}
    \label{example:2d_metastable}
    We consider the case where $V(x) = 2 \cos(2 x_1) - \cos(x_2)$
    and~$f(x) = \sin(x_1)$.
    The potential~$V(x)$ has two global minima located at~$(\pi/2, 0)$ and~$(-\pi/2, 0)$,
    and the observable~$f(x)$ takes different values when evaluated at these points.
    The optimal total potential~$V+U$ is illustrated in~\cref{fig:2d_metastable}.
    We observe two line singularities
    which effectively divide the domain into two separate regions where the average of~$f$ is equal to $I = 0$.
    The reduction in asymptotic variance obtained by employing the perturbed dynamics~\eqref{eq:mix} is about~$86\%$.
\end{example}

\begin{figure}[ht]
    \centering
    \includegraphics[width=0.49\linewidth]{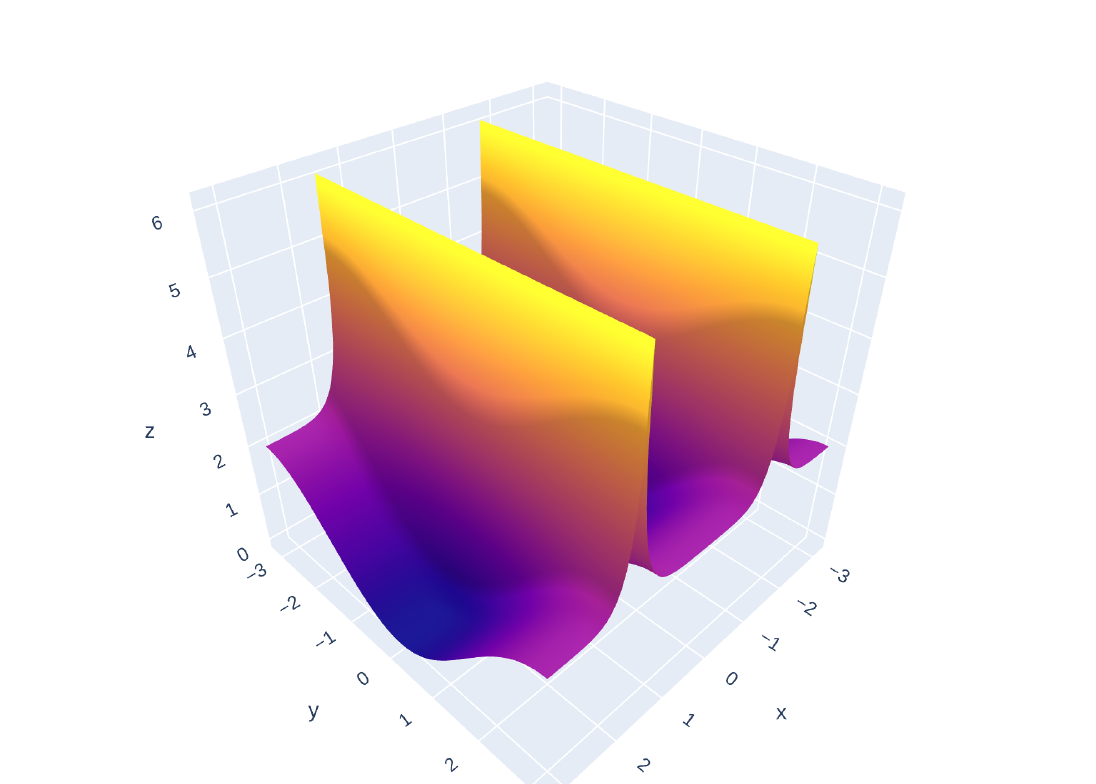}
    \includegraphics[width=0.49\linewidth]{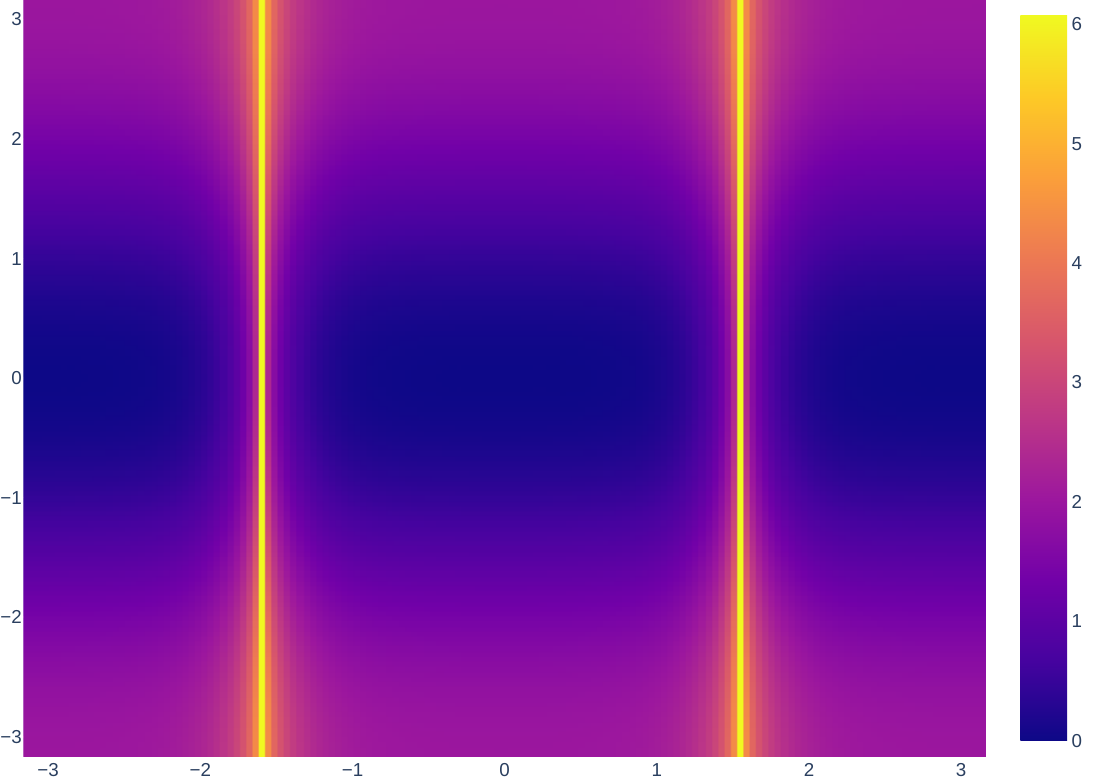}
    \caption{Optimal potential~$V+U_*$ for~\cref{example:2d_metastable}.}
    \label{fig:2d_metastable}
\end{figure}

The numerical values of the asymptotic variances for the examples considered in this section and different choices of~$U$ are collated in~\cref{tab:asym_vars_2d}.
The asymptotic variances corresponding to the simple biasing with~$U = - \theta V$ with optimal~$\theta$ are also presented.
This approach is found to perform quite well in the multimodal setting of~\cref{example:2d_metastable}.

\begin{table}[ht]
    \centering
    \begin{tabular}{|c|c|c|c|c|}
         \hline
         Test case & $U = 0$ & $U = -V$ & $U = - \theta_* V$ & Optimal $U$
         \\ \hline
         \cref{example:2d_first_example} & $1$ & $1$ & $1$ & $0.811$
         \\ \hline
         \cref{example:2d_difficult} & $1$ & $0.997$ & $0.987$ ($\theta_* = 0.614$) & $0.804$
         \\ \hline
         \cref{example:2d_metastable} & $1$ & $0.177$ &$0.177$ ($\theta_* = 0.994$) & $0.132$
         \\ \hline
    \end{tabular}
    \caption{%
        Ratio of the asymptotic variance of the importance sampling estimator~\eqref{eq:estimator} relative to its value in the case where $U = 0$,
        for different choices of the biasing potential~$U$ and in the two-dimensional setting.
        The parameter $\theta_*$ is the minimizer of $\sigma^2_f[-\theta_* V]$,
        calculated by using the \texttt{Optim.jl}~\cite{Optim.jl-2018} implementation of the LBFGS solver.
        The numbers are rounded to 3 significant digits.
    }
    \label{tab:asym_vars_2d}%
\end{table}

\begin{remark}
    In all the examples presented in this subsection,
    the optimal biasing potential effectively partitions the domain into several regions that suffice for the estimation of~$I$.
    It is natural to wonder whether this observation holds true in general:
    is it \emph{always} the case that,
    when such partitioning of the domain occurs,
    averages of the observable with respect to the corresponding conditioned measures coincide with the target average~$I$?
    We gave in~\cref{remark:domain_divided} a positive answer to this question in the one-dimensional setting.
    Although we are not able to provide an equally rigorous answer in the multi-dimensional setting,
    we motivate hereafter our belief that the answer is also positive in this case.
    To this end, suppose that~$U_*$ partitions the domain into a number of regions,
    corresponding to the connected components of $\{U_* < \infty\}$.
    Suppose also that there exists an ensemble~$(U_{\varepsilon})_{\varepsilon > 0}$ of smooth biasing potentials such that
    $\sigma^2_f[U_{\varepsilon}] \to \sigma^2_f[U_*]$ and $\e^{-V-U_{\varepsilon}} \to \e^{-V-U_*}$ in $L^{\infty}(\domain^d)$ as $\varepsilon \to 0$.
    In particular, it holds under this assumption that $U_{\varepsilon}(x) \to U_*(x)$ for almost all $x \in \{U_* < \infty\}$.
    It is well known, see e.g.~\cite[Section 7.3]{MR3288096},
    that the average escape time from a potential well for the overdamped Langevin dynamics~\eqref{eq:mix} scales exponentially with respect to the height of the potential barrier.
    Therefore, for very small~$\varepsilon$,
    it would take a very long time for the dynamics to visit all the regions of the state space.
    In these conditions,
    the asymptotic variance would be very large,
    unless the averages of the observable with respect to the probability measure~$\mu$ conditioned to each of the regions happen to coincide.
    More precisely, if~$\sigma^2_f[U_{\varepsilon}]$ does not diverge as~$\varepsilon \to 0$,
    then the conditional averages in all the regions must necessarily coincide.
\end{remark}

\subsection{Minimizing the asymptotic variance for a class of observables}
\label{sub:class_examples}
In this section,
we illustrate the approach proposed in~\cref{sec:minimizing_the_expected_asymptotic_variance},
first for a one-dimensional example and then for a two-dimensional example.

\begin{example}
    \label{example:1d_class}
    We consider the same potential as in~\cref{example:1d_metastable},
    i.e.\ $V(x) = 5 \cos(2x)$,
    and a set-up similar to that of~\cref{proposition:when_is_free_energy_biasing_optimal}.
    Specifically, the observables
    and associated weights,
    denoted by~$(\lambda_j)_{1 \leq j \leq J}$ in~\eqref{eq:random_observable},
    are given by the first $J = 21$ eigenpairs of the operator $(-\laplacian + \mathcal I)^{-1}$,
    equipped with periodic boundary conditions on the space of mean-zero functions with respect to the Lebesgue measure.
    The associated Gaussian random field $f$ is stationary,
    in the sense that the covariance ${\rm cov}\bigl(f(x_1), f(x_2)\bigr)$ depends only on the difference~$x_1 - x_2$.
    The optimal potential in this case is illustrated in the left panel of~\cref{fig:1d_class}.
    In contrast with the examples of~\cref{sub:one_dim_examples},
    the optimal potential is smooth and, therefore, more easily usable in an MCMC scheme.
    The average asymptotic variance~$\sigma^2[U]$ given in~\eqref{eq:expectation_asymptotic_variance} is reduced by a factor equal to about~900.

    In the right-panel of~\cref{fig:1d_class},
    we illustrate the optimal potential when the observables are instead the eigenfunctions
    of $\e^{V} (-\laplacian + \mathcal I)^{-1} \e^{-V}$ equipped with periodic boundary conditions,
    which is precisely the setting considered in~\cref{proposition:when_is_free_energy_biasing_optimal}.
    In this case, the optimal potential is indeed $V+U = 0$,
    in agreement with the latter result.
    The average asymptotic variance~$\sigma^2[U]$ is reduced by a factor equal to about~700.
\end{example}

\begin{figure}[ht]
    \centering
    \includegraphics[width=0.49\linewidth]{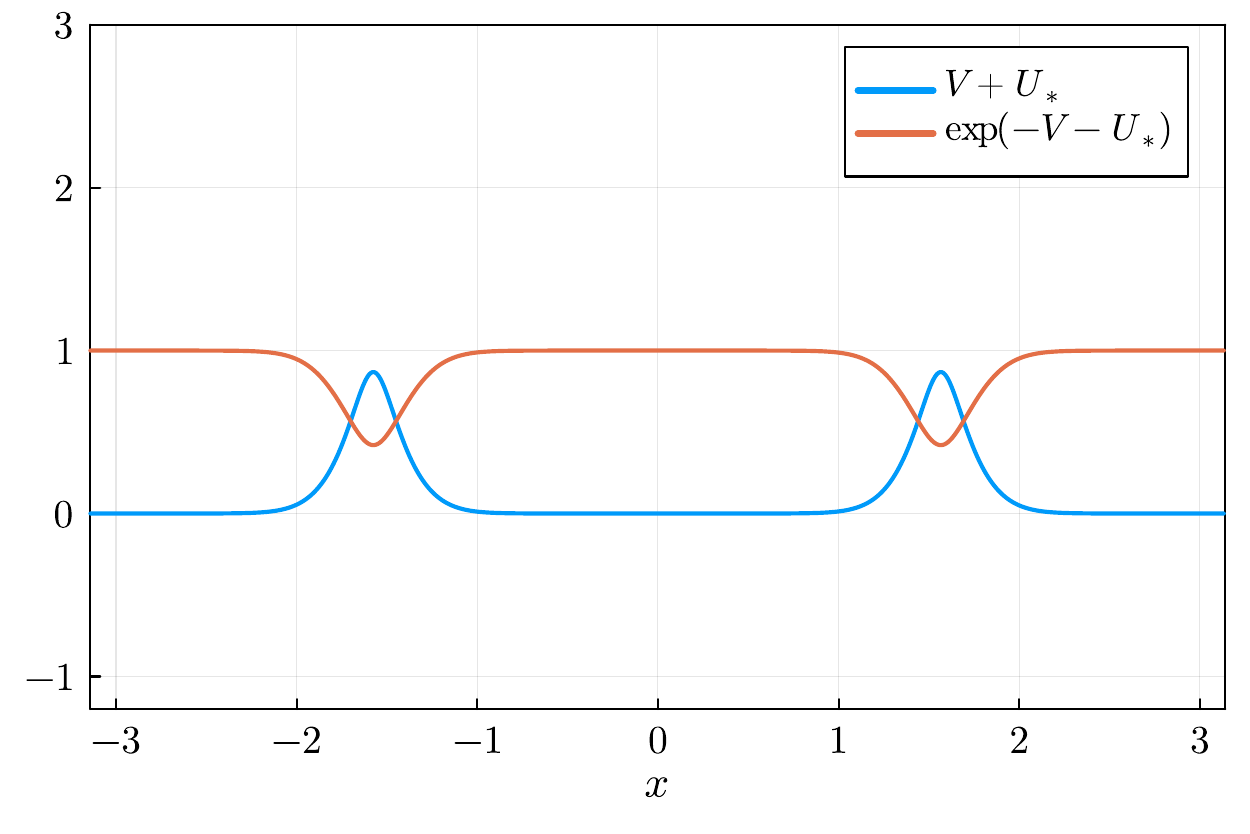}
    \includegraphics[width=0.49\linewidth]{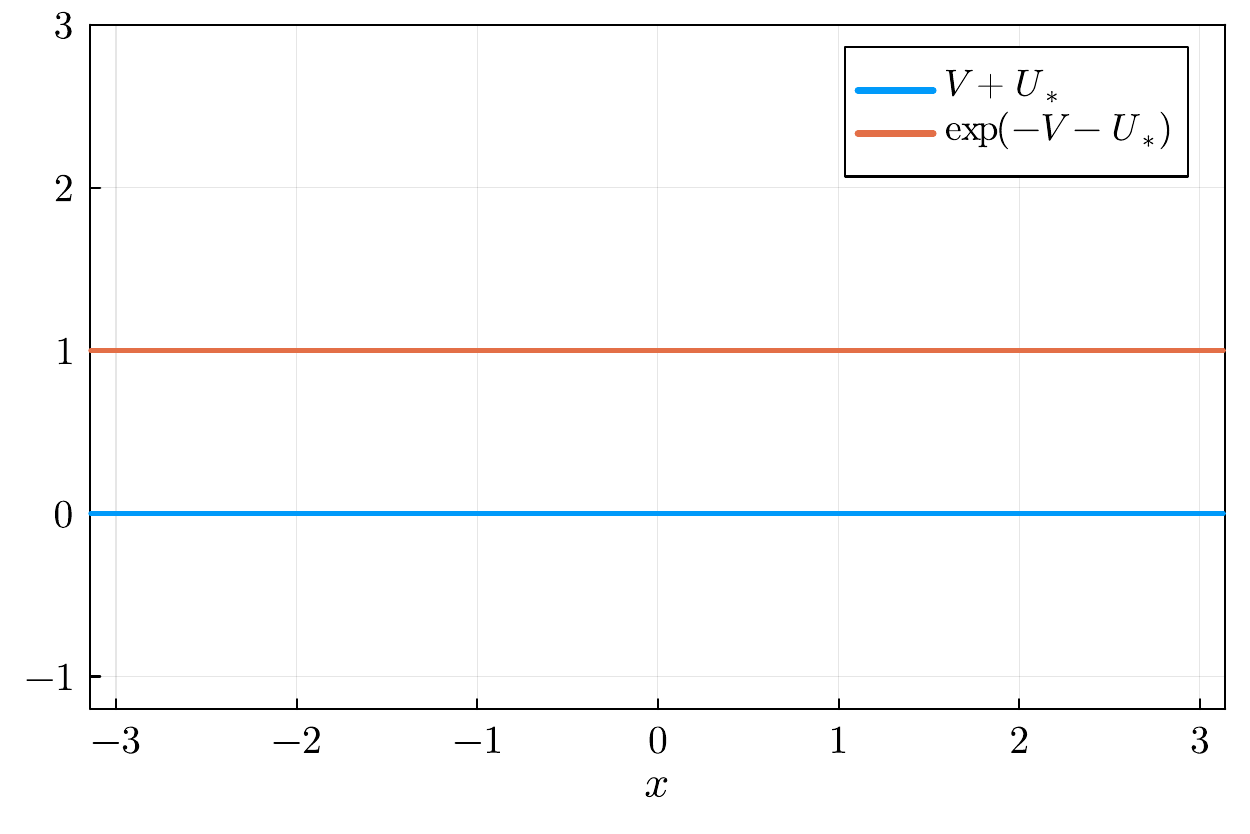}
    \caption{%
        Optimal potentials for~\cref{example:1d_class},
        when different probability measures are placed on the observables.
    }
    \label{fig:1d_class}
\end{figure}

\begin{example}
    \label{example:2d_class}
    We consider the same potential as in~\cref{example:2d_metastable},
    i.e.\ $V(x) = 2 \cos(2 x_1) - \cos(x_2)$.
    For the observables and corresponding weights in~\eqref{eq:random_observable},
    we take the eigenpairs of the operator~$(-\laplacian + \mathcal I)^{-1}$ with periodic boundary conditions on the space of mean-zero functions with respect to the Lebesgue measure.
    The eigenfunctions are of the form
    \[
        \cos(m x_1) \cos(n x_2),
        \qquad
        \cos(m x_1) \sin(n x_2),
        \qquad
        \sin(m x_1) \cos(n x_2),
        \qquad
        \sin(m x_1) \sin(n x_2).
    \]
    We consider all the eigenpairs with $m \leq 4$ and $n \leq 4$.
    The optimal potential~$V + U_*$ in this case is depicted in~\cref{fig:2d_class},
    together with the initial potential~$V$.
    We observe that the potential has been flattened in the direction~$x$.
    The resulting reduction in the average asymptotic variance is about $70\%$.
\end{example}

\begin{figure}[ht]
    \centering
    \includegraphics[width=0.49\linewidth]{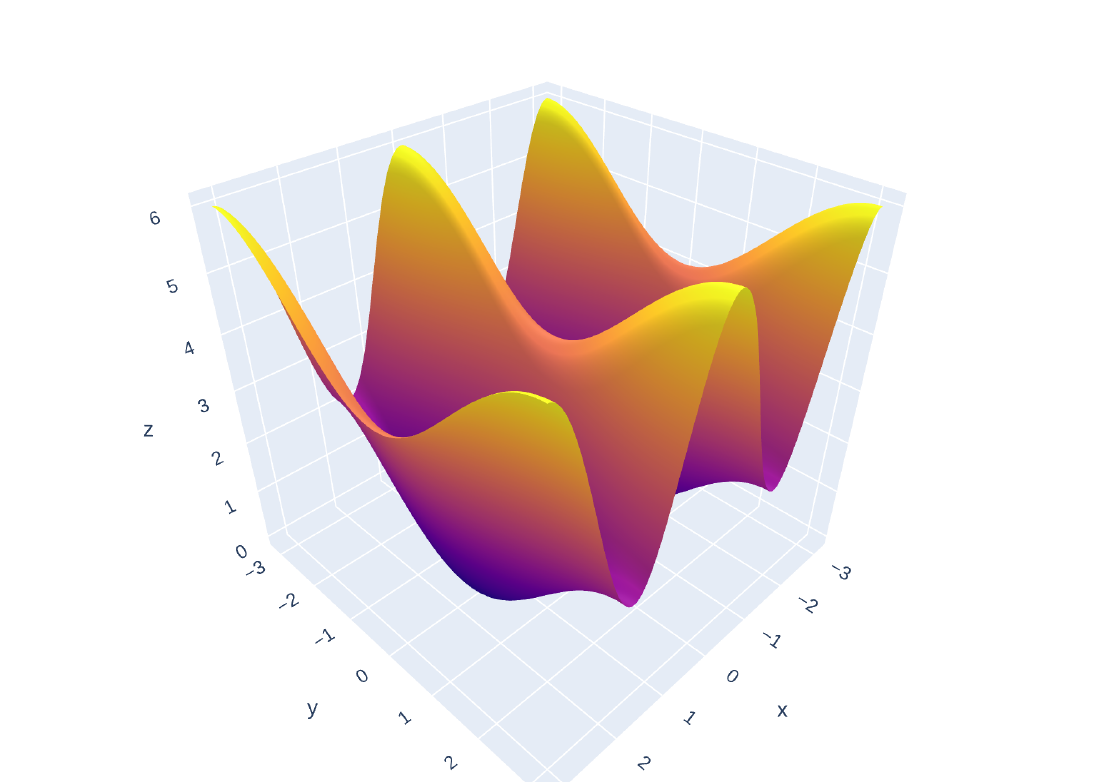}
    \includegraphics[width=0.49\linewidth]{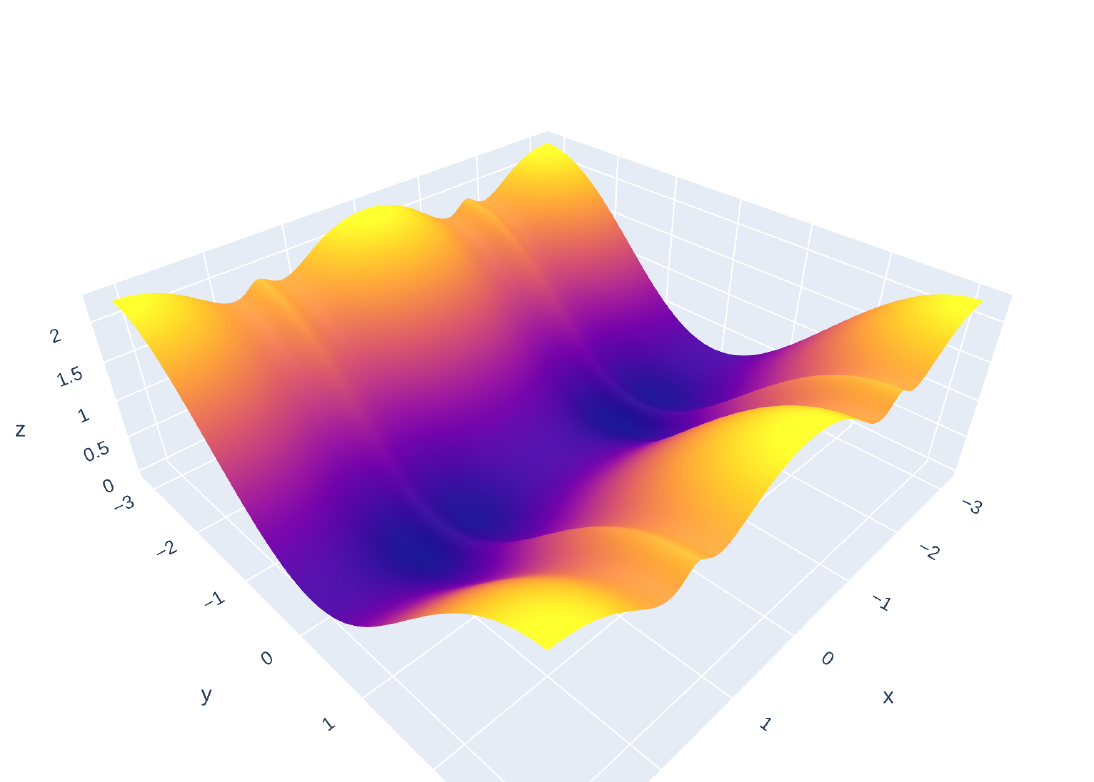}
    \caption{%
        Potential~$V$ (left) and optimal potential~$V+U_*$ (right) corresponding to~\cref{example:2d_class}.
    }
    \label{fig:2d_class}
\end{figure}

\section{Conclusions and perspectives for future works}
\label{sec:conclusion}
In this work,
we considered an importance sampling method based on the overdamped Langevin dynamics in a perturbed potential
and present a novel approach for constructing the biasing potential.
Under appropriate assumptions,
this potential is optimal,
in the sense that it leads to the minimum asymptotic variance
when employed for calculating the average of one or a class of observables with respect to the target probability measure.
The optimal biasing potential is explicit in dimension 1,
and may be approximated by steepest descent in higher dimensions.

We demonstrated the performance of the method by means of numerical experiments in dimensions~1 and~2.
In the multimodal setting, in particular,
using the optimal importance distribution enables a considerable reduction in asymptotic variance.
Finally, our numerical experiments show that,
while minimizing the asymptotic variance for just one observable leads to singularities in the potential,
targeting a number of observables simultaneously leads to smooth potentials which can more easily be employed in numerical schemes.

A drawback of the proposed methodology is that the construction of the optimal biasing potential relies on an iterative method which,
at each step, requires the solution of a Poisson equation.
While feasible in low dimension,
this approach is computationally too costly in a high-dimensional setting.
A possible approach in this case is to
reduce the dimension of the problem by requiring that the biasing potential
is a function of only a few well-chosen degrees of freedom (so-called collective variables),
which ideally capture the metastable behavior of the dynamics.
This corresponds to the setting of free energy computation~\cite{MR2681239},
and suggests to consider the variance as a functional of some free energy,
which particularly makes sense when the observable under investigation itself depends only on the collective variables.
Investigation of this approach will be the subject of future work.
Another direction for future work would be to investigate whether a similar approach can be employed to minimize the asymptotic variance of estimators based on discrete-time MCMC schemes using overdamped Langevin dynamics.
We expect the resulting optimal biasing potentials in that case to be close to those considered here.

\appendix

\section{Technical auxiliary results}
\label{sec:auxiliary_results}
In this section,
we collect technical auxiliary results used in~\cref{sub:iid,sec:minimizing_the_asymptotic_variance_for_one_observable}.

\begin{example}
    \label{example:iid_asym_var}
    Consider the setting where~$\domain = \torus$ in dimension $d=1$ and~$V = 0$,
    with the observable~$f\colon [-\pi, \pi] \to \real$ given by
    \[
        f(x) =
        \begin{cases}
            \sgn(x) \qquad &\text{if $\abs{x} \geq \frac{\pi}{2}$}, \\
            0 \qquad &\text{otherwise}.
        \end{cases}
    \]
    where the $\sgn$ function is defined in~\eqref{eq:not_asymvar}.
    Here we identify~$[-\pi, \pi]$ with its image under the quotient map~$\real \to \torus$.
    In this case $I = 0$ and we have the following:
    \begin{itemize}
        \item
            If $U$ is such that $\e^{-U} = \mathds 1_{[-\pi/4, \pi/4]}$,
            where $\mathds 1_S$ is the indicator function of the set~$S$,
            then it holds that $f(X^n) = I$ and $\e^U(X^n)$ with probability 1 for $X^n \sim \mu_U$,
            and so
            \[
                s^2_f[U] = 0.
            \]
            This is in agreement with the first item in~\cref{proposition:background_infimum}.
            In this particular case, 0 is not only the infimum but also the minimum of the asymptotic variance over~$\mathcal U$.

        \item
            The variance in~\eqref{eq:def_infimum_iid} is given by
            \[
                s^*_f := \left( \frac{1}{2\pi} \int_{\torus} \abs{f-I} \right)^2 = \frac{1}{4}.
            \]

        \item
            The potential~$U_*^{\rm iid}$ in~\eqref{eq:minimizer_lemma_iid} is given by $U_*^{\rm iid} = - \log\lvert f \rvert$.
            If $X^n \sim \mu_{U_*^{\rm iid}}$, then the random variable~$f(X^n)$ is equal to either $-1$ and $1$,
            each with probability $1/2$,
            and the random variable~$(\e^U)(X^n)$ is equal to~1 almost surely.
            Therefore, the associated asymptotic variance is given by
            \[
                s^2_f[U_*^{\rm iid}] = 1.
            \]
            This equation can also be obtained from~\eqref{eq:asym_var_iid}.
            We observe that~$s^2_f[U_*^{\rm iid}] > s^*_f$,
            which is consistent with the discussion in~\cref{remark:optimal_idd_not_minimizer}.

        \item
            Let $U_{\varepsilon} := - \log \left( \lvert f \rvert + \varepsilon \right)$,
            which may be viewed as a discontinuous but bounded regularization of~$U_*^{\rm iid}$.
            Then, for $X^n \sim \mu_{U_{\varepsilon}}$,
            using the notation w.~p.~to mean ``with probability'',
            we have that
            \[
                \left(f \e^{U_{\varepsilon}}\right)(X^n) =
                \begin{cases}
                    \frac{1}{1 + \varepsilon} \quad & \text{ w.~p.~ $\frac{1 + \varepsilon}{2 + 4 \varepsilon}$}\\
                    0 \quad & \text{ w.~p.~ $\frac{2\varepsilon}{2 + 4 \varepsilon}$}\\
                    - \frac{1}{1 + \varepsilon}\quad & \text{ w.~p.~ $\frac{1 + \varepsilon}{2 + 4 \varepsilon}$}
                \end{cases},
                \qquad
                \left(\e^{U_{\varepsilon}}\right)(X^n) =
                \begin{cases}
                    \frac{1}{1 + \varepsilon} \quad & \text{ w.~p.~ $\frac{1 + \varepsilon}{2 + 4 \varepsilon}$}\\
                    \frac{1}{\varepsilon} \quad & \text{ w.~p.~ $\frac{2\varepsilon}{2 + 4 \varepsilon}$}\\
                    \frac{1}{1 + \varepsilon}\quad & \text{ w.~p.~ $\frac{1 + \varepsilon}{2 + 4 \varepsilon}$}
                \end{cases}.
            \]
            It follows that the variance of $\left(f \e^{U_{\varepsilon}}\right)(X^n)$ is given by
            \[
                \frac{1 + \varepsilon}{1 + 2 \varepsilon} \left(\frac{1}{1 + \varepsilon}\right)^2 = \frac{1}{(1 + 2 \varepsilon)(1 + \varepsilon)},
            \]
            and that $\expect\bigl[\e^{U_\varepsilon}(X^n)\bigr] = \frac{2}{1 + 2 \varepsilon}$.
            Therefore, by Slutsky's lemma,
            or from Equation~\eqref{eq:asym_var_iid},
            we obtain
            that
            \[
                s^2_f[U_{\varepsilon}] = \frac{1 + 2\varepsilon}{4 + 4 \varepsilon}.
            \]
            We observe that $s^2_f[U_{\varepsilon}] \to s^*_f$ in the limit as~$\varepsilon \to 0$.
    \end{itemize}
    This example shows that the biasing potential $U_*^{\rm iid}$ is sometimes suboptimal in~$\mathcal U_0$;
    here we constructed a regularized biasing potential associated with a smaller asymptotic variance than that associated with~$U_*^{\rm iid}$.
    Furthermore, this example illustrates that the quantity $s^*_f$ is not in general a lower bound on the asymptotic variance
    over the set of biasing potentials in~$\mathcal U$.
\end{example}

\begin{lemma}
    [Asymptotic variance for the estimator given in~\eqref{eq:semidiscrete_estimator}]
    \label{sec:discrete_asym_var}
    Suppose that~\cref{assumption:as1} is satisfied, that $\domain = \torus$ and that $X_0 \sim \mu_U$.
    Then there exists a unique solution~$\widetilde \phi_U$ in~$L^2_0(\mu_{U})$ to~\eqref{eq:discrete_poisson}
    and it holds that
    \[
        \sqrt{N} \bigl( \widetilde \mu^N_U(f) - I\bigr)
        \xrightarrow[N \to \infty]{\rm Law} \normal\left(0, \widetilde \sigma^2_f[U]\right),
    \]
    where~$\widetilde \sigma^2_f[U]$ is given by~\eqref{eq:sigma_subsampled}.
\end{lemma}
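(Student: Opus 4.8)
The plan is to treat $(X_{n\tau})_{n \ge 0}$ as a stationary, reversible Markov chain on $\torus^d$ with transition operator $P := \e^{\tau \mathcal L_U}$, which is self-adjoint and positive on $L^2(\mu_U)$ and admits $\mu_U$ as invariant measure, and then to run the same martingale argument as in the proof of~\cref{lemma:asymptotic_variance}, now in discrete time. The Poincaré inequality~\eqref{eq:poincaré} guarantees that the spectrum of $\mathcal L_U$ restricted to $L^2_0(\mu_U)$ is contained in $(-\infty, -R[U]]$, so that $P$ maps $L^2_0(\mu_U)$ into itself with operator norm at most $\rho := \e^{-\tau R[U]} < 1$. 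In particular $1$ is a simple eigenvalue of $P$, with eigenspace spanned by the constants, so the skeleton chain is ergodic; since $X_0 \sim \mu_U$, it is moreover stationary, and its transition kernel has a bounded positive density.

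First I would settle the well-posedness of~\eqref{eq:discrete_poisson}. The right-hand side $g := (f-I)\e^U$ belongs to $L^2(\mu_U)$ by~\cref{assumption:as1}, and it has zero mean with respect to $\mu_U$ because $\int_{\torus^d} (f-I)\e^U \, \d\mu_U = \Z{U}^{-1}\int_{\torus^d}(f-I)\e^{-V} = 0$ by~\eqref{eq:unbiased}; hence $g \in L^2_0(\mu_U)$. Since $\norm{P}_{L^2_0(\mu_U) \to L^2_0(\mu_U)} \le \rho < 1$, the operator $\widetilde{\mathcal L}_U = P - \id$ is boundedly invertible on $L^2_0(\mu_U)$ with inverse $-\sum_{k \ge 0} P^k$, which yields the unique solution $\widetilde\phi_U = \sum_{k \ge 0} P^k g \in L^2_0(\mu_U)$.

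Next I would establish the central limit theorem for the numerator. From $(\id - P)\widetilde\phi_U = g$ one gets the telescoping identity
\[
    \sum_{n=0}^{N-1} g(X_{n\tau}) = M_N + \widetilde\phi_U(X_0) - \widetilde\phi_U(X_{N\tau}),
    \qquad
    M_N := \sum_{n=0}^{N-1}\Bigl( \widetilde\phi_U(X_{(n+1)\tau}) - (P\widetilde\phi_U)(X_{n\tau}) \Bigr),
\]
where $(M_N)$ is a martingale for the filtration $\mathcal F_n := \sigma(X_{k\tau} : k \le n)$ whose increments are stationary, ergodic and square-integrable (since $\widetilde\phi_U \in L^2(\mu_U)$), while the two boundary terms are bounded in $L^2(\mu_U)$ by stationarity, hence negligible after division by $\sqrt{N}$. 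The martingale central limit theorem for stationary ergodic sequences (essentially~\cite[Theorem~VIII.3.11]{MR1943877}, already invoked for~\cref{lemma:asymptotic_variance}) then gives $N^{-1/2} M_N \xrightarrow[N \to \infty]{\rm Law} \normal(0, \sigma^2)$, where, by the tower property, $\sigma^2 = \norm{\widetilde\phi_U}_{L^2(\mu_U)}^2 - \norm{P\widetilde\phi_U}_{L^2(\mu_U)}^2 = \langle \widetilde\phi_U, (\id - P^2)\widetilde\phi_U \rangle_{L^2(\mu_U)}$. A short computation using the self-adjointness of $P$, the factorisation $\id - P^2 = (\id - P)(\id + P)$ and $(\id - P)\widetilde\phi_U = g$ rewrites this as $\sigma^2 = 2\langle g, \widetilde\phi_U \rangle_{L^2(\mu_U)} - \langle g, g \rangle_{L^2(\mu_U)}$, which is exactly the bracket in the definition~\eqref{eq:sigma_subsampled} of $\widetilde\sigma^2_f[U]$.

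To conclude, the ergodic theorem for the stationary ergodic chain gives $\tfrac{1}{N} \sum_{n=0}^{N-1} (\e^U)(X_{n\tau}) \to \int_{\torus^d} \e^U \, \d\mu_U = Z/\Z{U}$ almost surely, and writing
\[
    \sqrt{N}\bigl( \widetilde\mu_U^N(f) - I \bigr)
    = \frac{N^{-1/2}\sum_{n=0}^{N-1} g(X_{n\tau})}{N^{-1}\sum_{n=0}^{N-1} (\e^U)(X_{n\tau})},
\]
Slutsky's lemma delivers convergence in law to $\normal\bigl(0, (\Z{U}/Z)^2 \sigma^2\bigr) = \normal(0, \widetilde\sigma^2_f[U])$. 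The one genuinely delicate point is the verification of the hypotheses of the martingale CLT — stationarity and ergodicity of the increments, square-integrability, and the Lindeberg and conditional-variance conditions — all of which follow routinely from the spectral gap of $P$ and the stationarity of the chain; this is also where the assumption $\domain = \torus$ helps, since it makes the skeleton chain automatically ergodic with a bounded transition density, avoiding the growth conditions needed on $\real^d$.
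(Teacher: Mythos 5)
Your proof is correct and follows essentially the same route as the paper's: invertibility of $\mathcal I - \e^{\tau\mathcal L_U}$ on $L^2_0(\mu_U)$ via the spectral gap and the Neumann series, the Kipnis--Varadhan telescoping decomposition of $\sum_n g(X_{n\tau})$ into a stationary ergodic martingale plus negligible boundary terms, the martingale CLT, and Slutsky's lemma, with your variance identity $\langle g, 2\widetilde\phi_U - g\rangle_{L^2(\mu_U)}$ agreeing with the paper's computation (the paper substitutes $\e^{\tau\mathcal L_U}\widetilde\phi_U = \widetilde\phi_U - g$ directly rather than invoking self-adjointness, but the result is the same). The only departure is the denominator: you invoke the Birkhoff ergodic theorem for the stationary chain, whereas the paper runs a second Poisson-equation/martingale argument to show the variance of the time average vanishes; both yield the convergence in probability needed for Slutsky, so this is an equally valid and slightly more economical alternative.
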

\begin{remark}
    The assumptions that $\domain = \torus$ and $X_0 \sim \mu$ should be viewed as technical;
    it should in principle be possible to relax them.
\end{remark}
\begin{proof}
    We begin by showing the existence and uniqueness of a solution in~$L^2_0(\mu_U)$ to the Poisson equation~\eqref{eq:discrete_poisson}.
    To this end,
    we recall that, under~\cref{assumption:as1},
    \begin{equation}\label{eq:expbound}
        \norm{\e^{t \mathcal L_U}}_{\mathcal B\left(L^2_0(\mu_U)\right)} \leq \e^{- R[U] t},
    \end{equation}
    where $\mathcal B\left(L^2_0(\mu_U)\right)$ is the Banach space of continuous linear operators on $L^2_0(\mu_U)$ and $R[U]$ is the Poincaré constant in~\eqref{eq:poincaré} associated with $\mu_U$;
    see e.g.~\cite[Proposition~2.3]{MR3509213} and~\cite[Theorem~4.4]{MR3288096}.
    Therefore,
    the Neumann series $\sum_{n=0}^{\infty} \e^{n\tau \mathcal L_U}$ is convergent in $L^2_0(\mu_U)$,
    which implies that $\mathcal I - \e^{\tau \mathcal L_U}$ is invertible with inverse~$(\mathcal I - \e^{\tau \mathcal L_U})^{-1}$ equal to the series.
    Therefore,
    there exists a unique solution~$\widetilde \phi_U \in L^2_0(\mu_U)$ to~\eqref{eq:discrete_poisson}.
    The estimator~\eqref{eq:iid_importance_sampling} may be rewritten as
    \begin{equation}
        \label{eq:discrete_importance_sampling}
        \widetilde \mu_U^N(f)
        = I + \displaystyle \frac
        {\sum_{n=0}^{N-1} g(X_{n \tau})}
        {\sum_{n=0}^{N-1} (\e^U)(X_{n \tau})},
        \qquad g := (f-I) \e^U.
    \end{equation}
    The key idea
    in order to understand the asymptotic behavior of the numerator in~\eqref{eq:discrete_importance_sampling} is to rewrite the sum as
    \begin{align*}
        \sum_{n=0}^{N-1} g(X_{n \tau})
        &= \sum_{n=0}^{N-1} \Bigl( \left( \mathcal I - \e^{\tau \mathcal L_U} \right) \widetilde \phi_U \Bigr)\left(X_{n \tau}\right) \\
        &= \sum_{n=0}^{N-1}\left( \widetilde \phi_U\left(X_{(n+1)\tau}\right) - \e^{\tau \mathcal L_U} \widetilde \phi_U\left(X_{n\tau}\right) \right)
    -  \widetilde \phi_U\left(X_{N\tau}\right) + \widetilde \phi_U(X_0).
    \end{align*}
    This approach dates back to the work of Kipnis and Varadhan~\cite{MR834478}.
    The first term is a sum of uncorrelated, identically distributed random variables with mean zero and variance
    \begin{align*}
        \gamma^2_f[U] :=
        \expect \left(  \left\lvert \widetilde \phi_U\left(X_{\tau}\right) - \e^{\tau \mathcal L_U} \widetilde \phi_U\left(X_0\right) \right\rvert^2  \right)
        &= \int_{\domain^d} \left(\left( \e^{\tau \mathcal L_U}  \left\lvert  \widetilde \phi_U\right\rvert^2 \right)(x) - \left\lvert \left( \e^{\tau \mathcal L_U} \widetilde \phi_U \right)(x)  \right\rvert^2\right) \, \mu_U(\d x) \\
        &= \int_{\domain^d}  \left( \left\lvert  \widetilde \phi_U(x)\right\rvert^2  - \left\lvert \left( \e^{\tau \mathcal L_U} \widetilde \phi_U \right)(x)  \right\rvert^2 \right)\, \mu_U(\d x),
    \end{align*}
    where we used the invariance of $\mu_U$ by the dynamics with generator~$\mathcal L_U$ for the first term in the last integral.
    Since $\widetilde \phi_U$ is a solution to~\eqref{eq:discrete_poisson},
    it holds that $\e^{\tau \mathcal L_U} \widetilde \phi_U = \widetilde \phi_U - g$,
    which by substitution gives that
    \[
        \gamma^2_f[U] = \int_{\domain^d} 2\widetilde \phi_U g - g^2 \, \d \mu_U.
    \]
    Using an approach similar to that in~\cite[Theorem 17.4.4]{MR2509253},
    we can show that the conditions of the martingale central limit theorem~\cite{MR668684}
    (see also~\cite{major} for a detailed pedagogical proof) are satisfied,
    and so it holds that
    \[
        \frac{1}{\sqrt{N}} \sum_{n=0}^{N-1} \left( \widetilde \phi_U\left(X_{(n+1)\tau}\right) - \e^{\tau \mathcal L_U} \widetilde \phi_U\left(X_{n\tau}\right) \right) \xrightarrow[N \to \infty]{\rm Law} \mathcal N\left(0, \gamma^2_f[U]\right).
    \]
    Hence, since it is clear in the setting where $\domain = \torus$ that
    \[
        \frac{1}{\sqrt{N}} \left( \widetilde \phi_U\left(X_{N\tau}\right) -  \widetilde \phi_U(X_0) \right)
        \xrightarrow[N \to \infty] {{\rm Law}} 0,
    \]
    it follows from Slutsky's lemma that
    \[
        \frac{1}{\sqrt{N}} \sum_{n=0}^{N-1} g(X_{n \tau}) \xrightarrow[N \to \infty]{\rm Law} \mathcal N\left(0, \gamma^2_f[U]\right).
    \]
    A similar approach, based on the Poisson equation
    \[
        - \widetilde{\mathcal L}_U \widetilde \psi_U = \left( \e^U - \frac{Z}{\Z{U}} \right),
    \]
    of which the right-hand side is in $L^2_0(\mu_U)$ by the assumption that~$\domain = \torus$,
    can be employed to understand the asymptotic behavior of the denominator in~\eqref{eq:discrete_importance_sampling}.
    Specifically, it holds that
    \begin{align*}
        \frac{1}{N} \sum_{n=0}^{N-1} \left(\e^{U(X_{n \tau})} - \frac{Z}{\Z{U}}\right)
        &= \frac{1}{N} \sum_{n=0}^{N-1}\left( \widetilde \psi_U\left(X_{(n+1)\tau}\right) - \e^{\tau \mathcal L_U} \widetilde \psi_U\left(X_{n\tau}\right) \right)
        - \frac{\widetilde \psi_U\left(X_{N\tau}\right) + \widetilde \psi_U(X_0)}{N}.
    \end{align*}
    An explicit calculation,
    using that the first term on the right-hand side is a sum of uncorrelated, identically distributed random variables,
    gives that the variance of the right-hand side converges to 0 in the limit as~$N \to \infty$,
    implying the convergence
    \[
        \frac{1}{N} \sum_{n=0}^{N-1} \e^{U(X_{n \tau})} \xrightarrow[N \to \infty]{\rm Law} \frac{Z}{\Z{U}}.
    \]
    The proof can then be concluded by using Slutsky's lemma once more.
\end{proof}

\begin{lemma}
    [Solution to the perturbed Poisson equation]
    \label{lemma:stability_gradient_sol_poisson}
    Suppose that~\cref{assumption:as1} is satisfied and that $\delta U \in \smoothcompact(\domain^d)$,
    and let $\phi_{U + \varepsilon \delta U}$ denote the solution to the Poisson equation~\eqref{eq:perturbed_poisson} posed in $L^2_0(\e^{-V-U-\varepsilon \delta U})$.
    Then
    \begin{equation}
        \label{eq:statement_perturbation_poisson}
        \frac{\nabla \phi_{U + \varepsilon \delta U} - \nabla \phi_U}{\varepsilon} \xrightarrow[\varepsilon \to 0]{}  \nabla \psi_{U,\delta U}~\text{in $L^2(\mu_U)$},
    \end{equation}
    where~$\psi_{U,\delta U}$ denotes the unique solution in~$H^1(\mu_{U}) \cap L^2_0(\mu_{U})$ to
    \begin{align}
        \nonumber
        -\mathcal L_{U} \psi_{U,\delta U}
        &=  (f-I) \e^U \delta U - \nabla (\delta U) \cdot \nabla \phi_U \\
        \label{eq:pert2}
        &= - \e^{U+V} \nabla \cdot \left(\e^{-U-V}\delta U \grad \phi_U\right).
    \end{align}
\end{lemma}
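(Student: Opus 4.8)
The plan is to work directly with the weak formulations of the three Poisson equations, exploiting that $\delta U \in \smoothcompact(\domain^d)$ is bounded so that the unnormalized weights $\e^{-V-U-\varepsilon\delta U}$ are uniformly comparable to $\e^{-V-U}$ and converge to it, with $O(\varepsilon)$ control both on $\e^{-\varepsilon\delta U}-1$ and on the difference quotient $\varepsilon^{-1}(\e^{-\varepsilon\delta U}-1)+\delta U$ (the latter by a Taylor expansion). First I would note that the right-hand side of~\eqref{eq:pert2} --- which has zero $\mu_U$-average, as is clear from its divergence form, and whose two displayed expressions agree by the product rule together with $-\mathcal L_U\phi_U = (f-I)\e^U$ --- induces the bounded linear functional $\varphi\mapsto\int_{\domain^d}\delta U\,\nabla\phi_U\cdot\nabla\varphi\,\d\mu_U$ on $H^1(\mu_U)\cap L^2_0(\mu_U)$; as in the proof of~\cref{lemma:asymptotic_variance}, the Poincaré inequality for $\mu_U$ and the Lax--Milgram theorem then yield the existence and uniqueness of $\psi_{U,\delta U}$. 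By the Holley--Stroock argument, $\mu_{U+\varepsilon\delta U}$ also satisfies a Poincaré inequality, whence $\phi_{U+\varepsilon\delta U}$ exists and is unique; moreover, since $\delta U$ is bounded, $H^1(\mu_{U+\varepsilon\delta U}) = H^1(\mu_U)$ with norms equivalent uniformly for $\varepsilon$ in a bounded interval, so $w_\varepsilon := \varepsilon^{-1}(\phi_{U+\varepsilon\delta U}-\phi_U)$ belongs to $H^1(\mu_U)$ and, since $\nabla w_\varepsilon = \varepsilon^{-1}(\nabla\phi_{U+\varepsilon\delta U}-\nabla\phi_U)$, it suffices to show $\nabla w_\varepsilon\to\nabla\psi_{U,\delta U}$ in $L^2(\mu_U)$.

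Next, multiplying the weak formulations of the equations for $\phi_{U+\varepsilon\delta U}$, $\phi_U$ and $\psi_{U,\delta U}$ by the respective normalization constants (which turns them into identities against $\e^{-V-U-\varepsilon\delta U}$, $\e^{-V-U}$ and $\e^{-V-U}$, all of which extend to arbitrary $\varphi\in H^1(\mu_U)$ since both sides vanish on constants), and observing that the equations for $\phi_{U+\varepsilon\delta U}$ and $\phi_U$ have the \emph{same} right-hand side $\int_{\domain^d}(f-I)\varphi\,\e^{-V}$, I would subtract to obtain, for every $\varphi\in H^1(\mu_U)$,
\[
\begin{aligned}
&\int_{\domain^d}\nabla(w_\varepsilon-\psi_{U,\delta U})\cdot\nabla\varphi\,\e^{-V-U-\varepsilon\delta U} \\
&\qquad = -\int_{\domain^d}\nabla\phi_U\cdot\nabla\varphi\,\e^{-V-U}\left(\frac{\e^{-\varepsilon\delta U}-1}{\varepsilon}+\delta U\right)
-\int_{\domain^d}\nabla\psi_{U,\delta U}\cdot\nabla\varphi\,\e^{-V-U}\bigl(\e^{-\varepsilon\delta U}-1\bigr).
\end{aligned}
\]
Taking $\varphi = w_\varepsilon-\psi_{U,\delta U}$, using $\e^{-\varepsilon\delta U}\geq\tfrac12$ for $\varepsilon$ small to bound the left-hand side from below by $\tfrac12\,\Z{U}\,\norm*{\nabla(w_\varepsilon-\psi_{U,\delta U})}_{L^2(\mu_U)}^2$, and bounding the right-hand side by the Cauchy--Schwarz inequality together with the pointwise estimates $\abs{\e^{-\varepsilon\delta U}-1}\leq C\varepsilon$ and $\abs{\varepsilon^{-1}(\e^{-\varepsilon\delta U}-1)+\delta U}\leq C\varepsilon$ (valid for $\varepsilon\leq 1$, with $C$ depending only on $\norm*{\delta U}_{L^\infty}$), I arrive at
\[
\norm*{\nabla(w_\varepsilon-\psi_{U,\delta U})}_{L^2(\mu_U)}
\leq 2C\varepsilon\Bigl(\norm*{\nabla\phi_U}_{L^2(\mu_U)}+\norm*{\nabla\psi_{U,\delta U}}_{L^2(\mu_U)}\Bigr),
\]
and the right-hand side is finite and vanishes as $\varepsilon\to 0$, which proves~\eqref{eq:statement_perturbation_poisson}.

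The main obstacle is conceptual rather than computational: because the reference measure $\mu_{U+\varepsilon\delta U}$ --- hence the natural Hilbert-space structure in which $\phi_{U+\varepsilon\delta U}$ is characterized --- varies with $\varepsilon$, one cannot simply subtract the weak formulations within a fixed inner product. The resolution is to pass to the unnormalized weights $\e^{-V-U-\varepsilon\delta U}$, where the boundedness of $\delta U$ simultaneously preserves coercivity and supplies the $O(\varepsilon)$ bounds on the weight and on its difference quotient, and then to test the resulting identity against $w_\varepsilon-\psi_{U,\delta U}$ itself; this sidesteps the need for a separate uniform-bound-plus-weak-compactness argument. A minor point to keep track of is that $\phi_{U+\varepsilon\delta U}$ is normalized with respect to a different measure than $\phi_U$ and $\psi_{U,\delta U}$, but this is immaterial here since only gradients enter~\eqref{eq:statement_perturbation_poisson}.
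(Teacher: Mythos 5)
Your proof is correct, and it takes a genuinely different route from the paper's. The paper subtracts the strong-form equations: it writes $\mathcal L_{U+\varepsilon\delta U}=\mathcal L_U-\varepsilon\nabla(\delta U)\cdot\nabla$, shows that $\psi_\varepsilon:=\varepsilon^{-1}(\phi_{U+\varepsilon\delta U}-\phi_U)$ solves a Poisson equation for the \emph{perturbed} operator whose right-hand side differs from that of~\eqref{eq:pert2} by a remainder $\zeta_\varepsilon\to 0$ in $L^2(\mu_U)$, and then invokes the stability estimate~\eqref{eq:stability_estimate} for $\mathcal L_{U+\varepsilon\delta U}$, which requires the Holley--Stroock argument to get a Poincaré constant $R[U+\varepsilon\delta U]$ uniformly bounded away from zero (the remainder is only controlled in $L^2$, so one must pass through the Poincaré inequality to close the estimate in the gradient norm). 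You instead subtract the weak formulations after multiplying by the normalization constants, which exposes the key cancellation that the source terms for $\phi_{U+\varepsilon\delta U}$ and $\phi_U$ both reduce to $\int(f-I)\varphi\,\e^{-V}$; the resulting error terms are then automatically in divergence form (functionals of $\nabla\varphi$), so testing against $w_\varepsilon-\psi_{U,\delta U}$ and using only the trivial coercivity $\e^{-\varepsilon\delta U}\geq\tfrac12$ plus Cauchy--Schwarz in the gradient seminorm suffices — no Poincaré inequality enters the convergence estimate, and you obtain an explicit $O(\varepsilon)$ rate where the paper only gets qualitative convergence. You do still need Holley--Stroock (as the paper does) for the well-posedness of the perturbed Poisson equation itself, and you correctly handle the two technical points that could trip this argument up: the test-function identities extend from $L^2_0$ to all of $H^1(\mu_U)$ because both sides vanish on constants, and the mismatch of normalizations between $\phi_{U+\varepsilon\delta U}$ and $\phi_U$ is irrelevant since only gradients appear in~\eqref{eq:statement_perturbation_poisson}.
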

\begin{remark}
    By integration by parts,
    which is allowed since $\phi_U \in C^{\infty}(\domain^d)$ and $\delta U \in \smoothcompact(\domain^d)$,
    we can check that the right-hand side of~\eqref{eq:pert2} is indeed mean zero with respect to~$\mu_U$:
    \begin{align*}
        - \int_{\domain^d} \e^{U+V} \nabla \cdot \left(\e^{-U-V}\delta U \grad \phi_U\right) \, \d \mu_U
        = \frac{1}{\Z{U}}\int_{\domain^d} \nabla \cdot \left(\e^{-U-V}\delta U \grad \phi_U\right) \, \d x
        = 0.
    \end{align*}
    Therefore, there indeed exists a unique distributional solution in~$H^1(\mu_{U}) \cap L^2_0(\mu_{U})$ to~\eqref{eq:pert2} by the Lax--Milgram theorem.
\end{remark}
\begin{proof}
    [Proof of \cref{lemma:stability_gradient_sol_poisson}]
    Between the Poisson equations~\eqref{eq:poisson} and~\eqref{eq:perturbed_poisson},
    both the operator and the right-hand side differ.
    We begin by rewriting
    \begin{equation}
        \label{eq:perturbation_generator}
        \mathcal L_{U + \varepsilon \delta U} = \mathcal L_U - \varepsilon \nabla (\delta U) \cdot \nabla.
    \end{equation}
    Let $\psi_{\varepsilon} = \varepsilon^{-1} (\phi_{U+\varepsilon \delta U} - \phi_U)$.
    It holds that
    \begin{equation}
        \label{eq:pert1}
        -\mathcal L_{U + \varepsilon \delta U} \psi_{\varepsilon}
        =  (f-I) \frac{\e^{U + \varepsilon \delta U} - \e^{U}}{\varepsilon} - \nabla (\delta U) \cdot \nabla \phi_U.
    \end{equation}
    The right-hand side is mean zero with respect to~$\mu_{U+\varepsilon \delta U}$ by construction,
    and so by the Lax--Milgram theorem there exists a unique distributional solution in~$H^1(\mu_{U+\varepsilon \delta U}) \cap L^2_0(\mu_{U+\varepsilon \delta U})$ to~\eqref{eq:pert1},
    which coincides with~$\psi_{\varepsilon}$ up to an additive constant.
    Subtracting~\eqref{eq:pert2} from~\eqref{eq:pert1},
    we deduce that
    \[
        -\mathcal L_{U + \varepsilon \delta U} (\psi_{\varepsilon}  - \psi_{U,\delta U})
        = - (\mathcal L_U - \mathcal L_{U + \varepsilon \delta U}) \psi_{U,\delta U}
        + (f-I) \left(\frac{\e^{U + \varepsilon \delta U} - \e^{U}}{\varepsilon} - \e^U \delta U \right)
        =: \zeta_{\varepsilon}.
    \]
    The second term on the right-hand side converges to 0 in $L^2(\mu_U)$ in the limit as $\varepsilon \to 0$,
    as does the first term in view of~\eqref{eq:perturbation_generator}.
    By the Holley--Stroock theorem,
    the probability measure~$\mu_{U+\varepsilon \delta U}$ satisfies the Poincaré inequality~\eqref{eq:poincaré} with a constant $R[U+\varepsilon \delta U]$
    that converges to~$R[U]$ in the limit~$\varepsilon \to 0$.
    Consequently, we deduce from the standard stability estimate~\eqref{eq:stability_estimate} that
    \[
        \norm{\nabla (\psi_{\varepsilon} - \psi_{U,\delta U})}_{L^2(\mu_{U+\varepsilon \delta U})}
        \leq \frac{\norm{\zeta_{\varepsilon}}_{L^2(\mu_{U+\varepsilon \delta U})}}{R[U+\varepsilon \delta U]}.
    \]
    Since the right-hand side converges to 0 in the limit~$\varepsilon \to 0$,
    so must the left-hand side,
    which leads to the convergence $\norm{\nabla (\psi_{\varepsilon} - \psi_{U,\delta U})}_{L^2(\mu_{U})} \to 0$
    given the equivalence between the norms of~$L^2(\mu_U)$ and~$L^2(\mu_{U+\varepsilon \delta U})$.
    This concludes the proof.
\end{proof}
\begin{remark}
    One may wonder whether the statement~\eqref{eq:statement_perturbation_poisson} can be strengthened to
    \begin{equation}
        \label{eq:not_statement_perturbation_poisson}
        \frac{\phi_{U + \varepsilon \delta U} - \phi_U}{\varepsilon} \xrightarrow[\varepsilon \to 0]{} \psi_{U,\delta U}~\text{in $H^1(\mu_U)$}.
    \end{equation}
    The answer to this question is negative.
    Indeed, assume by contradiction that~\eqref{eq:not_statement_perturbation_poisson} holds.
    Then in particular~$\phi_{U + \varepsilon \delta U} - \phi_U \to 0$ in $L^2(\mu_U)$ in the limit as $\varepsilon \to 0$ and so
    \begin{align}
        \notag
        \int_{\domain^d} \frac{\phi_{U + \varepsilon \delta U} - \phi_U}{\varepsilon} \, \d \mu_{U}
        &= \frac{1}{\varepsilon} \int_{\domain^d} \phi_{U + \varepsilon \delta U} \d \mu_{U} \\
        \notag
        &= \frac{1}{Z[U]} \int_{\domain^d} \phi_{U+\varepsilon \delta U} \left( \frac{\e^{-V-U} - \e^{-V-U- \varepsilon \delta U}}{\varepsilon} \right) \\
        \label{eq:not_l2}
        &\xrightarrow[\varepsilon \to 0]{} \int_{\domain^d} \phi_U \delta U \d \mu_U,
    \end{align}
    where we used that~$\phi_{U}$ and~$\phi_{U + \varepsilon \delta U}$ are mean-zero with respect to~$\e^{-V-U}$ and~$\e^{-V-U-\varepsilon \delta U}$,
    respectively.
    This is a contradiction because~\eqref{eq:not_statement_perturbation_poisson} implies that
    \[
        \int_{\domain^d} \frac{\phi_{U + \varepsilon \delta U} - \phi_U}{\varepsilon}  \d \mu_{U}
        \xrightarrow[\varepsilon \to 0]{} \int_{\domain^d} \psi_{U,\delta U} \d \mu_{U} = 0,
    \]
    and so~\eqref{eq:not_statement_perturbation_poisson} does not hold.

    It is, however, simple to show that ~$\phi_{U + \varepsilon \delta U} \to \phi_U$~in $L^2(\mu_U)$ in the limit as $\varepsilon \to 0$.
    Additionally, it holds by~\cref{lemma:stability_gradient_sol_poisson} and the Poincaré inequality that
    \[
        \frac{\phi_{U + \varepsilon \delta U} - \phi_U}{\varepsilon} - \int_{\domain^d} \frac{\phi_{U + \varepsilon \delta U} - \phi_U}{\varepsilon} \, \d \mu_U \xrightarrow[\varepsilon \to 0]{} \psi_{U,\delta U}~\text{in $H^1(\mu_U)$},
    \]
    but these statements are not useful for our purposes in this paper.
\end{remark}
\begin{remark}
    Since~$\psi_{U,\delta U}$ is a weak solution to~\eqref{eq:pert2},
    it holds for every $\delta W \in \smoothcompact(\domain^d)$ that
    \begin{align}
        \nonumber
        \int_{\domain^d} \nabla \psi_{U,\delta W} \cdot \nabla \psi_{U,\delta U} \, d \mu_U
        &= \int_{\domain^d} \psi_{U,\delta W} \Bigl( - \e^{U+V} \nabla \cdot \left(\e^{-U-V}\delta U \grad \phi_U\right) \Bigr)  \, \d \mu_U \\
        \label{eq:auxiliary_func_der}
        &= \int_{\domain^d} \delta U \grad \phi_U  \cdot \nabla \psi_{U,\delta W} \, \d \mu_U,
    \end{align}
    where integration by parts is justified because $\delta U \in \smoothcompact(\domain^d)$.
    This equality, where the roles of~$\delta U$ and $\delta W$ can be reversed,
    is useful in the proof of~\cref{proposition:second_variation_asymptotic_variance} below.
\end{remark}
\begin{remark}
    \label{remark:psi_in_dim_one}
    In dimension $d = 1$,
    it follows from~\eqref{eq:pert2} that
    \begin{equation}
        \label{eq:psi_prime}
        \psi_{U,\delta U}' = \delta U \phi_{U}' + C_{\domain}[U,\delta U] \e^{V+U},
    \end{equation}
    for some constant $C_{\domain}[U, \delta U]$ such that $\psi_{U,\delta U}' \in L^2(\mu_U)$.
    Clearly $C_{\real}[U, \delta U] = 0$.
    When $\domain = \torus$,
    we obtain the value of~$C_{\torus} [U,\delta U]$ by requiring periodicity,
    that is
    \[
        0 = \int_{\torus} \psi_{U,\delta U}'
        = \int_{\torus} \delta_U \phi_{U}' + C_{\torus}[U, \delta U] \int_{\torus} \e^{V+U},
    \]
    which leads to
    \[
        C_{\torus}[U, \delta U]
        = - \frac
        {\displaystyle \int_{\torus} \delta U \phi_{U}'}
        {\displaystyle \int_{\torus}\e^{V+U}}.
    \]
    We note that this formula may also be obtained by considering the differential of~\eqref{eq:derivative_of_phi} viewed as a functional of~$U$,
    an approach which reveals that $C_{\domain}[U, \delta U] = \d A_{\domain}[U] \cdot \delta U$.
\end{remark}

\section{Connection between \texorpdfstring{$(A_{\real}, A^*_{\real})$ and $(A_{\torus}, A^*_{\torus})$}{the constants A}}
\label{remark:constant_A}

In this section,
we discuss the links between the constants defined in~\eqref{eq:A_domain} and~\eqref{eq:equation_A}.

\paragraph{Connection between~$A_{\real}$ and~$A_{\torus}$.}
The constant $A_{\real}^* = A_{\real}$ is recovered as a limit of~$A^*_{\torus}$ for an increasingly large torus.
More precisely, it holds that
\begin{equation}
    \label{eq:limit}
    A_{\real} =
    \lim_{L \to \infty}
    \frac
    {\displaystyle \int_{-L}^{L} F \, \e^{V+U}}
    {\displaystyle \int_{-L}^{L} \e^{V+U}}.
\end{equation}
Indeed, for any $\ell > 0$ and $L > \ell$,
it holds that
\begin{align}
    \notag
    \frac
    {\displaystyle \int_{-L}^{L} (F-A_{\real}) \, \e^{V+U}}
    {\displaystyle \int_{-L}^{L} \e^{V+U}}
    =\, &
    \frac
    {\displaystyle \int_{[-L, -\ell) \cup (\ell,L]} (F-A_{\real}) \, \e^{V+U}}
    {\displaystyle \int_{[-L, -\ell) \cup (\ell,L]} \e^{V+U}}
    \frac
    {\displaystyle \int_{[-L, -\ell) \cup (\ell,L]} \e^{V+U}}
    {\displaystyle \int_{[-L, L]} \e^{V+U}} \\
    \label{eq:rewrite_as_convex_sum}
    &+
    \frac
    {\displaystyle \int_{[-\ell, \ell]} (F - A_{\real}) \, \e^{V+U}}
    {\displaystyle \int_{[-\ell, \ell]} \e^{V+U}}
    \frac
    {\displaystyle \int_{[-\ell, \ell]} \e^{V+U}}
    {\displaystyle \int_{[-L, L]} \e^{V+U}}.
\end{align}
The right-hand side is a convex combination of the averages of~$F- A_{\real}$ restricted to the sets~$[-L, -\ell) \cup (\ell,L]$, for the first term,
and~$[-\ell, \ell]$, for the second term.
In the proof of~\cref{lemma:asymvar_in_1d},
we proved that
\[
    \int_{\real} \e^{U+V} = \infty.
\]
Therefore, since $F$ is uniformly bounded,
the second summand on the right-hand side of~\eqref{eq:rewrite_as_convex_sum} converges to 0 in the limit as $L \to \infty$,
and so
\[
    \limsup_{L \to \infty}
    \abs*{\frac
    {\displaystyle \int_{-L}^{L} F \, \e^{V+U}}
    {\displaystyle \int_{-L}^{L} \e^{V+U}} - A_{\real}}
    =
    \limsup_{L \to \infty}
    \abs*{\frac
        {\displaystyle \int_{[-L, -\ell) \cup (\ell,L]} (F - A_{\real}) \, \e^{V+U}}
        {\displaystyle \int_{[-L, -\ell) \cup (\ell,L]} \e^{V+U}}}
    \leq \sup_{\abs{x} \geq \ell} \abs{F(x) - A_{\real}}.
\]
Since $\lim_{\abs{x} \to \infty} F(x) = A_{\real}$ by definition of~$A_{\real}$ in~\eqref{eq:constant_A_real},
the right-hand side of this equation can be made arbitrarily small by taking~$\ell$ sufficiently large,
and so the limit~\eqref{eq:limit} follows.

\paragraph{Connection between $A^*_{\real}$ and $A^*_{\torus}$.}
The constant $A_{\real}^* = A_{\real}$ coincides with
\begin{equation}
    \label{eq:limit_median}
    \lim_{\ell \to \infty} \sup \left\{ A \in \real : \int_{-\ell}^{\ell} \sgn (F-A) \geq 0 \right\}.
\end{equation}
Indeed, since $\lim_{|x| \to \infty} F(x) = A_{\real}$,
there exists for any $\varepsilon > 0$ a constant $\ell_{\varepsilon} > 0$ such that
\[
    \forall \ell \geq \ell_{\varepsilon}, \qquad
    \int_{-\ell}^{\ell} \sgn (F - A_{\real} + \varepsilon) > 0 \qquad  \text{ and } \qquad
    \int_{-\ell}^{\ell} \sgn (F - A_{\real} - \varepsilon) < 0.
\]
Therefore, for all $\ell \geq \ell_{\varepsilon}$,
the supremum in~\eqref{eq:limit_median} is contained in the interval $[A_{\real} -\varepsilon, A_{\real} + \varepsilon]$.
Since~$\varepsilon$ was arbitrary, the claim is proved.

\section{Second variation of the asymptotic variance} 
\label{sec:second_variation_of_the_asymptotic_variance}
Since the method we propose in \cref{sub:optimal_solution_in_the_multi_dimensional_setting} relies on a steepest descent for the asymptotic variance
viewed as a functional of~$U$,
it is natural to wonder whether this functional is convex,
in order to provide guarantees on the convergence of the method.
We provide a partial answer to this question in~\cref{proposition:second_variation_asymptotic_variance} and~\cref{remark:non_convexity_torus} below.
Specifically, we prove that the asymptotic variance is convex when the domain is the one-dimensional real line but possibly non-convex when the domain is $\torus$. 
We have not managed to prove or rule out the convexity of the asymptotic variance in the multi-dimensional setting.

We emphasize that the convexity of the asymptotic variance in the case where the domain is~$\real$
does not imply the uniqueness (up to an additive constant) of the minimizer.
The most straightforward example is that of the constant observable,
in which case the asymptotic variance is equal to 0 for any smooth biasing potential~$U$.

\begin{proposition}
    \label{proposition:second_variation_asymptotic_variance}
    Suppose that~\cref{assumption:as1} is satisfied and let $\phi_U$ be the solution to the Poisson equation as in~\cref{lemma:asymptotic_variance}.
    Then, for all $\delta U, \delta W \in \smoothcompact(\domain^d)$,
    it holds that
    \begin{align}
        \nonumber
        &\frac{1}{2} \d (\d\sigma^2_f[U] \cdot \delta U) \cdot \delta W
        = \frac{\Z{U}}{Z^2}\int_{\domain^d} \delta U_0 \delta W_0 \left( \abs*{\nabla\phi_U}^2 + \int_{\domain^d} \abs*{\nabla\phi_U}^2 \, \d \mu_{U} \right) \e^{-V-U} \\
        \label{eq:statement_second_variation}
        &\qquad -\frac{2 \Z{U}}{Z^2} \int_{\domain^d} \bigl( \nabla \psi_{U,\delta U_0} - \delta U_0 \nabla \phi_U \bigr) \cdot  \bigl( \nabla \psi_{U,\delta W_0} - \delta W_0 \nabla \phi_U \bigr) \e^{-V-U},
    \end{align}
    where, for a perturbation $\delta X \in \{\delta U, \delta W\}$,
    \[
        \delta X_0 :=  \delta X - \mu_U(\delta X) , \qquad \mu_U(\delta X) := \int_{\domain^d} \delta X \, \d \mu_{U}.
    \]
    and $\psi_{U,\delta X_0} \in H^1(\mu_U) \cap L^2_0(\mu_U)$ is the solution to~\eqref{eq:pert2} with $\delta U = \delta X_0$.
    In addition, the second term in~\eqref{eq:statement_second_variation} is zero in dimension 1 when $\domain = \real$,
    and so the asymptotic variance~$\sigma^2_f[U]$ is a convex functional in this case.
\end{proposition}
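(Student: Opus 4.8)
The plan is to compute the second variation by differentiating the first-variation formula~\eqref{eq:funcder} once more. Writing $\frac12\d\sigma^2_f[U]\cdot\delta U$ in the form given in~\cref{proposition:functional_derivative_asym_var}, I see three kinds of $U$-dependence: the prefactor $\Z{U}^2/Z^2$, the pointwise quantity $\abs{\nabla\phi_U}^2$, and the average $\int_{\domain^d}\abs{\nabla\phi_U}^2\,\d\mu_U$ (which depends on $U$ both through $\phi_U$ and through $\mu_U$). First I would differentiate each of these in the direction $\delta W$, using $\d\Z{U}\cdot\delta W=-\int_{\domain^d}\delta W\,\e^{-V-U}$ from~\eqref{eq:func_der_ztilde}, and using \cref{lemma:stability_gradient_sol_poisson} to handle $\d(\nabla\phi_U)\cdot\delta W$: the limit of $(\nabla\phi_{U+\varepsilon\delta W}-\nabla\phi_U)/\varepsilon$ is $\nabla\psi_{U,\delta W}$ in $L^2(\mu_U)$. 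So terms like $\d\bigl(\abs{\nabla\phi_U}^2\bigr)\cdot\delta W=2\nabla\phi_U\cdot\nabla\psi_{U,\delta W}$ appear naturally.

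The key simplification is that everything should be expressible through the ``centered'' solutions $\psi_{U,\delta X_0}$. The reason the perturbations enter only through $\delta X_0=\delta X-\mu_U(\delta X)$ is that $\psi_{U,\delta U}$ and $\psi_{U,\delta U_0}$ differ by a term proportional to the solution of $-\mathcal L_U\psi=\text{const}\cdot(f-I)\e^U$, i.e. by a multiple of $\phi_U$ itself; this is visible from the right-hand side of~\eqref{eq:pert2}, which changes by $-\mu_U(\delta U)\cdot\bigl(-\e^{U+V}\nabla\cdot(\e^{-U-V}\nabla\phi_U)\bigr)=\mu_U(\delta U)(f-I)\e^U$ when $\delta U\mapsto\delta U_0$. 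I would collect the various contributions, repeatedly use the weak formulation~\eqref{eq:weak_formulation} of the Poisson equation for $\phi_U$ and the analogous weak formulation for $\psi_{U,\cdot}$, and use the identity~\eqref{eq:auxiliary_func_der} (valid for $\delta U,\delta W\in\smoothcompact$) to symmetrize the cross terms $\int_{\domain^d}\delta U_0\nabla\phi_U\cdot\nabla\psi_{U,\delta W_0}\,\d\mu_U$. Completing the square in $\nabla\psi_{U,\delta X_0}-\delta X_0\nabla\phi_U$ then yields~\eqref{eq:statement_second_variation}. I expect the bookkeeping of which $\d\mu_U$-derivative contributes where — in particular the term coming from differentiating the averaging measure in $\int\abs{\nabla\phi_U}^2\,\d\mu_U$ — to be the main source of bookkeeping errors, and the honest identification of all cross terms to be the most delicate step; the details are deferred to \cref{sec:second_variation_of_the_asymptotic_variance}.

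For the convexity claim when $d=1$ and $\domain=\real$, the plan is to show that the second (negative) term in~\eqref{eq:statement_second_variation} vanishes identically. By \cref{remark:psi_in_dim_one}, in dimension one $\psi_{U,\delta U}'=\delta U\,\phi_U'+C_{\domain}[U,\delta U]\,\e^{V+U}$, and $C_{\real}[U,\delta U]=0$; hence $\psi_{U,\delta U_0}'=\delta U_0\,\phi_U'$ exactly (the centering constant also vanishes on $\real$), so $\nabla\psi_{U,\delta U_0}-\delta U_0\nabla\phi_U=0$ pointwise. Therefore the integrand of the second term is identically zero, leaving
\[
\frac12\d(\d\sigma^2_f[U]\cdot\delta U)\cdot\delta U=\frac{\Z{U}}{Z^2}\int_{\real}\delta U_0^2\left(\abs{\phi_U'}^2+\int_{\real}\abs{\phi_U'}^2\,\d\mu_U\right)\e^{-V-U}\,\d x\ge0,
\]
which gives convexity. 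The only subtlety here is confirming that the centering of $\delta U$ really does kill the $\e^{V+U}$ correction on the line, which follows because that correction is not in $L^2(\mu_U)$ on $\real$ (by the Cauchy--Schwarz argument already used in the proof of~\cref{lemma:asymvar_in_1d}), forcing its coefficient to be zero; I would note that by contrast, on $\torus$ the correction term $C_{\torus}[U,\delta U_0]\,\e^{V+U}$ is generically nonzero, which is exactly why convexity can fail there and motivates~\cref{remark:non_convexity_torus}.
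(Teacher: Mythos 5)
Your proposal is correct and follows essentially the same route as the paper's proof in \cref{sec:second_variation_of_the_asymptotic_variance}: differentiate the first-variation formula using \cref{lemma:stability_gradient_sol_poisson} for the perturbed Poisson solution, symmetrize the cross term via~\eqref{eq:auxiliary_func_der}, reduce to the centered perturbations by noting that $\psi_{U,\delta X}$ and $\psi_{U,\delta X_0}$ differ by $\mu_U(\delta X)\,\phi_U$, and in dimension one on $\real$ use $\psi_{U,\delta X_0}' = \delta X_0\,\phi_U'$ (since $C_{\real}[U,\cdot]=0$) to annihilate the negative term. Your supplementary justification that the $\e^{V+U}$ correction cannot appear on $\real$ because it fails to be in $L^2(\mu_U)$ is consistent with the argument already used in the proof of \cref{lemma:asymvar_in_1d}.
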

\begin{proof}
We begin by rewriting the expression \eqref{eq:funcder} as
\begin{align*}
    \frac{1}{2}\d\sigma^2_f[U] \cdot \delta U
    &= \frac{1}{Z^2}\int_{\domain^d} \left(\Z{U} \delta U - \int_{\domain^d} \delta U \, \e^{-V-U} \right) \abs*{\nabla\phi_U}^2 \e^{-V-U} \\
    &= \frac{Z[U]}{Z^2}\int_{\domain^d} \delta U \abs*{\nabla\phi_U}^2 \e^{-V-U}
    - \left(\int_{\domain^d} \delta U \, \e^{-V-U}\right) \frac{\sigma^2_f[U]}{2\Z{U}}
    =: T_1[U;\delta U] + T_2[U;\delta U].
\end{align*}
Using the chain rule, we have
\begin{align*}
    \d T_1[U;\delta U] \cdot \delta W
    =& -\frac{1}{Z^2}\int_{\domain^d} \delta W  \e^{-V-U}\int_{\domain^d}\delta U\abs*{\nabla\phi_U}^2  \e^{-V-U} \nonumber \\
    & + \lim_{\varepsilon \to 0} \frac{\Z{U}}{\varepsilon Z^2}\int_{\domain^d}\delta U \left(\abs*{\nabla \phi_{U + \varepsilon \delta W}}^2 - \abs*{\nabla \phi_{U}}^2\right)  \e^{-V-U}
    - \frac{\Z{U}}{Z^2}\int_{\domain^d}\delta U \delta W \abs*{\nabla\phi_U}^2  \e^{-V-U}.
\end{align*}
Similarly, for the second term we obtain
\begin{align*}
    \d T_2[U;\delta U] \cdot \delta W
    = \frac{1}{Z^2}\int_{\domain^d} \delta U \delta W  \e^{-V-U} \int_{\domain^d} \abs*{\nabla\phi_U}^2  \e^{-V-U} \nonumber
    - \int_{\domain^d} \delta U  \e^{-V-U} \d\bigg(\frac{\sigma^2_f[U]}{2\Z{U}}\bigg) \cdot \delta W.
\end{align*}
The functional derivative in the last term on the right-hand side is calculated as in the proof of~\cref{proposition:functional_derivative_asym_var};
specifically,
\[
    \d\bigg(\frac{\sigma^2_f[U]}{2\Z{U}}\bigg) \cdot \delta W
    = \d\left(\frac{1}{Z^2} \int_{\domain^d} \phi_U (f-I) \e^{-V} \right) \cdot \delta W
    = \frac{1}{Z^2} \int_{\domain^d} \delta W \lvert \grad \phi_U \rvert^2 \e^{-V-U}.
\]
By~\cref{lemma:stability_gradient_sol_poisson} and the fact that $\delta U \in \smoothcompact(\domain^d)$,
we have that
\begin{align*}
    &\frac{1}{\varepsilon}
    \int_{\domain^d}\delta U \left(\abs*{\nabla \phi_{U + \varepsilon \delta W}}^2 - \abs*{\nabla \phi_{U}}^2\right)  \e^{-V-U} \\
    &\qquad \qquad =
    \int_{\domain^d}\delta U
    \left(\frac{\nabla \phi_{U + \varepsilon \delta W} - \nabla \phi_{U}}{\varepsilon}\right)
    \cdot \left(\nabla \phi_{U + \varepsilon \delta W} + \nabla \phi_{U}\right)
    \e^{-V-U} \\
    &\qquad \qquad \xrightarrow[\varepsilon \to 0]{}
    2\int_{\domain^d}\delta U \nabla \psi_{U,\delta W} \cdot \nabla \phi_{U} \, \e^{-V-U}.
\end{align*}
Collecting all the terms,
we obtain
\begin{align}
    \notag
    \frac{1}{2} \d (\d\sigma^2_f[U] \cdot \delta U) \cdot \delta W
    &= -\frac{1}{Z^2}\int_{\domain^d} \delta U  \e^{-V-U}\int_{\domain^d}\delta W\abs*{\nabla\phi_U}^2  \e^{-V-U} \\
    \notag
    &\quad -\frac{1}{Z^2}\int_{\domain^d} \delta W \e^{-V-U}\int_{\domain^d}\delta U\abs*{\nabla\phi_U}^2  \e^{-V-U} \\
    &\qquad + \frac{2\Z{U}}{Z^2}\int_{\domain^d}\delta U \nabla \psi_{U,\delta W} \cdot \nabla \phi_U  \e^{-V-U} \\
    \notag
    &\qquad + \frac{\Z{U}}{Z^2}\int_{\domain^d}\delta U \delta W \left( - \abs*{\nabla\phi_U}^2 + \int_{\domain^d} \abs*{\nabla\phi_U}^2 \, \d \mu_{U} \right) \e^{-V-U}.
\end{align}
By rewriting the last term on the right-hand side as
\[
    \frac{\Z{U}}{Z^2}\int_{\domain^d}\delta U \delta W \left( \abs*{\nabla\phi_U}^2 + \int_{\domain^d} \abs*{\nabla\phi_U}^2 \, \d \mu_{U} \right) \e^{-V-U}
    - \frac{2\Z{U}}{Z^2}\int_{\domain^d}\delta U \delta W \abs*{\nabla\phi_U}^2 \e^{-V-U},
\]
and substituting $\delta U \delta W = \delta U_0 \delta W_0 + \delta U \mu_U(\delta W) + \delta W \mu_U(\delta U) - \mu_U(\delta U) \mu_U(\delta W)$ in the first term of the latter expression,
the second variation may be further simplified to
\begin{align}
    \notag
    \frac{1}{2} \d (\d\sigma^2_f[U] \cdot \delta U) \cdot \delta W
    &= \frac{\Z{U}}{Z^2}\int_{\domain^d} \delta U_0 \delta W_0 \left( \abs*{\nabla\phi_U}^2 + \int_{\domain^d} \abs*{\nabla\phi_U}^2 \, \d \mu_{U} \right) \e^{-V-U} \\
    \label{eq:rearranged_second_variation}
    &\qquad + \frac{2 \Z{U}}{Z^2} \int_{\domain^d}\delta U \nabla \phi_U \cdot (\nabla \psi_{U,\delta W} - \delta W \nabla \phi_U) \, \e^{-V-U}.
\end{align}
Using~\eqref{eq:auxiliary_func_der},
both for $\psi_{U,\delta U}$ and $\psi_{U,\delta W}$,
we obtain
\begin{align*}
    &\int_{\domain^d}\delta U \nabla \phi_U \cdot \bigl(\nabla \psi_{U,\delta W} - \delta W \nabla \phi_U\bigr)  \e^{-V-U} \\
    &\qquad = - \int_{\domain^d} \bigl( \nabla \psi_{U,\delta U} - \delta U \nabla \phi_U \bigr) \cdot  \bigl( \nabla \psi_{U,\delta W} - \delta W \nabla \phi_U \bigr) \e^{-V-U}.
\end{align*}
From~\eqref{eq:pert2},
it is simple to see that $\nabla \psi_{U,\delta U} = \nabla \psi_{U,\delta U_0} + \nabla \psi_{U, \mu_U(\delta U)} = \nabla \psi_{U,\delta U_0} + \mu_U(\delta U) \nabla \phi_U$.
Similarly, $\nabla \psi_{U,\delta W} = \nabla \psi_{U,\delta W_0} + \mu_U(\delta W) \nabla \phi_U$.
Substituting these expressions in~\eqref{eq:rearranged_second_variation} leads to the claimed result~\eqref{eq:statement_second_variation}.

\paragraph{One-dimensional setting.}
In dimension 1 when $\domain = \real$,
it holds that $\psi_{U,\delta W}' = \delta W \phi_U'$ by~\eqref{eq:psi_prime},
and so the second term in~\eqref{eq:rearranged_second_variation} cancels out,
which proves the last part of the statement.
\end{proof}
\begin{remark}
Since all the terms on the right-hand side of~\eqref{eq:statement_second_variation} depend only on $\delta U_0$,
the second variation is invariant under vertical shift of $\delta U$,
in the sense that, formally,
\[
    \forall C \in \real, \qquad
    \d\bigl(\d\sigma^2_f[U] \cdot (\delta U + C)\bigr) \cdot (\delta U + C) = \d(\d\sigma^2_f[U] \cdot \delta U) \cdot \delta U.
\]
This property had to hold a priori because $\sigma^2_f[U]$ is itself invariant under addition of constants to~$U$,
and so we could have assumed that $\mu_U(\delta U) = 0$ from the beginning of the proof without loss of generality.
\end{remark}

\begin{remark}
    The optimal biasing potential is known explicitly by \cref{lemma:asymvar_in_1d} in the one-dimensional setting,
    so~\cref{proposition:second_variation_asymptotic_variance} is of little direct importance in this case.
    Nonetheless, the result provides understanding for the numerical experiments using the formula of the directional derivative.
\end{remark}

\begin{remark}
    [One-dimensional case with $\domain = \torus$]
    \label{remark:non_convexity_torus}
    The asymptotic variance is not a convex functional when $\domain = \torus$ and $d = 1$.
    Indeed, we construct in this remark a potential $V$, a smooth function~$\phi$,
    and a direction $\delta U$ such that
    the second variation of the asymptotic variance~$\sigma^2_f$ for the observable $f = -\mathcal L \phi$
    (with $\mathcal L$ the generator~$\mathcal L_U$ given in~\eqref{eq:generator} with~$U = 0$)
    in the direction~$\delta U$ is negative when evaluated at the biasing potential~$U = 0$.
    In the setting we consider, since $\phi$ is the solution to~\eqref{eq:poisson} when~$U = 0$,
    it holds by~\cref{remark:psi_in_dim_one} that
    \[
        \psi_{U=0,\delta U_0}' = \delta U_0 \phi'
        - \left( \frac{\int_{\torus} \delta U_0 \phi'}  {\int_{\torus}\e^{V}} \right)\e^{V}.
    \]
    Here $\delta U_0 := \delta U - \mu(\delta U)$.
    Therefore, by substitution in~\eqref{eq:statement_second_variation} we have that
    \[
        \frac{1}{2} \d \bigl(\d\sigma^2_f[0] \cdot \delta U\bigr) \cdot \delta U
        = \frac{1}{Z}
        \left( \int_{\torus} \delta U_0^2 \left( \abs*{\phi'}^2 + \int_{\torus} \abs*{\phi'}^2\d\mu \right) \e^{-V}
        - 2 \frac{ \left(\int_{\torus} \delta U_0 \phi' \right)^2}{\int_{\torus} \e^{V}} \right),
    \]
    The right-hand side of this equation is not always positive.
    In order to show this,
    consider the case where~%
    \(
        \delta U_0 = \phi' \e^{V}.
    \)
    Note that $\delta U_0$ indeed has average~0 with respect to~$\mu$
    since
    \(
        \int_{\torus} \delta U_0 \e^{-V} = \int_{\torus} \phi' = 0.
    \)
    Then, we have
    \[
        \frac{Z}{2} \d \bigl(\d\sigma^2_f[0] \cdot \delta U\bigr) \cdot \delta U
        = \int_{\torus} \abs{\phi'}^4 \e^{V} + \frac{\int_{\torus} \abs*{\phi'}^2 \e^{-V}}{\int_{\torus} \e^{-V}} \int_{\torus} \abs*{\phi'}^2 \e^{V}
        - 2 \frac{ \left(\int_{\torus} \abs{\phi'}^2 \e^{V} \right)^2}{\int_{\torus} \e^{V}}.
    \]
    Assume that $\phi = \varrho_{\varepsilon} \star h + C$ is a regularization of a hat function~$h\colon \torus\to \real$ given on the interval~$[-\pi, \pi]$,
    which we identify with its image under the quotient map~$\real \to \torus$,
    by
    \[
        h(x) :=
        \begin{cases}
            1 - \abs{x}, \qquad &\abs{x} < 1, \\
            0 & \text{otherwise},
        \end{cases}
    \]
    with $\varrho_{\varepsilon}$ the standard mollifier~\eqref{eq:mollification} and~$C\in \real$ the constant such that $\phi$ has average 0 with respect to~$\mu$.
    Then, letting $\nu$ denote the probability measure with Lebesgue density proportional to $\e^{V}$
    and~$\mathcal I = [-1, 1]$,
    we obtain that, in the limit as $\varepsilon \to 0$,
    \[
        \int_{\torus} \abs{\phi'}^4 \d \nu \to \nu(\mathcal  I),
        \qquad
        \int_{\torus} \abs*{\phi'}^2 \d \mu \to \mu (\mathcal  I),
        \qquad
        \int_{\torus} \abs*{\phi'}^2 \d \nu \to \nu (\mathcal  I).
    \]
    Therefore, it holds in this limit that
    \[
        \frac{Z}{2} \d (\d\sigma^2_f[0] \cdot \delta U) \cdot \delta U
        \to \left( \nu(\mathcal  I) + \mu(\mathcal  I) \nu(\mathcal I) - 2 \nu(\mathcal I)^2 \right) \int_{\torus} \e^{V} .
    \]
    Now let $V(x) = K \cos(x)$ for all~$x$.
    In the limit as $K \to \infty$, it holds that $\mu(\mathcal I) \to 0$ and $\nu(\mathcal I) \to 1$.
    We conclude that, for sufficiently large~$K$ and sufficiently small $\varepsilon$,
    the second variation of the asymptotic variance in direction $\delta U$ is negative.
\end{remark}

\section{Numerical discretization of the Poisson equation}
\label{sec:numerical_discretization_of_the_Poisson_equation}
We consider here the case where the domain is $\torus^2$ for simplicity,
noting that the method may be generalized to any spatial dimension.
In order to numerically solve the Poisson equation~\eqref{eq:poisson},
we use a finite difference approach on a grid of size $N \times N$.
For a given~$\delta>0$, the discretization nodes are arranged linearly according to
\begin{equation}
    \label{eq:discretization_nodes}
    \vect x_{\ell} := (- \pi + i \step, - \pi + j \step) \in \torus^2,
    \qquad j = \left\lfloor \frac{\ell - 1}{N} \right\rfloor, \qquad i = \ell - 1 - j N,
    \qquad \step = \frac{2\pi}{N},
\end{equation}
for $\ell \in \{1, \dotsc, N^2\}$.
Note that the indices~$i$ and~$j$ each run from $0$ to $N-1$;
the largest value of either coordinate over the set of discretization nodes is $\pi - \step$,
which is sufficient given that $-\pi$ and $\pi$ coincide under the quotient map~$\real^2 \to \torus^2$.

Before we present the method,
we introduce additional notation.
We denote by $\Pi_N$ the discretization operator which associates to a function its values at the grid points~\eqref{eq:discretization_nodes},
and for a function~$h\colon \torus^2 \to \real$,
we write~$\vect h = \Pi_N h \in \real^{N^2}$.
The notation~$\exp.(\vect h)$ refers the vector obtained by applying the exponential function element-wise to~$\vect h$,
and $\diag(\vect h)$ refers to the diagonal matrix with diagonal entries given by~$\vect h$.
The notation $\vect 1 \in \real^{N^2}$ refers to a column vector containing only ones.
We also introduce the one-dimensional backward and forward difference operators,
which act on vectors in $\real^N$:
\begin{align*}
    \mat D_{\rm B} =
    \frac{1}{\step}
    \begin{pmatrix}
        1 & & & & -1 \\
          -1 & 1 \\
             & -1 & 1  \\
             & & \ddots & \ddots  \\
          & & & -1 & 1 \\
    \end{pmatrix},
    \qquad
    \mat D_{\rm F} =
    \frac{1}{\step}
    \begin{pmatrix}
        -1 & 1 \\
          & -1 & 1 \\
          & & \ddots & \ddots \\
          & & & -1 & 1 \\
          1  & & & & -1 \\
    \end{pmatrix}.
\end{align*}
From these operators,
we construct difference operators along the~$x$ and~$y$ directions by taking Kronecker products with the $\real^{N\times N}$ identity matrix~$\matid_N$:
\[
    \mat D_{\rm B}^x = \matid_N \otimes \mat D_{\rm B},
    \qquad
    \mat D_{\rm B}^y = \mat D_{\rm B} \otimes \matid_N,
    \qquad
    \mat D_{\rm F}^x = \matid_N \otimes \mat D_{\rm F},
    \qquad
    \mat D_{\rm F}^y = \mat D_{\rm F} \otimes \matid_N.
\]
We recall that, for two matrices $\mat A, \mat B \in \real^{N \times N}$,
the Kronecker product $A \otimes B$ is defined as
\[
    \mat A \otimes\mat B = \begin{pmatrix}
        a_{11} \mat B & \cdots & a_{1N}\mat B \\
        \vdots & \ddots &           \vdots \\
        a_{N1} \mat B & \cdots & a_{NN} \mat B
    \end{pmatrix}.
\]
We denote by $\nabla_{\rm F} \vect h$ the $N^2 \times 2$ matrix
\(
    \nabla_{\rm F} \vect h =
    \begin{pmatrix}
        \mat D_{\rm F}^x \vect h & \mat D_{\rm F}^y \vect h
    \end{pmatrix}.
\)
For a weight function $w\colon \torus^2 \to \real$,
we introduce the weighted inner product~$\ip{\placeholder,\placeholder}_{w}\colon \real^{N^2}\times\real^{N^2} \to \real$ given for~$\vect g,\vect h\in\real^{N^2}$ by
\begin{equation}
    \label{eq:weighted_inner_product}
    \ip{\vect g, \vect h}_{w}
    = \step^2 \, \vect g^\t \diag\bigl(\vect w\bigr) \vect h
    = \step^2 \sum_{\ell=1}^{N^2} \vect g_\ell \vect h_\ell \, w(\vect x_{\ell}),
\end{equation}
with corresponding norm $\norm{\placeholder}_{w}$.
We include the factor $\step^2$ in this definition so that,
if~$\vect g$ and~$\vect h$ contain the values taken by continuous functions $g$ and $h$ when evaluated at the discretization points
and $w$ is continuous,
then
\[
    \ip{\vect g, \vect h}_w \xrightarrow[N \to \infty]{} \int_{\torus^2} g(x) h(x) w(x) \, \d x.
\]
Finally, let $\norm{\nabla_{\rm F} \vect h}_{w}^2 = \norm{\mat D_{\rm F}^x \vect h}_{w}^2 + \norm{\mat D_{\rm F}^y \vect h}_{w}^2$
and let~$\abs{\nabla_{\rm F} \vect h}^2$ denote the $N^2 \times 1$ column vector obtained by taking the squared Euclidean norm of each row of~$\nabla_{\rm F} \vect h$.
In the remainder of this section,
the notation~\eqref{eq:weighted_inner_product} and corresponding norm are usually employed with the weight function~$w = \e^{-V-U}$ and so,
in order to simplify notation, we omit the subscript in this case.
We are now ready to write the discrete formulation of the Poisson equation~\eqref{eq:poisson}.

\begin{proposition}
    \label{proposition:discrete_poisson}
    For~$V,U,f\colon \torus^2 \to \real$, there exists a unique solution~$(\vect \phi_N, I_N) \in \real^{N^2} \times \real$ to
    \begin{equation}
        \label{eq:discrete_poisson_equation}
        -\widetilde {\mat L}
        \begin{pmatrix}
            \vect \phi_N \\
            I_N
        \end{pmatrix}
        :=
        \begin{pmatrix}
            -\mat L & \exp.(\vect U) \\
            \step^2\exp.(- \vect V - \vect U)^\t & 0
        \end{pmatrix}
        \begin{pmatrix}
            \vect \phi_N \\
            I_N
        \end{pmatrix}
        =
        \begin{pmatrix}
            \diag\bigl(\exp.(\vect U)\bigr) \vect f \\
            0
        \end{pmatrix},
    \end{equation}
    where
    \begin{align}
        \nonumber
        \mat L &= \diag\bigl(\exp.(\vect V + \vect U)\bigr) \mat D_{\rm B}^x \diag\bigl(\exp.(-\vect V - \vect U)\bigr) \mat D_{\rm F}^x \\
        \label{eq:definition_of_matrix_L}
               &\qquad +\diag\bigl(\exp.(\vect V + \vect U)\bigr) \mat D_{\rm B}^y \diag\bigl(\exp.(-\vect V - \vect U)\bigr) \mat D_{\rm F}^y.
    \end{align}
\end{proposition}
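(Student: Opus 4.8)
The plan is to prove the statement by showing that the $(N^2+1)\times(N^2+1)$ matrix $\widetilde{\mat L}$ defined through the left-hand side of~\eqref{eq:discrete_poisson_equation} is invertible. Since the matrix is square, it then suffices to verify that its kernel is trivial, and existence and uniqueness of $(\vect \phi_N, I_N)$ for the given right-hand side follow at once. This is the discrete counterpart of the Lax--Milgram argument used in~\cref{lemma:asymptotic_variance}, the compatibility condition $\int \e^{-V-U}\,\d\mu_U$--style cancellation playing the role of the mean-zero constraint.

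First I would isolate the algebraic structure of the discrete weighted Laplacian $\mat L$ in~\eqref{eq:definition_of_matrix_L}. On the periodic grid one has $\mat D_{\rm B} = -\mat D_{\rm F}^{\t}$, and since $(\mat A\otimes\mat B)^{\t} = \mat A^{\t}\otimes\mat B^{\t}$, this relation is inherited by the directional operators: $\mat D_{\rm B}^{x} = -(\mat D_{\rm F}^{x})^{\t}$ and $\mat D_{\rm B}^{y} = -(\mat D_{\rm F}^{y})^{\t}$. Writing $\vect v = \exp.(\vect V + \vect U)$ and $\vect w = \exp.(-\vect V - \vect U)$ for the entrywise positive weight vectors (so that $\vect v \odot \vect w = \vect 1$), one obtains from~\eqref{eq:definition_of_matrix_L} that $\mat L = \diag(\vect v)\, \mat M$, where
\[
    \mat M := -(\mat D_{\rm F}^{x})^{\t} \diag(\vect w)\, \mat D_{\rm F}^{x} - (\mat D_{\rm F}^{y})^{\t} \diag(\vect w)\, \mat D_{\rm F}^{y}
\]
is symmetric and negative semidefinite: $\vect h^{\t} \mat M \vect h \leq 0$ for every $\vect h \in \real^{N^2}$, with equality exactly when $\mat D_{\rm F}^{x}\vect h = \mat D_{\rm F}^{y}\vect h = \vect 0$. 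I would then argue that these conditions force $\vect h$ to be constant, because the periodic two-dimensional grid~\eqref{eq:discretization_nodes} is connected, so that $\kernel \mat M = \Span(\vect 1)$; consequently $\kernel \mat L = \Span(\vect 1)$ since $\diag(\vect v)$ is invertible, and moreover $\mat L^{\t} \vect w = \mat M \diag(\vect v)\vect w = \mat M \vect 1 = \vect 0$.

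With these facts in hand I would show $\kernel \widetilde{\mat L} = \{0\}$. If $(\vect \phi, I)$ lies in the kernel, then $-\mat L \vect \phi + I\, \exp.(\vect U) = \vect 0$ together with $\step^2 \vect w^{\t}\vect \phi = 0$, i.e.\ $\vect w^{\t}\vect \phi = 0$. Multiplying the first identity on the left by $\vect w^{\t}$ and using $\vect w^{\t}\mat L = (\mat L^{\t}\vect w)^{\t} = \vect 0^{\t}$ eliminates the Laplacian term and leaves $I\,\vect w^{\t}\exp.(\vect U) = I \sum_{\ell=1}^{N^2}\e^{-V(\vect x_{\ell})} = 0$, whence $I = 0$; then $\mat L \vect \phi = \vect 0$ gives $\vect \phi = c\,\vect 1$, and the constraint $\vect w^{\t}\vect \phi = c\sum_{\ell=1}^{N^2}\e^{-V(\vect x_{\ell})-U(\vect x_{\ell})} = 0$ forces $c = 0$. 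Hence $\widetilde{\mat L}$ is invertible, which proves the claim. The same left-multiplication by $\vect w^{\t}$ applied to the actual right-hand side of~\eqref{eq:discrete_poisson_equation} in addition pins down $I_N = \bigl(\sum_{\ell} f(\vect x_{\ell})\e^{-V(\vect x_{\ell})}\bigr)\big/\bigl(\sum_{\ell}\e^{-V(\vect x_{\ell})}\bigr)$, a discrete analogue of $I = \mu(f)$, while the bottom block equation is the discrete counterpart of the normalization $\phi_U \in L^2_0(\mu_U)$. I expect the only mildly delicate points to be the bookkeeping behind $\mat D_{\rm B} = -\mat D_{\rm F}^{\t}$ and its persistence under the Kronecker products, and the connectedness argument ensuring that the common kernel of $\mat D_{\rm F}^{x}$ and $\mat D_{\rm F}^{y}$ is spanned by $\vect 1$.
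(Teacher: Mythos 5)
Your proposal is correct and follows essentially the same route as the paper: reduce to triviality of the kernel of the square matrix $\widetilde{\mat L}$, use $\mat D_{\rm B} = -\mat D_{\rm F}^{\t}$ to exhibit the (negative semi)definite structure of $\mat L$, kill the scalar unknown by pairing against the left null vector $\exp.(-\vect V - \vect U)$, and then conclude that the remaining kernel consists of constants, which the normalization row eliminates. Your formulation via $\mat L = \diag(\vect v)\,\mat M$ with $\mat M$ Euclidean-symmetric is just a rephrasing of the paper's self-adjointness of $\mat L$ with respect to the $\e^{-V-U}$-weighted inner product, so no substantive difference.
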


\begin{remark}
    The first rows in~\eqref{eq:discrete_poisson_equation} may be rewritten as
    \[
        - \mat L \vect \phi_N = \diag\bigl(\exp.(\vect U)\bigr) (\vect f - I_N \vect 1),
    \]
    which resembles the Poisson equation~\eqref{eq:poisson}.
    The last row in~\eqref{eq:discrete_poisson_equation} may be rewritten as
    \[
        \step^2\exp.(- \vect V - \vect U)^\t \vect \phi_N = \ip{\vect 1, \vect \phi_N} = 0.
    \]
    It expresses the requirement that the vector $\vect \phi_N$ should be mean-zero with respect to the discrete measure $\exp.(-\vect V - \vect U)$.
\end{remark}

\begin{remark}
    We use the notation~$I_N$ for the scalar unknown in~\eqref{eq:discrete_poisson_equation}
    because solving~\eqref{eq:discrete_poisson_equation} yields
    both an approximate solution to the Poisson equation and an approximation of~$I=\mu(f)$.
\end{remark}
\begin{proof}
    In order to prove the statement,
    it is sufficient to show that the homogeneous equation
    \begin{equation}
        \label{eq:homogeneous}
        \begin{pmatrix}
            - \mat L & \exp.(\vect U) \\
            \step^2\exp.(- \vect V - \vect U)^\t & 0
        \end{pmatrix}
        \begin{pmatrix}
            \vect \gamma \\
            \sigma
        \end{pmatrix}
        =
        \begin{pmatrix}
            \vect 0 \\
            0
        \end{pmatrix}
    \end{equation}
    admits only the trivial solution $ = (\vect 0, 0)$.
    We assume by contradiction that $(\vect \gamma, \sigma)$ is a nonzero solution.
    Then
    \[
        - \mat L \vect \gamma + \sigma \exp.(\vect U) = \vect 0,
    \]
    implying that
    \begin{equation}
        \label{eq:inner_product}
        - \ip{\mat L \vect \gamma, \vect 1}_{} + \sigma \ip{\exp.(\vect U), \vect 1}_{} = 0.
    \end{equation}
    The linear operator on $\real^{N^2 \times N^2}$ induced by $\mat L$ is self-adjoint for the inner product~$\ip{\placeholder,\placeholder}_{}$,
    because
    \begin{align}
        \nonumber
        -\ip{\vect g, \mat L \vect h}_{}
        &= - \step^2 \vect g^\t
        \left( \mat D_{\rm B}^x \diag\bigl(\exp.(-\vect V - \vect U)\bigr) \mat D_{\rm F}^x + \mat D_{\rm B}^y \diag\bigl(\exp.(-\vect V - \vect U)\bigr) \mat D_{\rm F}^y \right) \vect h \\
        \label{eq:discrete_bilinear}
        &= \ip{\mat D_{\rm F}^x \vect g, \mat D_{\rm F}^x \vect h}_{} + \ip{\mat D_{\rm F}^y \vect g, \mat D_{\rm F}^y \vect h}_{},
    \end{align}
    where we used the relation $\mat D_{\rm B}^\t = - \mat D_{\rm F}$.
    Therefore, going back to~\eqref{eq:inner_product},
    we deduce that
    \[
        0 = - \ip{\vect \gamma, \mat L \vect 1}_{} + \sigma \ip{\exp.(\vect U), \vect 1}_{}
        = \sigma \ip{\exp.(\vect U), \vect 1}_{}
        = \step^2\sigma \vect 1^\t \exp.(-\vect V).
    \]
    Therefore $\sigma = 0$,
    but then $\mat L \vect \gamma = \vect 0$ by
    ~\eqref{eq:homogeneous} and so $\ip{\vect \gamma, \mat L \vect \gamma}_{} = 0$.
    By the relation~\eqref{eq:discrete_bilinear},
    this implies that~$D_{\rm F}^x \vect \gamma = D_{\rm F}^y \vect \gamma = 0$,
    so the vector $\vect \gamma$ is constant.
    The last equation in~\eqref{eq:homogeneous} then implies that~$\vect \gamma = \vect 0$.
    Note that~\eqref{eq:discrete_bilinear} implies that
    \(
        \kernel(L^\t) = \Span \bigl\{ \exp.(-\vect V- \vect U) \bigr\},
    \)
    which will be useful in the proof of \cref{lemma:stability_finite_difference_disretization}.
\end{proof}

It is possible to prove the convergence of the solution to~\eqref{eq:discrete_poisson_equation}
to the exact solution of the Poisson equation~\eqref{eq:poisson} in the limit as~$N \to \infty$.
To this end,
we begin by showing the following Poincaré-like inequality.
\begin{lemma}
    [Discrete Poincaré inequality]
    \label{lemma:discrete_poincare}
    Assume that $V+U\colon \torus^2 \to \real$ is uniformly bounded.
    Then there exists a constant $R_{\rm disc}[U] > 0$ independent of $N$ such that
    \begin{align}
        \label{eq:discrete_poincaré}
        \forall \vect g \in \left\{\vect h \in \real^{N^2} : \vect h^\t \exp.(- \vect V - \vect U) = 0 \right\}, \qquad
        \norm{\nabla_{\rm F} \vect g}_{}^2 \geq  R_{\rm disc}[U] \norm{\vect g}_{}^2.
    \end{align}
\end{lemma}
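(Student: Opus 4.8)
The plan is to reduce the discrete Poincaré inequality~\eqref{eq:discrete_poincaré} to the corresponding inequality with the \emph{uniform} weight $w \equiv 1$, and then to prove the latter by a compactness argument. The reduction is immediate: since $V+U$ is uniformly bounded, there exist constants $0 < c_- \leq c_+ < \infty$, independent of $N$, such that $c_- \leq \e^{-V-U}(\vect x_\ell) \leq c_+$ at every grid point. Hence the weighted norms $\norm{\placeholder}_{\e^{-V-U}}$ and $\norm{\placeholder}_{1}$ are equivalent uniformly in $N$, as are the corresponding gradient norms; moreover, the constraint space $\{\vect h : \vect h^\t \exp.(-\vect V - \vect U) = 0\}$ can be mapped to the zero-mean space $\{\vect h : \vect h^\t \vect 1 = 0\}$ by subtracting a weighted mean, which changes neither $\nabla_{\rm F} \vect g$ nor $\norm{\vect g}_1$ by more than a uniformly controlled factor. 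So it suffices to establish
\[
    \forall \vect g \in \real^{N^2} \text{ with } \vect g^\t \vect 1 = 0, \qquad
    \norm{\nabla_{\rm F} \vect g}_1^2 \geq R_0 \norm{\vect g}_1^2
\]
for some $R_0 > 0$ independent of $N$, where the subscript $1$ denotes the inner product~\eqref{eq:weighted_inner_product} with $w \equiv 1$.

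For this uniform-weight estimate, the natural route is a discrete Fourier analysis on $\torus^2$. The operators $\mat D_{\rm F}^x, \mat D_{\rm F}^y$ are circulant (they are built from circulant blocks via Kronecker products with $\matid_N$), so they are simultaneously diagonalized by the discrete Fourier transform on $(\integer/N\integer)^2$. On the Fourier mode indexed by $(k_1, k_2) \in \{0, \dots, N-1\}^2$, the symbol of $\mat D_{\rm F}^x$ is $\step^{-1}(\e^{2\pi \i k_1 / N} - 1)$, whose squared modulus is $\step^{-2}\abs{\e^{2\pi \i k_1/N} - 1}^2 = \step^{-2} \cdot 4\sin^2(\pi k_1/N)$. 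Using $\step = 2\pi/N$ and the bound $\sin(\theta) \geq (2/\pi)\theta$ for $\theta \in [0, \pi/2]$, one checks that for $k_1 \notin\{0\}$ this quantity is bounded below by a universal constant (one can take $R_0 = 1$, matching the fact that the lowest nonzero frequency contributes $\step^{-2} \cdot 4\sin^2(\pi/N) \to 1$ as $N\to\infty$, and is $\geq$ some fixed positive constant for all $N \geq 2$). The zero-mean constraint $\vect g^\t \vect 1 = 0$ kills exactly the mode $(k_1,k_2) = (0,0)$, so on the orthogonal complement at least one of $k_1, k_2$ is nonzero, and Parseval's identity for the discrete Fourier transform then yields $\norm{\nabla_{\rm F}\vect g}_1^2 = \norm{\mat D_{\rm F}^x \vect g}_1^2 + \norm{\mat D_{\rm F}^y \vect g}_1^2 \geq R_0 \norm{\vect g}_1^2$ with $R_0$ the said universal constant.

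Combining the two steps, one obtains~\eqref{eq:discrete_poincaré} with $R_{\rm disc}[U] := R_0 \, c_- / c_+$ (up to the mild adjustment from the constraint-space identification, which only affects the constant, not its $N$-independence). I expect the main obstacle to be purely bookkeeping: carefully tracking how the weighted-mean subtraction interacts with the two weighted norms so that the final constant genuinely depends only on $c_\pm$ and the universal Fourier constant, and not on $N$. An alternative to the Fourier argument—useful if one wants to avoid circulant diagonalization—is a Nečas-type compactness argument: suppose the inequality fails along a sequence $N_j \to \infty$ with normalized minimizers $\vect g^{(j)}$; interpolate these into piecewise-constant (or piecewise-affine) functions on $\torus^2$, extract a weakly convergent subsequence in $L^2$, and derive a contradiction from the fact that the limit would have vanishing (distributional) gradient yet unit norm and zero mean. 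The Fourier route is cleaner here because the operators are exactly circulant, so I would present that one.
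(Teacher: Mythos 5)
Your proposal is correct and follows essentially the same route as the paper: both first use the uniform boundedness of $V+U$ to reduce to the unweighted inequality on the zero-mean space, and both then obtain the unweighted bound from the spectrum of the circulant difference operators, with smallest nonzero eigenvalue $(N/\pi)^2\sin^2(\pi/N)\to 1$. The only cosmetic difference is that the paper tensorizes to dimension one before diagonalizing $\mat D_{\rm B}\mat D_{\rm F}$, whereas you diagonalize $\nabla_{\rm F}$ directly by the two-dimensional discrete Fourier transform — the same computation.
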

\begin{proof}
    It is sufficient to show~\eqref{eq:discrete_poincaré} for $V+U = 0$.
    Indeed, assuming that the inequality holds in this particular case and denoting by $C$ a constant
    which depends only on $V + U$ and is allowed to change from line to line,
    we have that
    \begin{align*}
        \forall \vect g \in \real^{N^2}, \qquad
        &\ip{\mat D_{\rm F}^x \vect g, \mat D_{\rm F}^x \vect g}_{} + \ip{\mat D_{\rm F}^y \vect g, \mat D_{\rm F}^y \vect g}_{} \\
        &\qquad \geq C \Bigl( \ip{\mat D_{\rm F}^x \vect g, \mat D_{\rm F}^x \vect g}_1 + \ip{\mat D_{\rm F}^y \vect g, \mat D_{\rm F}^y \vect g}_1 \Bigr) \\
        &\qquad \geq C \ip{\vect g - \widetilde {\vect g}, \vect g - \widetilde {\vect g}}_1,
        \qquad \widetilde {\vect g} =  \frac{\ip{\vect g, \vect 1}_1}{\ip{\vect 1, \vect 1}_1}  \vect 1,
    \end{align*}
    where we employed the equivalence between $\ip{\placeholder, \placeholder}_1$,
    which is given by~\eqref{eq:weighted_inner_product} in the particular case where~$V + U = 0$,
    and $\ip{\placeholder, \placeholder}_{\e^{-V-U}}$,
    noting that both constants in this equivalence can be fixed independently of~$N$.
    Using this equivalence in the other direction,
    we obtain
    \begin{align*}
        \ip{\vect g - \widetilde {\vect g}, \vect g - \widetilde {\vect g}}_1
        &\geq C \ip{\vect g - \widetilde {\vect g}, \vect g - \widetilde {\vect g}}_{}
        \geq C \inf_{s \in \real} \ip{\vect g - s \vect 1, \vect g - s \vect 1}_{} \\
        &= C \ip{\vect g - \overline {\vect g}, \vect g - \overline{\vect g}}_{},
        \qquad \overline {\vect g} =  \frac{\ip{\vect g, \vect 1}_{}}{\ip{\vect 1, \vect 1}_{}}  \vect 1.
    \end{align*}
    Finally, equation~\eqref{eq:discrete_poincaré} when $V + U = 0$ follows from its one-dimensional counterpart by using a standard tensorization argument (as for the proof of~\cite[Proposition 2.6]{MR3509213} for instance).
    It only remains to show the one-dimensional inequality
    \begin{equation}
        \label{eq:one_dimensional_poincare}
        \forall \vect g \in \real^N, \qquad
        \ip{\mat D_{\rm F} \vect g, \mat D_{\rm F} \vect g}_1
        \geq  R_{\rm disc}[U] \ip{\vect g - \widetilde {\vect g}, \vect g - \widetilde {\vect g}}_1,
        \qquad \widetilde {\vect g} =  \frac{\ip{\vect g, \vect 1}_1}{\ip{\vect 1, \vect 1}_1}  \vect 1,
    \end{equation}
    for a constant $R_{\rm disc}[U]$ independent of~$N$ for sufficiently large~$N$.
    To this end,
    we notice that
    \[
        \ip[\big]{\mat D_{\rm F} \vect g, \mat D_{\rm F} \vect g}_1
        = \ip[\big]{\mat D_{\rm B} \mat D_{\rm F} (\vect g - \widetilde {\vect g}), \vect g - \widetilde {\vect g}}_1.
    \]
    The matrix $D_{\rm B} \mat D_{\rm F}$ is given by
    \[
        D_{\rm B} \mat D_{\rm F} =
        \frac{1}{\step^2}
        \begin{pmatrix}
            2 & -1 & & & -1\\
            -1 & 2 & -1 \\
               & -1 & \ddots & \ddots \\
               & & \ddots & \ddots & -1 \\
            -1 & & & -1 & 2
        \end{pmatrix}.
    \]
    This is a circulant matrix~\cite{MR543191} with explicit eigenvalues given by
    \[
        \lambda_k = \frac{4}{\delta^2} \sin^2 \left(\frac{\pi k}{N}\right) = \frac{N^2}{\pi^2} \sin^2 \left(\frac{\pi k}{N}\right), \qquad k= 0, 1, \dotsc, N-1.
    \]
    The minimum eigenvalue of this matrix is $\lambda_0 = 0$,
    and the associated eigenvector is $\vect 1$,
    to which $\vect g - \widetilde{\vect g}$ is orthogonal.
    Therefore, equation~\eqref{eq:one_dimensional_poincare} implies that
    \[
        \forall \vect g \in \real^N, \qquad
        \ip[\big]{\mat D_{\rm F} \vect g, \mat D_{\rm F} \vect g}_1
        \geq \lambda_1 \ip[\big]{\vect g - \widetilde {\vect g}, \vect g - \widetilde {\vect g}}_1,
    \]
    which implies that, for fixed~$N$,
    equation~\eqref{eq:one_dimensional_poincare} holds with constant $R_{\rm disc}(N) = (N/\pi)^2 \sin^2 (\pi/N)$,
    which converges to $1$ in the limit $N \to \infty$.
    See also~\cite[Lemma~12.2]{MR2265914} for a Poincaré inequality for discrete functions on a bounded interval that are zero at the endpoints.
\end{proof}

\Cref{proposition:discrete_poisson} implies that $\widetilde {\mat L}$ is invertible.
Using~\cref{lemma:discrete_poincare},
we show that $\widetilde {\mat L}^{-1}$ does not diverge in the limit as $N \to \infty$,
in an appropriate norm.
\begin{lemma}
    \label{lemma:stability_finite_difference_disretization}
    Assume that~$V,U\colon \torus^2 \to \real$ are continuous.
    Then the matrix $\widetilde{\mat L}^{-1}$ is bounded uniformly in~$N \in\mathbb{N}$,
    for the operator norm induced by the following norm on $\real^{N^2} \times \real$:
    \begin{equation}
        \label{eq:norm_discrete}
        (\vect \gamma, \sigma) \mapsto \norm{\vect \gamma}_{\e^{-U-V}} + \abs{\sigma}.
    \end{equation}
\end{lemma}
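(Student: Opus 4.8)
The plan is to bound $\widetilde{\mat L}^{-1}$ directly: for an arbitrary right-hand side $(\vect b, c) \in \real^{N^2} \times \real$, I would write $(\vect\phi_N, I_N) = \widetilde{\mat L}^{-1}(\vect b, c)$ --- which is well defined by~\cref{proposition:discrete_poisson} --- and show that
\[
    \norm{\vect\phi_N}_{\e^{-V-U}} + \abs{I_N} \leq C \bigl(\norm{\vect b}_{\e^{-V-U}} + \abs{c}\bigr)
\]
for a constant $C$ independent of $N$; since $\norm{\placeholder}_{\e^{-V-U}} = \norm{\placeholder}_{\e^{-U-V}}$, this is exactly the claimed bound for the operator norm induced by~\eqref{eq:norm_discrete}. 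Spelling out the two block rows of $\widetilde{\mat L}(\vect\phi_N, I_N) = (\vect b, c)$ in terms of the inner product $\ip{\placeholder,\placeholder}$ of~\eqref{eq:weighted_inner_product} with weight $\e^{-V-U}$ (whose subscript I drop, as elsewhere in~\cref{sec:numerical_discretization_of_the_Poisson_equation}), the system reads $-\mat L \vect\phi_N + I_N \exp.(\vect U) = \vect b$ together with $\ip{\vect 1, \vect\phi_N} = c$. I would use three ingredients: the identity $-\ip{\vect g, \mat L \vect h} = \ip{\mat D_{\rm F}^x \vect g, \mat D_{\rm F}^x \vect h} + \ip{\mat D_{\rm F}^y \vect g, \mat D_{\rm F}^y \vect h}$ from the proof of~\cref{proposition:discrete_poisson}, which shows in particular that $\mat L$ is self-adjoint for $\ip{\placeholder,\placeholder}$ and that $\mat L \vect 1 = \vect 0$; the discrete Poincaré inequality~\eqref{eq:discrete_poincaré}, whose constant $R_{\rm disc}[U]$ is independent of $N$; and the elementary observation that, because $V$ and $U$ are continuous on the compact torus $\torus^2$, the discrete sums $\norm{\vect 1}_{\e^{-V-U}}^2 = \step^2 \sum_\ell \e^{-(V+U)}(\vect x_\ell)$, $\norm{\exp.(\vect U)}_{\e^{-V-U}}^2 = \step^2 \sum_\ell \e^{U-V}(\vect x_\ell)$ and $\ip{\exp.(\vect U), \vect 1} = \step^2 \sum_\ell \e^{-V}(\vect x_\ell)$ are bounded above and below by positive constants independent of $N$.

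First I would estimate $I_N$. Testing the first block row against $\vect 1$ and using $\ip{\mat L \vect\phi_N, \vect 1} = \ip{\vect\phi_N, \mat L \vect 1} = 0$ gives $I_N = \ip{\vect b, \vect 1} / \ip{\exp.(\vect U), \vect 1}$; since the denominator is bounded below uniformly in $N$ and $\abs{\ip{\vect b, \vect 1}} \leq \norm{\vect b}_{\e^{-V-U}} \norm{\vect 1}_{\e^{-V-U}}$ by Cauchy--Schwarz, this yields $\abs{I_N} \leq C_1 \norm{\vect b}_{\e^{-V-U}}$ with $C_1$ independent of $N$.

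Next I would estimate $\vect\phi_N$. Writing $\vect r := \vect b - I_N \exp.(\vect U) = -\mat L \vect\phi_N$, the choice of $I_N$ forces $\ip{\vect r, \vect 1} = 0$. I would decompose $\vect\phi_N = \vect\phi_N^0 + \bar\phi\, \vect 1$ orthogonally for $\ip{\placeholder,\placeholder}$, so that $\bar\phi = c / \norm{\vect 1}_{\e^{-V-U}}^2$ and $\ip{\vect 1, \vect\phi_N^0} = 0$, which means exactly $(\exp.(-\vect V - \vect U))^\t \vect\phi_N^0 = 0$, the hypothesis needed to apply~\eqref{eq:discrete_poincaré}. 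Because $\mat L \vect 1 = \vect 0$, one has $-\mat L \vect\phi_N^0 = \vect r$; testing against $\vect\phi_N^0$ and using the first ingredient above gives $\norm{\nabla_{\rm F} \vect\phi_N^0}_{\e^{-V-U}}^2 = \ip{\vect r, \vect\phi_N^0}$, whence, by~\eqref{eq:discrete_poincaré} and Cauchy--Schwarz, $\norm{\vect\phi_N^0}_{\e^{-V-U}} \leq R_{\rm disc}[U]^{-1} \norm{\vect r}_{\e^{-V-U}}$. Using the $I_N$ estimate and the uniform bounds above, $\norm{\vect r}_{\e^{-V-U}} \leq \norm{\vect b}_{\e^{-V-U}} + \abs{I_N}\, \norm{\exp.(\vect U)}_{\e^{-V-U}} \leq C_2 \norm{\vect b}_{\e^{-V-U}}$, while $\abs{\bar\phi}\, \norm{\vect 1}_{\e^{-V-U}} = \abs{c} / \norm{\vect 1}_{\e^{-V-U}} \leq C_3 \abs{c}$; the orthogonal decomposition then gives $\norm{\vect\phi_N}_{\e^{-V-U}} \leq \norm{\vect\phi_N^0}_{\e^{-V-U}} + \abs{\bar\phi}\, \norm{\vect 1}_{\e^{-V-U}} \leq C_2 R_{\rm disc}[U]^{-1} \norm{\vect b}_{\e^{-V-U}} + C_3 \abs{c}$. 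Combining with the bound on $\abs{I_N}$ yields the displayed estimate with $C$ independent of $N$.

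I do not anticipate a real obstacle. The two points needing care are that the splitting of $\vect\phi_N$ be taken with respect to the $\e^{-V-U}$-weighted inner product, so that $\vect\phi_N^0$ genuinely satisfies the orthogonality hypothesis of~\cref{lemma:discrete_poincare}, and that this lemma be used in the form stated there, i.e.\ with a Poincaré constant that does not degenerate as $N \to \infty$; everything else reduces to Cauchy--Schwarz and to uniform two-sided bounds on Riemann sums of continuous functions over $\torus^2$.
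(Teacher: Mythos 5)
Your argument is correct and follows essentially the same route as the paper's proof: you recover the scalar unknown by testing the first block row against $\vect 1$ (using that $\mat L$ is self-adjoint for $\ip{\placeholder,\placeholder}_{\e^{-V-U}}$ with $\mat L \vect 1 = \vect 0$), you split $\vect\phi_N$ into its weighted mean and a fluctuation controlled by the discrete Poincaré inequality of~\cref{lemma:discrete_poincare}, and you conclude with the uniform two-sided bounds on the Riemann sums of $\e^{-V-U}$, $\e^{U-V}$ and $\e^{-V}$. The only cosmetic difference is that you obtain the fluctuation bound by testing $-\mat L\vect\phi_N^0=\vect r$ against $\vect\phi_N^0$ rather than by first stating the bound $\norm{\mat L^{-1}}\leq R_{\rm disc}[U]^{-1}$ on $\range(\mat L)$, which is the same estimate.
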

\begin{proof}
    The strategy of proof is similar to that used in~\cite[Section 2.2]{MR3865558}.
    Since by the Fredholm alternative $\range(\mat L) = \kernel(\mat L^{\t})^{\perp}$,
    the range of $\mat L$ is given by  $\left\{\vect h \in \real^{N^2} : \vect h^\t \exp.(- \vect V - \vect U) = 0 \right\}$.
    It then follows by~\cref{lemma:discrete_poincare} that the following inequality holds for all $\vect g \in \range(\mat L)$:
    \[
        \norm{\mat L \vect g} \norm{\vect g} \geq
        - \ip{\mat L \vect g, \vect g}
        = \norm{\mat D_{\rm F}^x \vect g}^2 + \norm{\mat D_{\rm F}^y \vect g}^2
        \geq R_{\rm disc}[U] \norm{\vect g}^2,
    \]
    and so we deduce that that $\norm{\mat L^{-1}} \leq \frac{1}{R_{\rm disc}[U]}$ over $\range(\mat L)$.
    Denote by $(\vect \gamma, \sigma)$ the solution to
    \begin{equation}
        \label{eq:conditioning}
        \begin{pmatrix}
            -\mat L & \exp.(\vect U) \\
            \step^2\exp.(- \vect V - \vect U)^\t & 0
        \end{pmatrix}
        \begin{pmatrix}
            \vect \gamma \\
            \sigma
        \end{pmatrix}
        =
        \begin{pmatrix}
            \vect g \\
            s
        \end{pmatrix}.
    \end{equation}
    This solution  satisfies
    \begin{equation}
        \label{eq:first_equation_expanded}
        - \ip{\mat L \vect \gamma, \vect 1} + \sigma \ip{\exp.(\vect U), \vect 1} = \ip{\vect g, \vect 1},
    \end{equation}
    and since $\ip{\mat L \vect \gamma, \vect 1} = \ip{\vect \gamma, \mat L \vect 1} = 0$,
    this implies
    \[
        \lvert \sigma \rvert = \left\lvert \frac{\ip{\vect g, \vect 1}}{\ip{\exp.(\vect U), \vect 1}} \right\rvert
        \leq \frac{\norm{\vect g} \norm{\vect1}}{\ip{\exp.(\vect U), \vect 1}}.
    \]
    We then deduce that
    \[
        \vect \gamma - \overline {\vect{\gamma}} = -\mat L^{-1} \bigl( \vect g - \sigma \exp.(\vect U) \bigr),
        \qquad
        \overline {\vect \gamma} =  \frac{\ip{\vect \gamma,\vect 1}}{\ip{\vect 1,\vect 1}}  \vect 1,
    \]
    and then the last equation in~\eqref{eq:conditioning} gives~$\overline{\vect \gamma} = s \vect 1 / \norm{\vect 1}^2$.
    This leads to the bound
    \begin{align*}
        \norm{\vect \gamma}
        &\leq \norm{\vect \gamma - \overline{\vect \gamma}} + \norm{\overline{\vect \gamma}} \\
        &\leq \frac{1}{R_{\rm disc}[U]} \bigl(\norm{\vect g} + \abs{\sigma} \norm{\exp.(\vect U)} \bigr)
        + \abs{s} \frac{\norm{\vect 1}}{\norm{\vect 1}^2}
        \leq \frac{\norm{\vect g}}{R_{\rm disc}[U]} \left(1 + \frac{\norm{\exp.(\vect U)} \norm{\vect1}}{\ip{\exp.(\vect U), \vect 1}}  \right) + \abs{s}  \frac{\norm{\vect 1}}{\norm{\vect 1}^2}.
    \end{align*}
    In the limit $N \to \infty$, it holds that
    \[
        \norm{\vect 1} \to \sqrt{\int_{\torus^2} \e^{-V-U}},
        \qquad \norm{\exp.(\vect U)} \to \sqrt{\int_{\torus^2} \e^{-V+U}} ,
        \qquad \ip{\exp.(\vect U), \vect 1} \to \int_{\torus^2} \e^{-V},
   \]
   which enables to conclude the proof.
\end{proof}

We are now ready to prove the convergence of the solution of the discretized Poisson equation~\eqref{eq:discrete_poisson_equation} in the limit~$N \to \infty$.
\begin{proposition}
    Suppose that \cref{assumption:as1} is satisfied.
    Let $\phi$ denote the exact solution to~\eqref{eq:poisson} and let~$I = \mu(f)$.
    Let also $(\vect \phi_N, I_N)$ denote the solution to the discretized equation~\eqref{eq:discrete_poisson_equation}.
    Then it holds that
    \[
        \norm{\vect \phi_N - \Pi_N \phi}_{} \xrightarrow[N \to \infty]{} 0,
        \qquad I_N \xrightarrow[N \to \infty]{} I.
    \]
\end{proposition}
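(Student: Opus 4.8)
The plan is to run the classical ``consistency $+$ stability $\Rightarrow$ convergence'' argument, the stability half of which is already in hand. Indeed, \cref{proposition:discrete_poisson} says the matrix in~\eqref{eq:discrete_poisson_equation} is invertible, and \cref{lemma:stability_finite_difference_disretization} says its inverse is bounded uniformly in $N$ for the norm~\eqref{eq:norm_discrete}. So the only thing left to do is to check that the exact pair $(\Pi_N \phi, I)$ solves the discrete system~\eqref{eq:discrete_poisson_equation} up to a right-hand side perturbation that vanishes as $N \to \infty$. Throughout I would use that $\phi$ is smooth by \cref{lemma:asymptotic_variance} (elliptic regularity), that $V$, $U$ and $f$ are smooth by \cref{assumption:assumptions_V} and \cref{assumption:as1}, and that $\torus^2$ is compact, so that all the functions below and their derivatives of any fixed order are bounded.

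The first step is the consistency estimate for the operator $\mat L$ defined in~\eqref{eq:definition_of_matrix_L}. Plugging $\Pi_N \phi$ into $\mat L$ and Taylor-expanding around each grid point of~\eqref{eq:discretization_nodes}, one verifies that the node-centred discretisation $\mat D_{\rm B}^x \diag\bigl(\exp.(-\vect V - \vect U)\bigr) \mat D_{\rm F}^x$, and likewise its $y$-counterpart, is a consistent approximation of $\partial_x\bigl(\e^{-V-U}\partial_x\,\cdot\,\bigr)$ with pointwise truncation error of order $\step = 2\pi/N$, controlled by $\norm{\phi}_{C^3}$ and $\norm{V+U}_{C^2}$. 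Since $-\mathcal L_U \phi = (f-I)\e^U$, this yields
\[
    -\mat L \, \Pi_N \phi = \diag\bigl(\exp.(\vect U)\bigr)\vect f - I \exp.(\vect U) + \vect r_N,
    \qquad
    \norm{\vect r_N}_{\e^{-V-U}} \xrightarrow[N \to \infty]{} 0,
\]
where to bound $\norm{\vect r_N}_{\e^{-V-U}}$ one uses that the factor $\step^2$ in~\eqref{eq:weighted_inner_product} multiplied by the number $N^2$ of nodes equals $(2\pi)^2$, together with the uniform boundedness of $\e^{-V-U}$; this is exactly what converts the pointwise $\mathcal O(\step)$ bound into an $\mathcal O(\step)$ bound in the discrete norm. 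Similarly, the last row of~\eqref{eq:discrete_poisson_equation} evaluated at $\Pi_N \phi$ equals $\step^2 \exp.(-\vect V - \vect U)^\t \Pi_N \phi = \ip{\vect 1, \Pi_N \phi}_{\e^{-V-U}}$, a Riemann sum converging to $\int_{\torus^2} \phi \, \e^{-V-U} = \Z{U}\,\mu_U(\phi) = 0$ since $\phi \in L^2_0(\mu_U)$; call this scalar $s_N$, so $s_N \to 0$.

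Combining these two facts, $(\Pi_N \phi, I)$ solves~\eqref{eq:discrete_poisson_equation} with right-hand side perturbed by $(\vect r_N, s_N)$. Subtracting the identity~\eqref{eq:discrete_poisson_equation} satisfied by $(\vect \phi_N, I_N)$, the error $(\vect \phi_N - \Pi_N \phi,\, I_N - I)$ is the image of $(\vect r_N, s_N)$ under the inverse of the matrix in~\eqref{eq:discrete_poisson_equation}. By \cref{lemma:stability_finite_difference_disretization}, that inverse is bounded in the norm~\eqref{eq:norm_discrete} by a constant $C$ independent of $N$, whence
\[
    \norm{\vect \phi_N - \Pi_N \phi}_{\e^{-V-U}} + \abs{I_N - I}
    \leq C\bigl( \norm{\vect r_N}_{\e^{-V-U}} + \abs{s_N} \bigr) \xrightarrow[N \to \infty]{} 0,
\]
which is the claim, recalling the convention that $\norm{\,\cdot\,}$ without subscript stands for $\norm{\,\cdot\,}_{\e^{-V-U}}$. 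The only nonroutine ingredient is the consistency bound for the variable-coefficient, node-centred five-point operator $\mat L$: it is a standard but slightly tedious Taylor expansion, and some care is needed to keep track of the weight $\e^{-V-U}$ (bounded above and below on $\torus^2$) when passing between the discrete and weighted discrete norms.
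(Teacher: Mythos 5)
Your proposal is correct and follows essentially the same route as the paper, which likewise invokes the uniform bound on $\widetilde{\mat L}_N^{-1}$ from \cref{lemma:stability_finite_difference_disretization} together with consistency of the discretization (the classical stability-plus-consistency argument), the paper merely leaving the Taylor-expansion verification of the $\bigo(\step)$ truncation error implicit. Your additional details—first-order consistency of the node-centred operator, the Riemann-sum argument for the mean-zero row, and the role of the $\step^2$ factor in passing to the discrete norm—are all accurate.
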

\begin{proof}
    The proof is an application of the standard Lax equivalence theorem.
    We denote the matrix of the linear system~\eqref{eq:discrete_poisson_equation} by $\widetilde {\mat L}_N$ to emphasize its dependence on $N$.
    Convergence follows from the usual argument:
    \[
        \norm*{
            \begin{pmatrix}
                \Pi_{N} \phi - \vect \phi_N \\
                I - I_N
            \end{pmatrix}
        }
        \leq
        C
        \norm*{%
            {\widetilde {\mat L}_N }
            \begin{pmatrix}
                \Pi_{N} \phi - \vect \phi_N \\
                I - I_N
            \end{pmatrix}
        }%
        =
        \norm*{%
            {\widetilde {\mat L}_N }
            \begin{pmatrix}
                \Pi_{N} \phi \\
                I
            \end{pmatrix}
            -
            \begin{pmatrix}
                \Pi_N\left(f \e^U\right) \\
                0
            \end{pmatrix}
        }%
        \xrightarrow[N \to \infty]{} 0,
    \]
    where the norm in this equation is that defined in~\eqref{eq:norm_discrete}.
    The first inequality follows from the stability statement of~\cref{lemma:stability_finite_difference_disretization},
    while the limit follows from the consistency of the discretization,
    which is simple to check given that $\phi$ is a smooth function under~\cref{assumption:as1},
    and relying on the presence of the factor~$\step^2$ in the definition~\eqref{eq:weighted_inner_product}.
\end{proof}

The main interest of the discretization~\eqref{eq:discrete_poisson_equation} lies in the following statement,
which may be viewed as a result on the commutation of the discretization and derivative operators.
In order to be more precise,
we denote by $(\vect \phi_N, I_N)$ the solution to~\eqref{eq:discrete_poisson_equation} and let
\begin{equation}
    \label{eq:discretized_asym_var}
    \sigma^2_{f,N}[\vect U] = \frac{2\ZN{\vect U}}{Z_N^2}  \norm{\nabla_{\rm F} \vect \phi_N}_{}^2,
    \qquad
    \text{ where }
    \left\{
    \begin{aligned}
        Z_N &:= \step^2 \vect 1^\t \exp.(-\vect V), \qquad \\
        \ZN{\vect U} &:= \step^2 \vect 1^\t \exp.(-\vect V - \vect U).
    \end{aligned}
    \right.
\end{equation}
The following statement shows that the functional derivative of $\widehat \sigma^2_f[\vect U]$
has a structure similar to that of~$\sigma^2_f[U]$ given in~\eqref{eq:asym_var};
it may be viewed as a discretization thereof.

\begin{proposition}
    [Functional derivative of $\sigma^2_{f,N}$]
    \label{proposition:functional_derivative_asym_var_discrete}
    Suppose that~$V, U \colon \torus^2 \to \real$ are uniformly bounded.
    The functional derivative with respect to~$\vect U$ of~$\sigma^2_{f,N}$ is given by
    \begin{equation}
        \label{eq:funcder_discrete}
        \frac{1}{2} \d \sigma^2_{f,N}[\vect U] \cdot \vect {\delta U}
            = \frac{\ZN{\vect U}}{Z_N^2}
            \ip*{\vect {\delta U}, \abs{\nabla_{\rm F} \vect \phi_N}^2 - \overline {\abs{\nabla_{\rm F} \vect \phi_N}^2}}_{},
            \qquad
            \overline {\abs{\nabla_{\rm F} \vect \phi_N}^2} := \frac{\norm{\nabla_{\rm F} \vect \phi_N}_{}^2}{\ZN{\vect U}}.
    \end{equation}
\end{proposition}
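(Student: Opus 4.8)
The plan is to differentiate the discrete functional $\sigma^2_{f,N}[\vect U]$ directly by the chain rule, exactly mirroring the continuous proof of \cref{proposition:functional_derivative_asym_var} but at the matrix level. Write $\sigma^2_{f,N}[\vect U] = \frac{2\ZN{\vect U}}{Z_N^2} \norm{\nabla_{\rm F}\vect\phi_N}^2$ and note that $Z_N$ does not depend on $\vect U$, so there are two sources of $\vect U$-dependence: the prefactor $\ZN{\vect U}$ and the vector $\vect\phi_N$ (and the implicitly coupled scalar $I_N$), which solve the linear system~\eqref{eq:discrete_poisson_equation}. First I would record the elementary identity $\d\ZN{\vect U}\cdot\vect{\delta U} = -\step^2 \exp.(-\vect V-\vect U)^\t \diag(\vect{\delta U})\vect 1 = -\ip{\vect{\delta U},\vect 1}$, the discrete analogue of~\eqref{eq:func_der_ztilde}. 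Next, observe that $\norm{\nabla_{\rm F}\vect\phi_N}^2 = -\ip{\mat L\vect\phi_N,\vect\phi_N} = \ip{\diag(\exp.(\vect U))(\vect f - I_N\vect 1),\vect\phi_N}$ using the first block-row of~\eqref{eq:discrete_poisson_equation} and the self-adjointness relation~\eqref{eq:discrete_bilinear}; since $\ip{\vect 1,\vect\phi_N}=0$ by the last row of~\eqref{eq:discrete_poisson_equation}, this simplifies to $\step^2 \vect\phi_N^\t\diag(\exp.(-\vect V))\vect f$, a form in which the only $\vect U$-dependence left is through $\vect\phi_N$.

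The key step is then to differentiate $\vect\phi_N$ (equivalently, the pair $(\vect\phi_N,I_N) = -\widetilde{\mat L}_N^{-1}(\diag(\exp.(\vect U))\vect f,0)^\t$) with respect to $\vect U$ and contract against the appropriate test vector. Writing $\widetilde{\mat L}_N(\vect U)$ for the block matrix and differentiating $\widetilde{\mat L}_N (\vect\phi_N,I_N)^\t = (\diag(\exp.(\vect U))\vect f,0)^\t$ gives a perturbed linear system for $(\vect{\delta\phi},\delta I)$; one then feeds this into $\d\bigl(\step^2\vect\phi_N^\t\diag(\exp.(-\vect V))\vect f\bigr) = \step^2\vect{\delta\phi}^\t\diag(\exp.(-\vect V))\vect f$. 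The trick, just as in the continuous proof, is to reroute this using the adjoint: because $\diag(\exp.(-\vect V))\vect f = \diag(\exp.(-\vect V-\vect U))\exp.(\vect U)\vect f = -\step^{-2}$ times the image under the transpose block operator of a vector involving $\vect\phi_N$ itself, one can transfer the derivative from $\vect{\delta\phi}$ back onto $\mat L$. Concretely, the residual of the differentiated system is (in the $\vect\phi$-block) $\d\mat L\cdot\vect{\delta U}\,\vect\phi_N$ plus a term from $\d(\diag(\exp.(\vect U))\vect f)$, and $\d\mat L\cdot\vect{\delta U}$ has the explicit form obtained by differentiating~\eqref{eq:definition_of_matrix_L}: each $\diag(\exp.(\pm(\vect V+\vect U)))$ factor contributes $\pm\diag(\vect{\delta U})$ times itself. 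After the adjoint manipulation and a discrete "integration by parts" ($\mat D_{\rm B}^\t = -\mat D_{\rm F}$), the $I_N$-coupling and the $\d(\diag(\exp.(\vect U))\vect f)$ term cancel against each other, and the surviving term is $\step^2\sum_\ell (\vect{\delta U})_\ell\,\abs{(\nabla_{\rm F}\vect\phi_N)_\ell}^2\,\exp.(-\vect V-\vect U)_\ell = \ip{\vect{\delta U},\abs{\nabla_{\rm F}\vect\phi_N}^2}$, the discrete counterpart of the computation $\int \delta U\,\abs{\nabla\phi_U}^2\,\e^{-V-U}$ at the end of the proof of \cref{proposition:functional_derivative_asym_var}.

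Assembling the two contributions, $\frac12\d\sigma^2_{f,N}[\vect U]\cdot\vect{\delta U} = \frac{\d\ZN{\vect U}\cdot\vect{\delta U}}{Z_N^2}\norm{\nabla_{\rm F}\vect\phi_N}^2 + \frac{\ZN{\vect U}}{Z_N^2}\ip{\vect{\delta U},\abs{\nabla_{\rm F}\vect\phi_N}^2} = \frac{\ZN{\vect U}}{Z_N^2}\ip{\vect{\delta U},\abs{\nabla_{\rm F}\vect\phi_N}^2 - \ip{\vect{\delta U},\vect 1}\ZN{\vect U}^{-1}\norm{\nabla_{\rm F}\vect\phi_N}^2\,\big/\!\ip{\vect{\delta U},\vect 1}\cdots}$, which after writing $\ip{\vect{\delta U},\vect 1}\,\ZN{\vect U}^{-1}\norm{\nabla_{\rm F}\vect\phi_N}^2 = \ip{\vect{\delta U},\overline{\abs{\nabla_{\rm F}\vect\phi_N}^2}\vect 1}$ with $\overline{\abs{\nabla_{\rm F}\vect\phi_N}^2} = \norm{\nabla_{\rm F}\vect\phi_N}^2/\ZN{\vect U}$ gives precisely~\eqref{eq:funcder_discrete}. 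The one point demanding care — and the likely main obstacle — is the adjoint/integration-by-parts bookkeeping: in the continuous setting the integration by parts is clean because $\delta U$ is compactly supported, whereas here one must track the coupled scalar block $I_N$ and verify that its differential together with the differential of $\diag(\exp.(\vect U))\vect f$ cancels exactly, using $\ip{\mat L\vect g,\vect 1} = \ip{\vect g,\mat L\vect 1} = 0$ and $\ip{\vect 1,\vect\phi_N}=0$. Once that cancellation is checked, the rest is routine matrix algebra; invertibility of $\widetilde{\mat L}_N$ (hence differentiability of $\vect\phi_N$ in $\vect U$) is guaranteed by \cref{proposition:discrete_poisson}. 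I would relegate the explicit expansion of $\d\mat L\cdot\vect{\delta U}$ to a one-line display and otherwise keep the argument at the level described here.
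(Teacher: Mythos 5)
Your overall route is the same as the paper's: rewrite $\norm{\nabla_{\rm F}\vect\phi_N}^2$ as $-\ip{\mat L\vect\phi_N,\vect\phi_N}$ via the first block-row of~\eqref{eq:discrete_poisson_equation} and the self-adjointness relation~\eqref{eq:discrete_bilinear}, differentiate the linear system in $\vect U$, transfer the derivative onto $\mat L$ using $\mat D_{\rm B}^\t = -\mat D_{\rm F}$ so that the surviving term is $\ip{\vect{\delta U},\abs{\nabla_{\rm F}\vect\phi_N}^2}$, and combine with $\d\ZN{\vect U}\cdot\vect{\delta U} = -\ip{\vect{\delta U},\vect 1}$. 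This is exactly the computation in the paper (carried out there with difference quotients $\varepsilon^{-1}(\vect\phi_N^{\varepsilon}-\vect\phi_N)$ rather than formal differentials).

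There is, however, one concrete error in your reduction. You claim that $\ip{\diag(\exp.(\vect U))\,I_N\vect 1,\vect\phi_N}$ vanishes "since $\ip{\vect 1,\vect\phi_N}=0$ by the last row of~\eqref{eq:discrete_poisson_equation}", and thereby simplify the quantity to be differentiated to $\step^2\vect\phi_N^\t\diag(\exp.(-\vect V))\vect f$. The last block-row gives $\vect\phi_N^\t\exp.(-\vect V-\vect U)=0$, i.e.\ orthogonality with weight $\e^{-V-U}$, whereas the term you are discarding is $I_N\,\step^2\,\vect\phi_N^\t\exp.(-\vect V)$, an inner product with weight $\e^{-V}$; this does not vanish in general, and its $\vect U$-derivative $I_N\,\step^2\,\vect{\delta\phi}^\t\exp.(-\vect V)$ contributes a nonzero extra term unless $I_N=0$. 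The paper avoids this by keeping the combination $\vect f - I_N\vect 1$ throughout and by first checking — from the analogue of~\eqref{eq:first_equation_expanded} together with $\ip{\mat L\vect\phi_N,\vect 1}=\ip{\vect\phi_N,\mat L\vect 1}=0$ — that $I_N = \step^2\vect f^\t\exp.(-\vect V)/Z_N$ is in fact \emph{independent} of $\vect U$, so that no $\delta I_N$ term arises and the derivative of the right-hand side $\diag(\exp.(\vect U))(\vect f-I_N\vect 1)$ is simply $\diag(\vect{\delta U})$ times itself. This constancy of $I_N$ is the precise content of the "cancellation between the $I_N$-coupling and $\d(\diag(\exp.(\vect U))\vect f)$" that you invoke but do not justify; once it is established, the rest of your outline goes through and reproduces~\eqref{eq:funcder_discrete}.
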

Notice that~\eqref{eq:funcder_discrete} is very similar to the formula~\eqref{eq:funcder} for the functional derivative of~$\sigma^2_f[U]$.
\begin{proof}
    The proof mirrors that of~\cref{proposition:functional_derivative_asym_var}.
    In view of~\eqref{eq:discrete_bilinear}, we first rewrite
    \begin{align*}
        \sigma^2_{f,N}[\vect U]
        &= \frac{2\ZN{\vect U}}{Z_N^2}  \norm{\nabla_{\rm F} \vect \phi_N}_{}^2
        = - \frac{2\ZN{\vect U}}{Z_N^2} \ip{\vect \phi_N, \mat L \vect \phi_N}_{} \\
        &= \frac{2\ZN{\vect U}}{Z_N^2} \ip{\vect \phi_N, \diag\bigl(\exp.(\vect U)\bigr) (\vect f - I_N \vect 1)}_{}
        = \frac{2\ZN{\vect U}}{Z_N^2} \ip{\vect \phi_N, \vect f - I_N\vect 1}_{\e^{-V}}.
    \end{align*}
    Let $(\vect \phi_N^{\varepsilon}, I_N^{\varepsilon})$ denote the solution to~\eqref{eq:discrete_poisson_equation} with~$\vect U + \varepsilon \vect {\delta U}$ in place of~$\vect U$ everywhere
    and~$\mat L^\varepsilon$ the corresponding matrix~\eqref{eq:definition_of_matrix_L}.
    It is simple to check, using a reasoning similar to~\eqref{eq:first_equation_expanded} as well as the equation~$\ip{L^\varepsilon\vect \phi_N^{\varepsilon},\vect 1} = 0$,
    that the scalar term $I^{\varepsilon}_N = I_N$ is fact independent of the potential $\vect U$.
    Therefore,
    we obtain that
    \[
        -\mat L^{\varepsilon} \vect \phi_N^{\varepsilon}
        = \diag\bigl(\exp.(\vect U + \varepsilon \vect {\delta U})\bigr) (\vect f - I_N \vect 1).
    \]
    By definition of the functional derivative,
    we then have
    \begin{align}
        \label{eq:def_func_der_discrete}
        \frac{1}{2} \d \sigma^2_{f,N}[\vect U] \cdot \vect{\delta U}
        &= \frac{\d \ZN{\vect U} \cdot \vect{\delta U}}{Z_N^2} \norm{\nabla_{\rm F} \vect \phi_N}_{}^2
        + \lim_{\varepsilon\rightarrow 0} \frac{ \ZN{\vect U}}{\varepsilon Z_N^2} \ip{\vect \phi^{\varepsilon}_N - \vect \phi_N, \vect f - I_N \vect 1}_{\e^{-V}}.
    \end{align}
    The functional derivative in the first term is given by
    \begin{equation}
        \label{eq:func_der_ztilde_discrete}
        \d\ZN{\vect U} \cdot {\vect \delta U} = - \ip{\vect {\delta U}, \vect 1}_{}.
    \end{equation}
    For the second term in~\eqref{eq:def_func_der_discrete},
    we obtain
    \begin{align*}
        \ip{\vect \phi^{\varepsilon}_N - \vect \phi_N, \vect f - I_N \vect 1}_{\e^{-V}}
        &= \ip*{\vect \phi^{\varepsilon}_N - \vect \phi_N, \diag\bigl(\exp.(\vect U) \bigr)(\vect f - I_N\vect 1)}_{\e^{-V-U}}  \\
        &= - \ip{\vect \phi^{\varepsilon}_N - \vect \phi_N, \mat L \vect \phi_N}_{\e^{-V-U}}\\
        &= - \ip{\mat L(\vect \phi^{\varepsilon}_N - \vect \phi_N), \vect \phi_N}_{\e^{-V-U}} \\
        &= - \ip{\mat L^{\varepsilon} \vect\phi^{\varepsilon}_N, \vect \phi_N}_{\e^{-V-U -\varepsilon \delta U}}
        + \ip{\mat L \vect \phi_N, \vect \phi_N}_{\e^{-V-U}} \\
        & \qquad - \ip{\mat L \vect \phi^{\varepsilon}_N, \vect \phi_N}_{\e^{-V-U}}
        + \ip{\mat L^{\varepsilon} \vect \phi^{\varepsilon}_N, \vect \phi_N}_{\e^{-V-U -\varepsilon \delta U}}.
    \end{align*}
    The first two terms in the last expression cancel out,
    and after substituting the expressions of~$\mat L$ and~$\mat L^{\varepsilon}$
    given in~\eqref{eq:definition_of_matrix_L} in the other two terms,
    we obtain
    \begin{align*}
        \ip{\vect \phi^{\varepsilon}_N - \vect \phi_N, \vect f - I_N\vect 1}_{\e^{-V}}
        &=
        - \ip{\mat D_{\rm B}^x (\mat M - \mat M^{\varepsilon}) \mat D_{\rm F}^x \vect \phi^{\varepsilon}_N, \vect \phi_N}_{1}
        - \ip{\mat D_{\rm B}^y (\mat M - \mat M^{\varepsilon}) \mat D_{\rm F}^y \vect \phi^{\varepsilon}_N, \vect \phi_N}_{1},
    \end{align*}
    where
    \[
        \mat M - \mat M^{\varepsilon} :=
        \diag\bigl(\exp.(-\vect V - \vect U)\bigr)
        - \diag\bigl(\exp.(-\vect V - \vect U - \varepsilon \vect {\delta U})\bigr).
    \]
    Noting that
    \[
        \lim_{\varepsilon\rightarrow 0} \frac{\mat M - \mat M^{\varepsilon}}{\varepsilon} = \diag\bigl(\exp.(-\vect V - \vect U)\bigr) \diag(\vect {\delta U}),
    \]
    we deduce that
    \[
    \lim_{\varepsilon\rightarrow 0} \frac{\ZN{\vect U}}{\varepsilon Z_N^2} \ip{\vect \phi^{\varepsilon}_N - \vect \phi_N, \vect f - I_N \vect 1}_{\e^{-V}}
    = \frac{\ZN{\vect U}}{Z_N^2} \ip*{\vect {\delta U}, \abs{\nabla_{\rm F} \vect \phi_N}^2}_{\e^{-V-U}}.
    \]
    Combining this equation with~\eqref{eq:def_func_der_discrete} and~\eqref{eq:func_der_ztilde_discrete},
    we deduce~\eqref{eq:funcder_discrete}.
\end{proof}

\paragraph{Acknowledgements.}
We are grateful to Andrew Duncan and Grigorios Pavliotis for useful discussions and for sharing with us their preliminary calculations on this problem.
The work of MC was supported by the Agence Nationale de la Recherche under grant ANR-20-CE40-0022 (SWIDIMS).
The work of TL, GS and UV was partially funded by the European Research Council (ERC) under the European Union's Horizon 2020 research and innovation programme (grant agreement No 810367),
and by the Agence Nationale de la Recherche (ANR) under grants ANR-21-CE40-0006 (SINEQ) and ANR-19-CE40-0010 (QuAMProcs).

\printbibliography

\end{document}

\begin{proposition}\label{hess}
    Let $\{f_i\}_{1\leq i\leq N}$ be a finite set of observables. Suppose $U\in C^{1,2}(\real_{\geq 0}\times \torus)$ 
    is such that $f_i$, $U(t,\cdot)
    $ satisfy the assumptions to~\cref{poshess} and in addition,
    \begin{equation*}
        \partial_t U = -\frac{\Z{U}}{Z^2}\bigg(\sum_i \abs*{\nabla \phi_i}^2 -  \int \abs*{\nabla \phi_i}^2\d\mu_{U} \bigg) \e^{-U - W},
    \end{equation*}
    for all $t\geq 0$ and where $W$ is any smooth function independent of time.
    Here $\phi_i$ at each time $t$ denotes the solution to \eqref{eq:poisson} for $U$ at time $t$ and each respective $f_i$ and $\Z{U}$, $Z$, $\mu_{U}$ are the corresponding time-dependent normalizing constants and measure.
    Then $U$ satisfies
    \begin{equation}\label{limtoinf}
        \lim_{t\rightarrow\infty}\sum_i\sigma_i^2[U] = \inf \sum_i\sigma_i^2
    \end{equation}
    where the infinimum is over potential functions belonging in $C^2(\torus)$ and satsifying \cref{assumption:as1}.
\end{proposition}
\begin{proof}
    By the chain rule for functionals, the asymptotic variance is the solution to
    \begin{equation*}
        \partial_t \sum_i\sigma_i^2[U] = -\frac{\Z{U}^4}{Z^4}\int \bigg(\sum_i\abs*{\nabla \phi_i}^2 - \int \abs*{\nabla \phi_i}^2\d\mu_{U} \bigg)^2 \e^{-U-W} \d \mu_{U},
    \end{equation*}
    which gives that the limit on the left-hand side of \eqref{limtoinf} exists. Note the right-hand side of \eqref{limtoinf} must be a lower bound for the left-hand side of the same equation.

    Suppose for contradiction that there exists $\delta>0$ such that
    \begin{equation*}
        \forall t \geq 0, \qquad
        \sum_i\sigma_i^2[U] \geq \inf \sum_i\sigma_i^2 + \delta
    \end{equation*}
    By definition of infimum, there exists $U^*\in C^2(\torus)$ such that $\sum_i\sigma_i^2[U^*] \leq \inf\sum_i\sigma_i^2+\frac{\delta}{2}$.
    Denoting $\sigma^2[U] = \sum_i\sigma_i^2[U]$ and using~\cref{poshess}, we have
    \begin{align*}
        \partial_t\int (U - U^*)^2 \e^{W-V}
        &= 2\int \partial_t U (U-U^*) \e^{W-V}\\
        &= -2\frac{\Z{U}^2}{Z^2}\int \bigg(\sum_i \abs*{\nabla \phi_i}^2 -  \int \abs*{\nabla \phi_i}^2\d\mu_{U} \bigg)(U-U^*)\d\mu_{U}\\
        &= -2 \d\sigma^2[U].(U-U^*)\\
        &= -2\bigg(\sigma^2[U] - \sigma^2[U^*] - \int_0^1 (\d\sigma^2[U^* + \gamma(U - U^*)].(U-U^*) - \d\sigma^2[U].(U-U^*))\d\gamma \bigg)\\
        &= -2\bigg(\sigma^2[U] - \sigma^2[U^*] + \int_0^1\int_\gamma^1 \d(\d\sigma^2[U^* + \gamma'(U - U^*)].(U-U^*)).(U-U^*)\d\gamma'\d\gamma \bigg)\\
        &\leq -2(\sigma^2[U] - \sigma^2[U^*]) \leq -\delta,
    \end{align*}
    which for sufficiently large $t$ contradicts $\int(U-U^*)^2 \e^{W-V} \geq 0$.

    \textcolor{blue}{
        \begin{itemize}
            \item
                We could mention that we used Taylor's formula with the function $\gamma \mapsto \sigma^2 [U + \gamma (U^* - U)]$,
                \[
                    \sigma^2[U^*] - \sigma^2[U]
                    = \d \sigma^2[U].(U^* - U) + \frac{1}{2} \d \bigl(\d \sigma^2[U + \xi(U^* - U)].(U-U^*)\bigr).(U-U^*)
                \]
                for some $\xi \in (0, 1)$.
        \end{itemize}
        \textcolor{black}{M: I just use fundamental theorem of calculus, no? The third line to the fourth line in the above is just adding zero.}
        \textcolor{blue}{Yes. It's just that these two lines are reminiscent of the proof of the integral form of the remainder in Taylor's formula,
        and I was thinking that we could use state Taylor's formula (with either integral or mean value remainder) in a preliminary step to break the equation in two. But either way is fine, it's just detail.}
    }
\end{proof}

\begin{lemma}\label{lemma:incl}
    Let~$\phi,f\in C^{\infty}\cap L_0^2(\mu_U)$ and~$\domain = \real$.
    If~$\e^{V+U}\nabla\cdot(\e^{-V-U}\nabla\phi) = f$,
    then~$\phi\in \mathcal{D}(\mathcal{L}_U)$.
\end{lemma}
\begin{proof}
    For~$x\in\domain^d$, let~$X_t^x$ denote a strong solution to~\eqref{eq:mix} with $\mathbb P(X_0 = x) = 1$.
    By It\^{o}'s rule and Fubini's theorem,
    it holds that
    \begin{align*}
        \left\| \frac{\expect \phi(X_t^\cdot) - \phi(x)}{t} - f \right\|_{L^2(\mu_U)}^2
        &= \left\| \expect \frac{1}{t}\int_0^t \Bigl(\e^{V+U}\nabla\cdot\bigl(\e^{-V-U}\nabla \phi\bigr)\Bigr) (X_s^\cdot) \, \d s - f \right\|_{L^2(\mu_U)}^2 \\
        &\leq  \frac{1}{t}\int_0^t \left\|\expect  f(X_s^{\cdot}) - f\right\|_{L^2(\mu_U)}^2 \d s,
    \end{align*}
    The right-hand side converges to zero as~$t \to 0$ by dominated convergence and strong continuity of the semigroup,
    the latter being a consequence of Proposition~9.1.7 in~\cite{MR3616034} (see also Theorem~9.1.26 and the proof of Theorem~1.6.2 in the same reference).
\end{proof}